\definecolor{penndarkestblue}{cmyk}{1,0.74,0,0.77}
\definecolor{penndarkerblue}{cmyk}{1,0.74,0,0.70}
\definecolor{pennblue}{cmyk}{0.99,0.66,0,0.57} 
\definecolor{pennlighterblue}{cmyk}{0.98,0.44,0,0.35}
\definecolor{pennlightestblue}{cmyk}{0.38,0.17,0,0.17} 
\definecolor{penndarkestred}{cmyk}{0,1,0.89,0.66}
\definecolor{penndarkerred}{cmyk}{0,1,0.88,0.55}
\definecolor{pennred}{cmyk}{0,1,0.83,0.42} 
\definecolor{pennlighterred}{cmyk}{0,1,0.6,0.24}
\definecolor{pennlightestred}{cmyk}{0,0.43,0.26,0.12} 
\definecolor{penndarkestgreen}{cmyk}{1,0,1,0.68}
\definecolor{penndarkergreen}{cmyk}{1,0,1,0.57}
\definecolor{penngreen}{cmyk}{1,0,1,0.44} 
\definecolor{pennlightergreen}{cmyk}{1,0,1,0.25}
\definecolor{pennlightestgreen}{cmyk}{0.43,0,0.43,0.13}
\definecolor{penndarkestorange}{cmyk}{0,0.65,1,0.49}
\definecolor{penndarkerorange}{cmyk}{0,0.65,1,0.33}
\definecolor{pennorange}{cmyk}{0,0.54,1,0.24} 
\definecolor{pennlighterorange}{cmyk}{0,0.32,1,0.13}
\definecolor{pennlightestorange}{cmyk}{0,0.15,0.46,0.06}
\definecolor{penndarkestpurple}{cmyk}{0,1,0.11,0.86}
\definecolor{penndarkerpurple}{cmyk}{0,1,0.13,0.82}
\definecolor{pennpurple}{cmyk}{0,1,0.11,0.71} 
\definecolor{pennlighterpurple}{cmyk}{0,1,0.05,0.46}
\definecolor{pennlightestpurple}{cmyk}{0,0.35,0.02,0.23}
\definecolor{pennyellow}{cmyk}{0,0.20,1,0.05} 
\definecolor{pennlightgray1}{cmyk}{0,0,0,0.05}
\definecolor{pennlightgray3}{cmyk}{0.01,0.01,0,0.18}
\definecolor{pennmediumgray1}{cmyk}{0.04,0.03,0,0.31}
\definecolor{pennmediumgray4}{cmyk}{0.08,0.06,0,0.54}
\definecolor{penndarkgray2}{cmyk}{0.09,0.07,0,0.71}
\definecolor{penndarkgray4}{cmyk}{0.1,0.1,0,0.92}
\def\SO3{\mathrm{SO(3)}}
\newtheorem{assumption}{\hspace{0pt}\bf Assumption \hspace{-0.15cm}}
\newtheorem{lemma}{\hspace{0pt}\bf Lemma}
\newtheorem{proposition}{\hspace{0pt}\bf Proposition}
\newtheorem{theorem}{\hspace{0pt}\bf Theorem}
\newtheorem{corollary}{\hspace{0pt}\bf Corollary}
\newtheorem{remark}{\hspace{0pt}\bf Remark}
\newtheorem{definition}{\hspace{0pt}\bf Definition}
\begin{document}

\title{Learning Stochastic Graph Neural Networks \\ with Constrained Variance}

\author{\IEEEauthorblockN{Zhan Gao$^{ \dagger}$ and Elvin Isufi$^{\ddagger }$}\\
\thanks{Preliminary results was presented in ICASSP 2021 \cite{gao2021variance}. $^{\dagger}$Department of Computer Science Technology, University of Cambridge, Cambridge, UK (Email: zg292@cam.ac.uk). $^{\ddagger }$Department of Intelligent Systems, Delft University of Technology, Delft, The Netherlands (Email: e.isufi-1@tudelft.nl). The work of E. Isufi is supported by the TU Delft AI Labs programme.}}

% Headers:
\markboth{IEEE TRANSACTIONS ON SIGNAL PROCESSING (SUBMITTED)}%
{Learning Stochastic Graph Neural Networks with Constrained Variance}

\maketitle

\begin{abstract}
Stochastic graph neural networks (SGNNs) are information processing architectures that learn representations from data over random graphs. SGNNs are trained with respect to the expected performance, which comes with no guarantee about deviations of particular output realizations around the optimal expectation. To overcome this issue, we propose a variance-constrained optimization problem for SGNNs, balancing the expected performance and the stochastic deviation. An alternating primal-dual learning procedure is undertaken that solves the problem by updating the SGNN parameters with gradient descent and the dual variable with gradient ascent. To characterize the explicit effect of the variance-constrained learning, we analyze theoretically the variance of the SGNN output and identify a trade-off between the stochastic robustness and the discrimination power. We further analyze the duality gap of the variance-constrained optimization problem and the converging behavior of the primal-dual learning procedure. The former indicates the optimality loss induced by the dual transformation and the latter characterizes the limiting error of the iterative algorithm, both of which guarantee the performance of the variance-constrained learning. Through numerical simulations, we corroborate our theoretical findings and observe a strong expected performance with a controllable variance.
\end{abstract}

\begin{IEEEkeywords}
Stochastic graph neural networks, variance constraint, primal-dual learning, duality gap, convergence
\end{IEEEkeywords}

\IEEEpeerreviewmaketitle

\section{Introduction} \label{sec:intro}

%!TEX root = mainOp.tex
%%%%%%%%%%%%%%%%%%%%%%%%%%%%%%%
%%% SECTION : Introduction  %%%
%%%%%%%%%%%%%%%%%%%%%%%%%%%%%%%

Networked data exhibits an irregular structure inherent in its underlying topology and can be represented as signals residing on the nodes of a graph \cite{ortega2018graph}. Graph neural networks (GNNs) exploit this structural information to model task-relevant representations from graph signals \cite{Scarselli2009, Defferrard2016, gama2020graphs, Wu2019}, which have found applications in recommender systems \cite{Ying2018, FanAGraph}, multi-agent coordination \cite{tolstaya2020learning, gao2021wide}, wireless communications \cite{eisen2020optimal, gao2020resource}, etc. The success of GNNs can be attributed to their ability of leveraging the coupling between the signal and the graph, but the latter may be prone to perturbations such as adversarial attacks, link losses in distributed communications, or topological estimation errors. In these settings, the graph encountered during testing differs from the one used during training; hence, questioning the stability to such perturbations.

The stability of GNNs to graph perturbations has been investigated in \cite{Gama20-Stability, kenlay2021interpretable, kenlay2021stability, levie2021transferability, parada2021algebraic}. The work in \cite{Gama20-Stability} characterized the stability of GNNs to absolute and relative perturbations, which shows GNNs can be both stable to small perturbations and discriminative at high graph frequencies. Authors in \cite{kenlay2021interpretable, kenlay2021stability} analyzed the stability of graph filters --the linear inner working mechanism of GNNs that captures the graph-data coupling-- and GNNs under structural perturbations and provided interpretable stability bounds. The work in \cite{levie2021transferability} established GNNs can extract similar representations on graphs that describe the same phenomenon, while authors in \cite{parada2021algebraic} extended the stability results to the algebraic neural network where GNNs can be seen as a particular case.

The above works discuss the GNN stability w.r.t. small deterministic perturbations. However, the graph can often change randomly, resulting in stochastic perturbations that cannot be addressed with the above analysis.  Stochastic perturbations appear when GNNs are implemented distributively on physical networks \cite{Isufi17, Zou2013, Shuman2018}, where communication links fall with a certain probability due to channel fading effects, leading to random communication graphs \cite{kar2008sensor, antonelli2014decentralized, deng2016latent}. Other cases, in which GNNs operate on stochastic graphs, involve recommender systems, where the graph stochasticity is introduced to improve the recommendation diversity \cite{monti2017geometric, berg2017graph, isufi2021accuracy}. The impact of stochastic perturbations on graph filters %--the linear inner working mechanism of GNNs that captures the graph topology-- 
has been analyzed in \cite{Isufi17}, while \cite{saad2020quantization} extended the analysis to scenarios with both graph randomness and quantization effects. The work in \cite{nguyen2021stability} studied the stability of low pass graph filters to edge rewiring on the stochastic block model. Authors in \cite{gao2021stability} characterized the stability of GNNs to stochastic perturbations and identified the role played by the filter, nonlinearity, and architecture. %hyper-parameters.

To alleviate the performance degradation induced by stochastic perturbations, the work in \cite{gao2021stochastic} proposed stochastic graph neural networks (SGNNs) that account for the graph stochasticity during training. Learning with uncertainty makes the trained model robust to perturbations encountered during testing, and thus endows the SGNN with robust transference properties. The graph stochasticity has also been considered during training as a regularization technique to prevent over-smoothing \cite{feng2020graph} or as a data augmentation technique to avoid over-fitting \cite{rong2019dropedge, gao2021training}.

While improving the stability to perturbations, training an SGNN implies optimizing the expected performance w.r.t. the random topology in an empirical risk minimization framework \cite{gao2021stochastic}. However, such a strategy does not provide any guarantee about the deviation of a single SGNN realization around the optimal expectation; hence, an undesirable performance may appear in individual realizations, even when the expected performance is satisfactory. To control such a deviation, we propose a variance-constrained learning strategy for SGNNs that optimizes the expected performance while constraining the output variance. %to a user-defined level. 
The proposed strategy adheres to solving a stochastic optimization problem subject to a variance constraint. This is a challenging problem because of the constraint, the stochastic nature of the topology, and the non-convexity of the SGNN. Following recent advances in constrained learning \cite{chamon2020functional}, we adopt a primal-dual learning procedure to solve the problem. %in the dual domain. 
To study the effect of such strategy on the SGNN learning capacity, we characterize its output variance theoretically to identify the implicit trade-off between the improved deviation robustness and the degraded discrimination power. Our detailed contribution is threefold:

\smallskip
\begin{enumerate}[(i)]
	
	\item \emph{Variance-constrained learning (Section \ref{sec_problem})}: We formulate a constrained stochastic optimization problem that balances the expected performance with the stochastic deviation. We propose a primal-dual learning procedure to solve the problem, which updates alternatively the primal SGNN parameters with gradient descent and the dual variable with gradient ascent to search for a saddle point. We show this strategy acts as a self-learning variance regularizer.
	
	\item \emph{Variance and discrimination (Section \ref{DOL})}: We analyze theoretically the variance of the SGNN output and identify the effect of the filter property, graph stochasticity and architecture. The variance-constrained learning restricts the variance by allowing less variability of the filter frequency response; ultimately, leading to a %, which degrades the discrimination power of the architecture. The latter indicates an explicit 
	trade-off between the stochastic deviation robustness and the SGNN discrimination power. %and trades the latter with the discrimination power.
	
	\item \emph{Duality gap and convergence (Sections \ref{sec:dualityGap}-\ref{sec:Convergence})}: We analyze the optimality loss of the variance-constrained learning by characterizing the duality gap of the formulated optimization problem and the converging behavior of the proposed primal-dual algorithm. The sub-optimality is bounded proportionally by the representation capacity of the SGNN, the gradient descent approximation at the primal phase, and the gradient ascent step-size at the dual phase. These findings validate the effectiveness of the variance-constrained learning and identify our handle to obtain near-optimal solutions.
	
\end{enumerate}

This paper contains one additional minor contribution. It conducts theoretical analysis of stochastic graph filters and SGNNs with a more general stochastic graph model than earlier works, where a subset of edges are dropped with a probability $p$ and another subset are added with another probability $q$ [Def. \ref{def_res}]. The theoretical findings of this work are not presented in the preliminary version \cite{gao2021variance}, which focused on the algorithm. Numerical simulations on source localization and recommender systems corroborate the theoretical findings in Section \ref{numer}. The conclusions are drawn in Section \ref{sec_conclusions}. All proofs and lemmas used in these proofs are are collected in the appendix. %, and the lemmas used in these proofs are grouped 
%in the supplementary material.% \vspace{-2mm}

\section{Stochastic Graph Neural Network} \label{sec:StoGNN}

%!TEX root = mainOp.tex

%!TEX root = mainOp.tex

%%%%%%%%%%%%%%%%%%%%%%%%%%%%%%%%%%%%%%%%%%%%%
%%% SECTION : Rotational Neural Networks  %%%
%%%%%%%%%%%%%%%%%%%%%%%%%%%%%%%%%%%%%%%%%%%%%

Let $\mathcal{G}= ( \mathcal{V}, \mathcal{E}, \bbS)$ be a graph with node set $\mathcal{V} = \{ 1,\cdots,n \}$, edge set $\ccalE = \{(i,j)\} \subseteq \ccalV \times \ccalV$, and graph shift operator $\bbS \in \mathbb{R}^{n \times n}$, e.g., the adjacency matrix $\bbA$ or the Laplacian matrix $\bbL$. Let also $\bbx = [ x_1,...,x_n ]^\top \in \mathbb{R}^n$ be a graph signal with component $x_i$ the signal value associated to node $i$ \cite{Shuman2013, Segarra2017, ortega2018, Coutino2019}. For example, in a recommender system nodes are movies, edges are similarities between them, and the graph signal is the ratings given by a user to these movies. We are interested in learning representations from the tuple ($\ccalG, \bbx$) for tasks such as inferring user missing ratings, while we aim to keep these representations robust w.r.t. random topological changes on the nominal graph. These random changes may be due to different factors such as adversarial attacks \cite{zugner2018adversarial}, communication link outage \cite{guo2015outage}, and edge rewiring in collaborative filtering to improve diversity \cite{par2004recommender}. In these cases, existing edges may be lost and new edges may be added, resulting in random topologies. We characterize the latter with the generalized random edge sampling (GRES) model.

\begin{definition}[GRES($p, q$) model]\label{def_res} 
	Consider the nominal graph $\ccalG = (\ccalV, \ccalE)$. Let $\ccalE_d \subseteq \ccalE$ be a set of $M_d$ existing edges that may be dropped and $\ccalE_a \nsubseteq \ccalE$ a set of $M_a$ new edges that may be added. A GRES graph realization $\ccalG_k = (\ccalV, \ccalE_k)$ of $\ccalG$ comprises the same node set $\ccalV$ and the edge set $\ccalE_k$ where the edges in $\ccalE_d$ are dropped independently with a probability $0 \le p < 1$ and the edges in $\ccalE_a$ are added independently with a probability $0\le q < 1$.
\end{definition}\vspace{-1mm}
\noindent We denote by $\bbS_k$ the random shift operator of the GRES($p,q$) graph $\ccalG_k$ with $2^{M_d + M_a}$ possible realizations.%, which follows a \emph{discrete} distribution comprising $2^{M_d + M_a}$ realizations.

\smallskip
\noindent\textbf{Stochastic graph neural network (SGNN) \cite{gao2021stochastic}.} An SGNN is a graph neural network that learns representations over random topologies. The key of this architecture is the \emph{stochastic graph filter}. When applied to a graph signal $\bbx$, the output of a stochastic graph filter over a sequence of $K$ GRES($p,q$) graph realizations $\{ \bbS_k \}_{k=0}^K$ can be written as \vspace{-2mm}
\begin{equation} \label{eq:stochasticGraphFilterOutput}
	\begin{split}
		\bbH(\bbS_{K:0})\bbx :=  \sum_{k=0}^K h_{k}\bbS_k \ldots \bbS_1 \bbS_0 \bbx =  \sum_{k=0}^K h_{k} \prod_{i=0}^k \bbS_{i} \bbx
	\end{split}
\end{equation}
with $\{ h_k \}_{k=0}^K$ the filter coefficients and $\bbS_0 \!=\! \bbI$ the identity matrix \cite{gao2021stochastic}. In the filter output \eqref{eq:stochasticGraphFilterOutput}, the first shift $\bbS_1 \bbx$ collects at each node the information from its immediate neighbors and the successive $k$-shifts $\prod_{i=0}^k \bbS_{i} \bbx$ collect information from $k$-hop neighbors that can be reached via the randomly present edges in $\bbS_1,\ldots,\bbS_k$. The stochastic graph filter aggregates these shifted signals $\{ \prod_{i=0}^k \bbS_{i} \bbx \}_{k=0}^K$ and weighs them with coefficients $\{h_k\}_{k=0}^K$; ultimately, allowing for a distributed implementation -- see also \cite{Segarra2017, Shuman2018}.

An SGNN is a layered architecture, in which each layer comprises a bank of stochastic graph filters followed by a pointwise nonlinearity. At layer $\ell =1,..., L$, the input is a collection of $F$ graph signal features $\{ \bbx_{\ell-1}^g \}_{g=1}^F$ generated at the former layer $\ell-1$. These features are processed by a bank of $F^2$ stochastic graph filters $\{ \bbH_{\ell}^{fg}(\bbS_{K:0})\}_{fg}$ [cf. \eqref{eq:stochasticGraphFilterOutput}], aggregated over the input index $g$, and finally passed through a nonlinearity $\sigma(\cdot)$ to generate $F$ output features of layer $\ell$, i.e., \vspace{-2mm}
\begin{equation}\label{eq:sgnn}
	\begin{split}
		\bbx_{\ell}^{f} \!=\! \sigma \Big( \sum_{g=1}^{F} \bbu_{\ell}^{fg} \Big)\!\!=\! \sigma\Big(\sum_{g=1}^{F} \bbH_{\ell}^{fg}(\bbS_{K:0})\bbx_{\ell-1}^g\! \Big),~\!\for~f\!=\!1,...,F.
	\end{split}
\end{equation}
%for $f=1,\ldots,F$. 
To ease exposition, we consider a single input $\bbx_0^1 = \bbx$ and output $\bbx_L^1$. We represent the SGNN as the nonlinear map $\bbPhi(\cdot;\bbS_{P:1},\ccalH): \mathbb{R}^n \to \mathbb{R}^n$, which applies on the input $\bbx$ and generates the output $\bbPhi(\bbx;\bbS_{P:1},\ccalH):=\bbx_L^1$. Here, $\ccalH = \{ h_{0\ell}^{fg}, \ldots. h_{K\ell}^{fg} \}_{fg\ell}$ collects all filter coefficients and $\bbS_{P:1}$ indicates the sequence of all $P=K[2F + (L-1)F^2]$ shift operators in the SGNN.

\smallskip
\noindent \textbf{Problem motivation.} The SGNN output $\bbPhi(\bbx;\bbS_{P:1},\ccalH)$ is a random variable because of the graph stochasticity and the data distribution. Given a training set $\ccalT= \{ (\bbx,\bby) \}$ and a loss function $\ccalC(\cdot,\cdot)$, we train the SGNN with stochastic gradient descent and the latter is shown equivalent to solving an unconstrained stochastic optimization problem over the graph and the data distributions \cite{gao2021stochastic}, i.e., %\vspace{-2mm}
%the SGNN parameters $\ccalH$ are trained based on the statistical risk over the graph and data distributions, i.e.,
\begin{equation}\label{eq:objective}
	\begin{split}
		& \mathbb{P}_{\text{un}}:=\min_{\ccalH} \mathbb{E}_\ccalM\! \left[ \ccalC_\ccalT(\bby, \bbPhi(\bbx;\bbS_{P:1},\ccalH)) \right]
	\end{split}
\end{equation}
where $\ccalC_\ccalT(\bby, \bbPhi(\bbx;\bbS_{P:1},\ccalH)):=\mathbb{E}_\ccalT\! \left[ \ccalC(\bby, \bbPhi(\bbx;\bbS_{P:1},\ccalH))\right]$ is the expected cost over the data distribution and $\ccalM$ is the discrete set of the shift operator sequences $\bbS_{P:1}$, which contains $2^{P (M_d+M_a)}$ elements. The expectation of a function $f(\bbx;\bbS_{P:1})$ over $\ccalM$ is $\mathbb{E}_\ccalM = \sum_{\bbS_{P:1}\in \ccalM} f(\bbx;\bbS_{P:1}) \mu(\bbS_{P:1})$ 
%\begin{align}
%	\mathbb{E}_\ccalM = \sum_{\bbS_{P:1}\in \ccalM} f(\bbS_{P:1}) \mu(\bbS_{P:1})
%\end{align}
where $\mu(\cdot)$ is the probability measure over $\ccalM$ such that $\mu(\bbS_{P:1}) = 1/2^{P (M_d+M_a)}$ for each $\bbS_{P:1} \in \ccalM$. The solution of \eqref{eq:objective} accounts for the graph stochasticity during training and makes the SGNN robust when tested over random graphs. However, problem \eqref{eq:objective} only guarantees robustness w.r.t. the expected performance but ignores stochastic deviations around it. The latter may lead to a single SGNN output far from the optimal expectation and be problematic in settings where uncertainty must be controlled.

To overcome this issue, we propose a variance-constrained learning strategy for the SGNN to balance the expected performance with stochastic deviations. Specifically, we formulate a constrained stochastic optimization problem as
\begin{alignat}{3} \label{eq:varianceConstrainedProblem}
	\mathbb{P}_{\text{con}}:=   &\min_{\ccalH} \mathbb{E}_\ccalM \left[ \ccalC_\ccalT(\bby, \bbPhi(\bbx;\bbS_{P:1},\ccalH)) \right]             \\
	&  \st \quad {\rm Var} \left[ \bbPhi(\bbx;\bbS_{P:1},\ccalH) \right] \le C_v  \nonumber
\end{alignat}
where ${\rm Var}[\bbPhi(\bbx;\bbS_{P:1},\ccalH)]$ is a variance measure that characterizes stochastic deviations of the SGNN output and $C_v$ is a variance bound we can tolerate. Problem \eqref{eq:varianceConstrainedProblem} is challenging because of the non-convexity of the SGNN, the stochasticity of the GRES($p,q$) model, and the variance constraint. We solve the problem via a primal-dual learning method in Sec. \ref{sec_problem}. Since the proposed variance-constrained learning trades the variance with the discrimination power, we characterize this trade-off explicitly and show the role played by different factors in Sec. \ref{DOL}. We further analyze the optimality loss induced by the primal-dual method in Sec. \ref{sec:dualityGap} and prove this learning procedure converges to a neighborhood of the saddle point solution in Sec. \ref{sec:Convergence}.% \vspace{-2mm}

\section{Variance-Constrained Learning}\label{sec_problem}

%!TEX root = mainOp.tex
%%%%%%%%%%%%%%%%%%%%%%%%%%%%%%%%%%%%%%%%%%%%%%%%%%%%%%%%%%%%%%%%%%%%%%%%%%%%%%%%%%%%%%%%%%%%%%%%%%
%%%%                                                                                          %%%%
%%%%                              Variance-Constrained Learning                               %%%%
%%%%                                                                                          %%%%
%%%%%%%%%%%%%%%%%%%%%%%%%%%%%%%%%%%%%%%%%%%%%%%%%%%%%%%%%%%%%%%%%%%%%%%%%%%%%%%%%%%%%%%%%%%%%%%%%%

We consider the average variance experienced over all nodes
\begin{align} \label{eq:varianceDefinition}
		&{\rm Var} \left[ \bbPhi(\bbx;\bbS_{P:1},\ccalH) \right] := \frac{1}{n}\sum_{i=1}^n {\rm Var}\Big[[\bbPhi(\bbx;\bbS_{P:1},\ccalH)]_i\Big]\\
		&=\! \frac{1}{n}\!\sum_{i=1}^n \!\!\Big(\!\mathbb{E}_{\ccalM}\!\Big[[\bbPhi(\bbx;\bbS_{P:1},\ccalH)]^2_i\Big] \!\!-\! \mathbb{E}_{\ccalM}\!\Big[[\bbPhi(\bbx;\bbS_{P:1},\ccalH)]_i\Big]^2\Big). \nonumber
\end{align}
This expression measures how individual node outputs $\{[\bbPhi(\bbx;\bbS_{P:1},\ccalH)]_i\}_{i=1}^n$ deviate from their expectations. It is a standard criterion used in multi-dimensional systems and is related to the A-optimality of the confidence ellipsoid \cite{joshi2008sensor}. In what follows, we use \eqref{eq:varianceDefinition} as the variance measure in \eqref{eq:varianceConstrainedProblem} and solve the latter problem with a primal-dual learning procedure. We further show how this learning strategy behaves as a \emph{self-learning variance regularizer} that provides explicit theoretical guarantees about stochastic deviations.

Since problem \eqref{eq:varianceConstrainedProblem} is a constrained optimization problem, we solve it in the dual domain. However, the variance constraint is a \emph{non-convex} function of $\mathbb{E}_{\ccalM}\big[\bbPhi(\bbx;\bbS_{P:1},\ccalH)\big]$ and  $\mathbb{E}_{\ccalM}\big[\bbPhi(\bbx;\bbS_{P:1},\ccalH)^2\big]$. The latter makes it difficult to analyze the duality gap, which quantifies the optimality loss of the solution obtained in the dual domain; consequently, there is no performance guarantee for any dual method solving \eqref{eq:varianceConstrainedProblem} as we shall detail in Sec. \ref{sec:dualityGap}. To provide theoretical performance guarantees, we consider the surrogate problem where we constrain separately the first and second order moments in \eqref{eq:varianceDefinition}, i.e.,
\begin{alignat}{3} \label{eq:alternativeVarianceConstrainedProblem}
	\mathbb{P}:=   &\min_{\ccalH} \mathbb{E}_\ccalM \left[ \ccalC_\ccalT(\bby, \bbPhi(\bbx;\bbS_{P:1},\ccalH)) \right]             \\
	&\st \frac{1}{n}\mbE_\ccalM \Big[ \sum_{i=1}^n [\bbPhi(\bbx;\bbS_{P:1},\ccalH)]_i \Big] \ge C_f,  \nonumber \\
	&\quad ~~ \frac{1}{n}\mbE_\ccalM \Big[ \sum_{i=1}^n [\bbPhi(\bbx;\bbS_{P:1},\ccalH)]^2_i \Big] \le C_s. \nonumber
\end{alignat}
The constraints of \eqref{eq:alternativeVarianceConstrainedProblem} are \emph{convex} functions (the outer function not the composed function with the SGNN) of the first order moment $\mathbb{E}_{\ccalM}\big[\bbPhi(\bbx;\bbS_{P:1},\ccalH)\big]$ and of the second order moment $\mathbb{E}_{\ccalM}\big[\bbPhi(\bbx;\bbS_{P:1},\ccalH)^2\big]$, respectively\footnote{A more intuitive constraint for the first order moment is to lower bound its absolute value $\big|\mathbb{E}_{\ccalM}\big[\bbPhi(\bbx;\bbS_{P:1},\ccalH)\big]\big| \ge C_f$, i.e., $C_f - \big|\mathbb{E}_{\ccalM}\big[\bbPhi(\bbx;\bbS_{P:1},\ccalH)\big]\big| \le 0$. However, the latter is still a non-convex function and thus does not allow for the duality gap analysis as \eqref{eq:alternativeVarianceConstrainedProblem}.}.  
Through scalar $C_f \ge 0$ we lower bound the expected output and through scalar $C_s \ge 0$ we upper bound the output autocorrelation. The latter are related to the variance \eqref{eq:varianceDefinition}; hence, we can implicitly bound the variance as 
\begin{align}\label{eq:resultConstrainedVariance}
	{\rm Var} \left[ \bbPhi(\bbx;\bbS_{P:1},\ccalH) \right] \le C_s - C_f^2.
\end{align}
Since there always exist $C_f$ and $C_s$ such that $C_s - C_f^2 = C_v$, e.g., $C_f = 0$ and $C_s = C_v$, the surrogate problem \eqref{eq:alternativeVarianceConstrainedProblem} restricts the SGNN output and balances the expected performance with the stochastic deviation as the original problem \eqref{eq:varianceConstrainedProblem}. 

\subsection{Primal-Dual Learning}\label{subsec:varianceConstrained}

By introducing the non-negative dual variable $\bbgamma \!=\! [\gamma_1,\gamma_2] \in \mathbb{R}^2_+$, we define the Lagrangian $\ccalL(\ccalH, \!\bbgamma)$ of \eqref{eq:alternativeVarianceConstrainedProblem} as
\begin{align} \label{eq:Lagrangian}
	\ccalL(\ccalH, \bbgamma) &= \mathbb{E}_\ccalM\! \big[\ccalC_\ccalT\big(\bby, \bbPhi(\bbx;\bbS_{P:1},\ccalH)\big)\big]\nonumber\\ 
	&+ \gamma_1 \Big( C_f - \frac{1}{n}\mbE_{\ccalM} \Big[ \sum_{i=1}^n [\bbPhi(\bbx;\bbS_{P:1},\ccalH)]_i \Big]\Big) \\
	& - \gamma_2 \Big( C_s - \frac{1}{n}\mbE_{\ccalM} \Big[ \sum_{i=1}^n [\bbPhi(\bbx;\bbS_{P:1},\ccalH)]^2_i \Big] \Big).\nonumber
\end{align}
Given the dual function $\mathcal{D}(\bbgamma) = \min_{\ccalH} \mathcal{L}(\ccalH,\bbgamma)$, it holds that $\mathcal{D}(\bbgamma) \le \mathbb{P}$ for any $\bbgamma$ \cite{nocedal2006numerical}. The goal now is to find the optimal dual variable $\bbgamma^*$ that maximizes the dual function as
\begin{equation} \label{eq:dualFunction}
	\begin{split}
		\mathbb{D} = \max_{\bbgamma} \mathcal{D}(\bbgamma):=\max_{\bbgamma} \min_{\ccalH} \mathcal{L}(\ccalH,\bbgamma).
	\end{split}
\end{equation}
That is, search for an optimal primal-dual pair $(\ccalH^*, \bbgamma^*)$ satisfying the saddle-point relationship $\mathcal{L}(\ccalH^*,\bbgamma) \le \mathcal{L}(\ccalH^*,\bbgamma^*) \le \mathcal{L}(\ccalH,\bbgamma^*)$ 
\iffalse
\begin{equation} \label{eq:saddleRelation}
	\begin{split}
		\mathcal{L}(\ccalH^*,\lambda) \le \mathcal{L}(\ccalH^*,\lambda^*) \le \mathcal{L}(\ccalH,\lambda^*)
	\end{split}
\end{equation}
\fi
for any $\ccalH$ and $\bbgamma$ in the neighborhood of the optimal solution.

We approach the dual problem \eqref{eq:dualFunction} by alternatively updating the primal variable $\ccalH$ with stochastic gradient descent and the dual variable $\bbgamma$ with stochastic gradient ascent. 

\smallskip
\noindent\emph{Primal phase.} At iteration $t$, given the primal variable $\ccalH_t$ and the dual variable $\bbgamma_t$, we set $\ccalH^{(0)}_{t} = \ccalH_t$ and update the primal variable with gradient descent for $\Gamma$ steps as
\begin{subequations}\label{eq_priup}
\begin{align}
	&\ccalH^{(\tau)}_t \!=\! \ccalH^{(\tau\!-\!1)}_{t} \!\!- \eta_\ccalH \nabla_{\ccalH} \mathcal{L}(\ccalH^{(\tau\!-\!1)}_{t}\!\!,\bbgamma_{t}),~\text{for}~\tau \!=\! 1,...,\Gamma,\\
	&\ccalH_{t+1} := \ccalH^{(\Gamma)}_t
\end{align}
\end{subequations}
where $\eta_\ccalH>0$ is the primal step-size. The challenge in \eqref{eq_priup} is to compute the gradient $\nabla_{\ccalH} \mathcal{L}(\ccalH^{(\tau-1)}_{t},\bbgamma_{t})$, which requires evaluating the expectation $\mathbb{E}_{\ccalM}[\cdot]$. The latter needs to be estimated over $2^{P(M_d+M_a)}$ realizations resulting in an expensive computation. To overcome this issue, we approximate the expectation with empirical alternatives over $N$ 
% expectation terms stochastically by 
sampled realizations $\{ \bbS^{(j)}_{P:1}\}_{j=1}^N$ as
\begin{subequations} \label{eq:averageApproximate1}
\begin{align} \label{eq:averageApproximate15}
	&\mathbb{E}_\ccalM\! \big[\ccalC_\ccalT(\bby,\! \bbPhi(\bbx;\bbS_{P:1},\!\ccalH))\big] \!\approx\! \frac{1}{N}\!\sum_{j=1}^N \ccalC_\ccalT(\bby,\! \bbPhi(\bbx;\bbS_{P:1}^{(j)},\ccalH)), \\
	\label{eq:averageApproximate2}&\mbE_\ccalM\! \Big[ \sum_{i=1}^n [\bbPhi(\bbx;\bbS_{P:1},\ccalH)]_i \Big]\!\approx\! \frac{1}{N}\!\sum_{j=1}^N \sum_{i=1}^n [\bbPhi(\bbx;\bbS^{(j)}_{P:1},\ccalH)]_i,\\
	\label{eq:averageApproximate3}&\mbE_\ccalM\! \Big[ \sum_{i=1}^n [\bbPhi(\bbx;\bbS_{P:1},\ccalH)]^2_i \Big]\!\approx\! \frac{1}{N}\!\sum_{j=1}^N\sum_{i=1}^n [\bbPhi(\bbx;\bbS^{(j)}_{P:1},\ccalH)]^2_i.
\end{align}
\end{subequations}
The sampling average is a standard procedure in stochastic optimization methods, such as Monte-Carlo simulation \cite{harrison2010introduction} and stochastic gradient descent \cite{gao2022balancing}. %, etc. More samples, i.e., 
A larger $N$ approximates better the expectation but in turn results in more computations, which yields a trade-off between the performance and complexity. If the problem dimension increases, the variance may be harder to approximate and we may increase $N$ to improve performance; if the problem dimension decreases, the variance may be easier to approximate and we may decrease $N$ to save computation. We shall show in Sec. \ref{numer} that for a graph of $50$ nodes, an $N \ge 10$ is sufficient.

\smallskip
\noindent\emph{Dual phase.} Given the updated primal variable $\ccalH_{t+1}$, the dual variable is updated with gradient ascent %\vspace{-2mm}
\begin{subequations} \label{eq:dualUpdate}
\begin{align} \label{eq_dualup1}
	&\gamma_{1, t+1} \!=\!\Big[ \gamma_{1, t} \!+\! \eta_\gamma  \Big( C_f \!-\! \frac{1}{n}\mbE_\ccalM\! \Big[ \sum_{i=1}^n [\bbPhi(\bbx;\bbS_{P:1},\ccalH)]_i \Big] \Big) \Big]_+,\\
	\label{eq_dualup2} &\gamma_{2, t+1} \!=\!\Big[ \gamma_{2, t} \!-\! \eta_\gamma  \Big( C_s \!-\!\frac{1}{n} \mbE_\ccalM\! \Big[ \sum_{i=1}^n [\bbPhi(\bbx;\bbS_{P:1},\ccalH)]^2_i \Big] \Big) \Big]_+
\end{align}
\end{subequations}
where $\eta_\gamma>0$ is the dual step-size and $[\cdot]_+$ is the non-negative projection since $\gamma_1,\gamma_2 \ge 0$. In \eqref{eq_dualup1} and \eqref{eq_dualup2}, we substitute the expectations with their empirical alternatives as in \eqref{eq:averageApproximate2} and \eqref{eq:averageApproximate3}. These stochastic approximations allow updating the dual step and completing the iteration $t$. The algorithm is stopped either after a maximum number of iterations $T$ or when a tolerance on the gradient norm is reached. Algorithm \ref{alg:primalDualAlgorithm} recaps this procedure. %\vspace{-2mm}

{\linespread{1}
	\begin{algorithm}[t] \begin{algorithmic}[1] 
			\STATE \textbf{Input:} Training set $\ccalT$, loss function $\ccalC(\cdot,\cdot)$, initial primal variable $\ccalH_0$, initial dual variable $\bbgamma_0$, bounds $C_f$ and $C_s$, primal step-size $\eta_\ccalH$, and dual step-size $\eta_\gamma$
			\STATE Establish the Lagrangian \eqref{eq:Lagrangian} and the dual problem \eqref{eq:dualFunction}
			\FOR {$t = 0,1,2,...$}
			\STATE \textbf{Primal phase.} Given $\ccalH_t$ and $\bbgamma_t$, update the primal variable with gradient descent for $\Gamma$ steps [cf. \eqref{eq_priup}]\\
			\STATE Approximate $\mathcal{L}(\ccalH^{(\tau\!-\!1)}_{t}\!\!,\bbgamma_{t})$ stochastically [cf. \eqref{eq:averageApproximate1}]\\
			\STATE \textbf{Dual phase.} Given $\ccalH_{t+1}$ and $\bbgamma_t$, update the dual variable with stochastic gradient ascent [cf. \eqref{eq:dualUpdate}]\\
			\ENDFOR			
		\end{algorithmic}
		\caption{Primal-dual learning procedure}\label{alg:primalDualAlgorithm}
\end{algorithm}}

\begin{remark}
	Algorithm \ref{alg:primalDualAlgorithm} is applicable to both the original problem \eqref{eq:varianceConstrainedProblem} and the surrogate problem \eqref{eq:alternativeVarianceConstrainedProblem}. We focus on the surrogate problem \eqref{eq:alternativeVarianceConstrainedProblem} because it allows to analyze its duality gap in Sec. \ref{sec:dualityGap}; hence, providing a unified exposition throughout the paper. However, if the duality analysis is not of interest and any local minima is acceptable, we can work with the original problem \eqref{eq:varianceConstrainedProblem} directly. All the other theoretical findings -- the above primal-dual learning, the discrimination analysis in Sec. \ref{DOL} and the convergence analysis in Sec. \ref{sec:Convergence} -- apply to the original problem as well.
\end{remark} %\vspace{-3mm}
\begin{remark}
	Any stochastic optimization algorithm can be used at the primal phase to solve the dual function $\min_\ccalH \ccalL(\ccalH, \bbgamma)$ [cf. \eqref{eq:dualFunction}]. We apply the stochastic gradient descent in \eqref{eq_priup} as a baseline method to ease the exposition. Other choices include the ADAM method, the quasi-Newton method, etc. 
\end{remark} %\vspace{-2mm}

\subsection{Self-Learning Variance Regularizer}\label{subsec:varianceRegularized}

An intuitive alternative to the variance-constrained problem \eqref{eq:varianceConstrainedProblem} is to consider the variance as a regularizer for problem \eqref{eq:objective}, i.e.,
\begin{align} \label{eq:linktoRegularizer}
	\min_{\ccalH} \mathbb{E}_\ccalM\! \big[\ccalC_\ccalT(\bby,\bbPhi(\bbx;\bbS_{P:1},\ccalH))\big] + \beta {\rm Var} \left[ \bbPhi(\bbx;\bbS_{P:1},\ccalH) \right]
\end{align}
where $ \beta > 0$ is the regularization parameter. The regularization term $ \beta {\rm Var} \left[ \bbPhi(\bbx;\bbS_{P:1},\ccalH) \right]$ incentivizes the SGNN output to have a small variance by forcing its parameters to trade between the expected cost and the variance. Problem \eqref{eq:linktoRegularizer} can be solved directly with stochastic gradient descent. However, we find it limiting in two aspects: (i) It does not provide theoretical guarantees for stochastic deviations. The explicit relation between the regularization term and the stochastic deviation is unclear, thus little insights or implications can be obtained; (ii) It is difficult to select a suitable regularization parameter $\beta$ that well balances the expected performance and the variance. If $\beta$ is too large, the SGNN would only restrict the variance but sacrifice the performance; if $\beta$ is too small, the SGNN may generate outputs with a large variance. Deciding the value of $\beta$ requires extensive cross-validation and could be computationally demanding.

Differently, the variance-constrained learning not only optimizes the SGNN parameters $\ccalH$ akin to the variance regularized objective, but also learns the regularization parameter $\bbgamma$ based on the variance bound. To see this, recall that minimizing the Lagrangian \eqref{eq:Lagrangian} at the primal phase is equivalent to solving \vspace{-2mm}
\begin{align} \label{eq:varianceLagrangian}
	 \!\!\min_\ccalH &\mathbb{E}_\ccalM\! \big[\ccalC_\ccalT(\bby,\! \bbPhi(\bbx;\!\bbS_{P:1},\!\ccalH))\big]\!\!-\! \frac{\gamma_1}{n}\mbE_{\ccalM}\! \Big[\! \sum_{i=1}^n [\bbPhi(\bbx;\!\bbS_{P:1},\!\ccalH)]_i \!\Big]\nonumber \\
	& ~~~~~~~~ ~~~~~~~~~~~~ + \frac{\gamma_2}{n} \mbE_{\ccalM} \Big[ \sum_{i=1}^n [\bbPhi(\bbx;\bbS_{P:1},\ccalH)]^2_i \Big].
\end{align}
This is similar to the variance regularizer in \eqref{eq:linktoRegularizer}, where the dual variable $\bbgamma = [\gamma_1,\gamma_2]$ is the regularization parameter and the primal variable $\ccalH$ is updated in the direction that reduces the variance [cf. \eqref{eq:varianceDefinition}]. However, instead of hand-fixing $\bbgamma$ at the outset, the variance-constrained learning updates $\bbgamma$ at the dual phase based on the bounds of the first and second order moments $C_f, C_s$ [cf. \eqref{eq:dualUpdate}]; ultimately, based on the variance bound $C_v$ [cf. \eqref{eq:resultConstrainedVariance}]. Hence, we can consider the latter as a \emph{self-learning variance regularizer}, where the regularization parameter is learned based on the variance bound $C_v$.

More importantly, feasible solutions of the variance-constrained problem provide explicit theoretical guarantees about stochastic deviations of the SGNN output around its expectation. The following proposition establishes the probability contraction bound for the SGNN output and the role of the variance constraint. \vspace{-1mm}
\begin{proposition}\label{thm:probabilityBound}
	Consider the variance-constrained problem \eqref{eq:varianceConstrainedProblem}. Let $\ccalH$ be a feasible solution that satisfies the variance constraint. Then, for any $\varepsilon>0$, it holds that \vspace{-2mm}
	\begin{equation} \label{eq:ChebyshevBound}
		\begin{split}
			\text{Pr} \Big( \frac{1}{n}\big\| \bbPhi(\bbx;\bbS_{P:1},\!\ccalH) \!-\! \mathbb{E}_\ccalM[\bbPhi(\bbx;\bbS_{P:1},\!\ccalH)] \big\|^2 \!\le\! \varepsilon \!\Big) \!\ge\! 1\!-\! \frac{C_v}{\varepsilon}.\nonumber
		\end{split}
	\end{equation}
\end{proposition}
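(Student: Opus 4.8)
The plan is to recognize the claim as a multivariate Chebyshev inequality, which reduces to a single application of Markov's inequality to a scalar nonnegative random variable. First I would introduce the normalized squared-deviation quantity
\[
Z := \frac{1}{n}\big\| \bbPhi(\bbx;\bbS_{P:1},\ccalH) - \mathbb{E}_\ccalM[\bbPhi(\bbx;\bbS_{P:1},\ccalH)] \big\|^2,
\]
which is nonnegative by construction. The event appearing in the statement is then exactly $\{Z \le \varepsilon\}$, and its complement is $\{Z > \varepsilon\}$, so it suffices to bound $\text{Pr}(Z > \varepsilon)$.

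The crucial linking step is to compute $\mathbb{E}_\ccalM[Z]$ and identify it with the variance measure of \eqref{eq:varianceDefinition}. Expanding the Euclidean norm componentwise gives $Z = \frac{1}{n}\sum_{i=1}^n ([\bbPhi(\bbx;\bbS_{P:1},\ccalH)]_i - \mathbb{E}_\ccalM[[\bbPhi(\bbx;\bbS_{P:1},\ccalH)]_i])^2$. Taking the expectation over the finite set $\ccalM$ and exchanging it with the sum over nodes, each summand becomes $\mathbb{E}_\ccalM[([\bbPhi(\bbx;\bbS_{P:1},\ccalH)]_i - \mathbb{E}_\ccalM[[\bbPhi(\bbx;\bbS_{P:1},\ccalH)]_i])^2] = {\rm Var}[[\bbPhi(\bbx;\bbS_{P:1},\ccalH)]_i]$, which is precisely the per-node variance. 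Hence $\mathbb{E}_\ccalM[Z] = \frac{1}{n}\sum_{i=1}^n {\rm Var}[[\bbPhi(\bbx;\bbS_{P:1},\ccalH)]_i] = {\rm Var}[\bbPhi(\bbx;\bbS_{P:1},\ccalH)]$, and since $\ccalH$ is feasible for \eqref{eq:varianceConstrainedProblem} this is bounded above by $C_v$.

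With $\mathbb{E}_\ccalM[Z] \le C_v$ established, I would apply Markov's inequality to the nonnegative random variable $Z$ to get $\text{Pr}(Z > \varepsilon) \le \mathbb{E}_\ccalM[Z]/\varepsilon \le C_v/\varepsilon$ for every $\varepsilon > 0$, and then pass to the complementary event $\text{Pr}(Z \le \varepsilon) = 1 - \text{Pr}(Z > \varepsilon) \ge 1 - C_v/\varepsilon$, which is exactly \eqref{eq:ChebyshevBound}. There is no genuine obstacle here: the argument is a textbook concentration bound, and the only point requiring care is the bookkeeping that matches $\mathbb{E}_\ccalM[Z]$ with the per-node averaged variance of \eqref{eq:varianceDefinition}. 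In particular, the $1/n$ normalization inside the probability event must coincide with the $1/n$ in the definition of the variance measure, so that feasibility translates directly into the factor $C_v$ on the right-hand side.
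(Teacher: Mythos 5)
Your proposal is correct and follows essentially the same route as the paper: both identify $\tfrac{1}{n}\mathbb{E}_\ccalM\big[\|\bbPhi-\mathbb{E}_\ccalM[\bbPhi]\|^2\big]$ with the variance measure of \eqref{eq:varianceDefinition}, use feasibility to bound it by $C_v$, and conclude via Markov's inequality on the nonnegative squared deviation. The only cosmetic difference is that the paper re-derives Markov's inequality from the conditional-expectation decomposition rather than citing it.
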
 %\vspace{-4mm}

\begin{proof}
	See Appendix \ref{pr:proofTheorem1}.
\end{proof} %\vspace{-1mm}

\noindent That is, the probability that an SGNN realization deviates from its expectation by at most $\varepsilon$ is no more than a fraction of $C_v/{\varepsilon}$. When the variance constraint is strict, i.e., $C_v \to 0$, the bound approaches one and stochastic deviations are well-controlled, but it may be challenging to find a feasible solution. %in this case. 
The result shows an explicit relation between the imposed variance constraint and random SGNN behaviors, which cannot be established by the variance regularizer in \eqref{eq:linktoRegularizer}. %\vspace{-2mm}

\section{Variance and Discrimination} \label{DOL}

%!TEX root = mainOp.tex
%%%%%%%%%%%%%%%%%%%%%%%%%%%%%%%%%%%%%%%%%%%%%%%%%%%%%%%%%%%%%%%%%%%%%%%%%%%%%%%%
%%%%                                                                        %%%%
%%%%                      DISTRIBUTED ONLINE LEARNING                       %%%%
%%%%                                                                        %%%%
%%%%%%%%%%%%%%%%%%%%%%%%%%%%%%%%%%%%%%%%%%%%%%%%%%%%%%%%%%%%%%%%%%%%%%%%%%%%%%%%

Compared to the unconstrained problem \eqref{eq:objective}, %it is intuitive that 
problem \eqref{eq:varianceConstrainedProblem} trades the bounded variance with the expected performance. However, the explicit trade-off is unclear, i.e., how the imposed constraint affects the overall performance. To address the latter, we characterize theoretically the variance of the SGNN output and show that the variance-constrained learning improves the robustness to stochastic deviations by shrinking the frequency response of stochastic graph filters [cf. \eqref{eq:stochasticGraphFilterOutput}] within the SGNN; thus, reducing the discrimination power. To obtain this result, we analyze next the SGNN behaviors in the graph spectral domain. %\vspace{-3mm}

\subsection{Frequency Response of Stochastic Graph Filter}

Consider the shift operator eigendecomposition $\bbS_k = \bbV_k \bbLambda_k \bbV_k^{\top}$ with eigenvectors $\bbV_k = [\bbv_{k1}, \cdots, \bbv_{kn}]$ and eigenvalues $\bbLambda_k = \text{diag} ([\lambda_{k1},...,\lambda_{kn}])$. The graph Fourier transform (GFT) is the projection of signal $\bbx$ onto $\bbV_k$, i.e., $\bbx = \sum_{i=1}^n \hat{x}_i \bbv_{ki}$, where $\hat{\bbx} = [\hat{x}_i, \cdots, \hat{x}_n]^\top$ are the Fourier coefficients \cite{ortega2018}. Given the eigendecompositions of $k+1$ successive shift operators $\bbS_0,...,\bbS_k$, we can perform a chain of GFTs on $\bbx$ as %\vspace{-2mm}
\begin{equation}\label{eq:kShift}
	\prod_{i=0}^k \bbS_{i}\bbx \!=\!\sum_{i_0=1}^n\! \sum_{i_1=1}^n\!\! \cdots\!\! \sum_{i_k=1}^n\! \hat{x}_{0i_0}\hat{x}_{1i_{0}i_1}\cdots \hat{x}_{ki_{k-1}i_k} \prod_{j=0}^k\lambda_{ji_j} \bbv_{ki_k}
\end{equation}
for all $k = 0, ..., K$, where we first perform the GFT over $\bbS_0$, then over $\bbS_1$, and so on. Here, $\{ \hat{x}_{0i_0} \}_{i_0=1}^n$, $\{\hat{x}_{ji_{j-1}i_j}\}_{j=1}^k$ are the Fourier coefficients of expanding $\bbx$ on the chain of $\bbS_0,...,\bbS_k$ -- see also \cite{gao2021stochastic, gao2021stability}. Thus, we can represent the filter output $\bbu \!=\! \bbH(\bbS_{K:0})\bbx$ as %\vspace{-2mm}
\begin{align}\label{eq:GFTChain}
	\bbu \!=\!\!\sum_{i_0=1}^n\! \sum_{i_1=1}^n \!\!\cdots\!\!\! \sum_{i_K=1}^n \!\!\!\hat{x}_{0i_0}\hat{x}_{1i_{0}i_1}\cdots \hat{x}_{Ki_{K\!-\!1}i_K}\!\! \sum_{k=0}^K\!\! h_k\!\! \prod_{j=0}^k\!\lambda_{ji_j}\! \bbv_{Ki_K}\!.
\end{align} 
As it follows from \eqref{eq:GFTChain}, the input-output relation of the filter in the spectral domain is determined by the eigenvalues $\bbLambda_K,\ldots, \bbLambda_1$ and eigenvectors $\bbV_K,\ldots,\bbV_1$. We can then define the \emph{frequency response of the stochastic graph filter} as %\vspace{-2mm}
\begin{equation} \label{eq:generalizeFrequencyResponse}
	\begin{split}
		h(\bblambda) := \sum_{k=0}^K h_k \prod_{j=0}^k \lambda_{j}
	\end{split}
\end{equation}
which is a $K$-dimensional analytic function of the generic frequency vector variable $\bblambda = [\lambda_1,\ldots,\lambda_K]^\top \in \mathbb{R}^K$ with $\lambda_0=1$ by default (i.e., $\bbS_0 = \bbI$) \cite{gao2021stochastic}. The frequency response $h(\bblambda)$ is a multivariate function of a $K$-dimensional vector variable $\bblambda$, where the $k$th entry $\lambda_k$ is the analytic variable corresponding to the $k$th shift operator $\bbS_k$. The shape of $h(\bblambda)$ is determined by the coefficients $\{h_k\}_{k=0}^K$, while a specific chain of $\bbS_K,\ldots,\bbS_1$ only instantiates the eigenvalues $\{\lambda_{Ki}\}_{i=1}^n,\ldots,\{\lambda_{1i}\}_{i=1}^n$ on the $K$-dimensional variable $\bblambda$ -- see Fig. \ref{fig:generalizedFrequencyResponse} for an example. %\vspace{-3mm}

\subsection{Variance Analysis}

Given the filter frequency response over stochastic graphs, we make the following conventional assumptions. %\vspace{-2mm}
\begin{assumption} \label{as:LipschitzFilter}
	Let $h(\bblambda)$ be the filter frequency response [cf. \eqref{eq:generalizeFrequencyResponse}] of the $K$-dimensional variable $\bblambda$ satisfying $|h(\bblambda)| \le 1$. The stochastic graph filter is Lipschitz, i.e., there exists a constant $C_L$ such that %\vspace{-2mm} %for any $\bblambda_1, \bblambda_2 \in \Lambda^K$,%, it holds
	\begin{equation} \label{eq:LipschitzFilter}
		\begin{split}
			\left|h(\bblambda_1) \!-\! h(\bblambda_2)\right| \!\le\! C_L \|\bblambda_1 \!-\! \bblambda_2\|,~\forall~\bblambda_1,\bblambda_2 \in \Lambda^K
		\end{split}
	\end{equation}
    where $\Lambda^K$ is the considered $K$-dimensional domain.
\end{assumption} %\vspace{-3mm}

\begin{assumption} \label{as:LipschitzNonlinearity1}
	The nonlinearity $\sigma(\cdot)$ satisfies $\sigma(0)\!=\!0$ and it is Lipschitz, i.e., there exists a constant $C_\sigma$ such that %\vspace{-2mm}
	\begin{equation}\label{eq:assumptionLipschitzNonlinearity1}
		\begin{split}
			|\sigma(x)-\sigma(y)| \le C_\sigma |x-y|,~\forall~x,y \in \mathbb{R}.
		\end{split}
	\end{equation}
\end{assumption} %\vspace{-3mm}

\begin{assumption} \label{as:LipschitzNonlinearity2}
The nonlinearity $\sigma(\cdot)$ is variance non-increasing, i.e., for any real random variable $x$, it holds that ${\rm Var}[\sigma(x)] \le {\rm Var}[x]$.
\end{assumption} %\vspace{-2mm}

Assumption \ref{as:LipschitzFilter} implies that the frequency response $h(\bblambda)$ does not change faster than linear in any frequency direction of $\bblambda$, which is standard in the stability analysis of GNNs \cite{Gama20-Stability}. It holds for filter coefficients $\{h_k\}_{k=0}^K$ and graph eigenvalues $\bblambda$ of finite values because $h(\bblambda)$ is a finite-order polynomial, such that it is bounded and Lipschitz for some $C_L < \infty$. Given $\{h_k\}_{k=0}^K$, we can express $h(\bblambda)$ and estimate $C_L$ as the maximal finite difference in the considered domain. Assumptions \ref{as:LipschitzNonlinearity1} and \ref{as:LipschitzNonlinearity2} hold for popular nonlinearities such as the ReLU and the absolute value \cite[Lemma 1]{gao2021stochastic}. The following theorem then formalizes the SGNN output variance. %\vspace{-2mm}

\begin{figure}[t]
	\centering
	\includegraphics[width=0.7\linewidth , height=0.5\linewidth, trim=50 50 50 50]{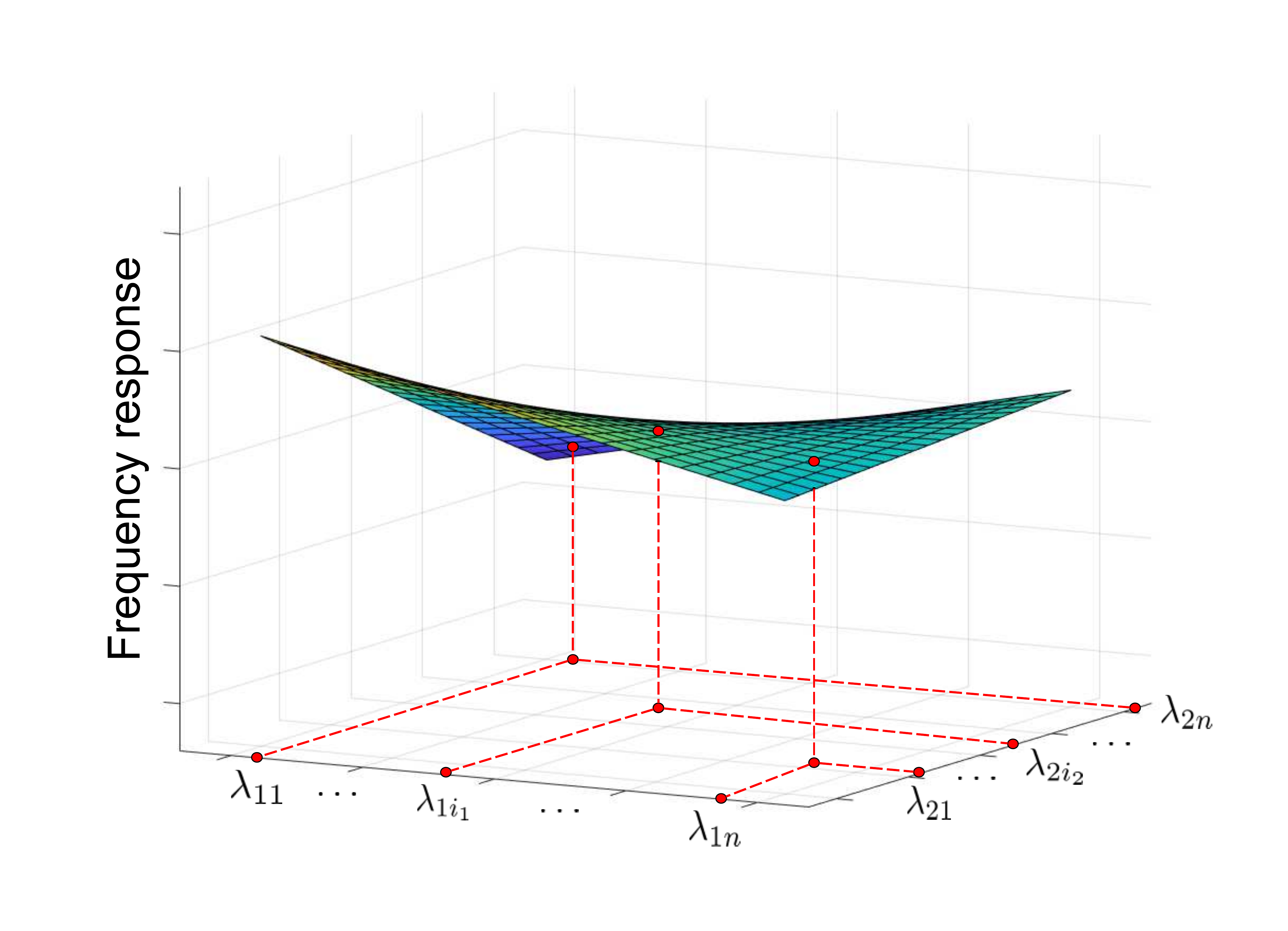}
	\caption{The $2$-dimensional frequency response of a stochastic graph filter. Function $h(\bblambda)$ is independent of graph realizations and it is completely defined by parameters $\{h_k\}_{k=0}^K$ [cf. \eqref{eq:generalizeFrequencyResponse}]. For a specific chain of graph realizations $\{\bbS_1,\bbS_2\}$, $h(\bblambda)$ is instantiated on specific eigenvalues $\{\lambda_{11},...,\lambda_{1n}\}$ determined by $\bbS_1$ and $\{\lambda_{21},...,\lambda_{2n}\}$ determined by $\bbS_2$.}\label{fig:generalizedFrequencyResponse} %\vspace{-4mm}
\end{figure}
\begin{theorem} \label{thm:varianceAnalysis}
	Consider the SGNN in \eqref{eq:sgnn} of $L$ layers, $F$ features, and filter order $K$ over the GRES($p,q$) model with $M_d$ dropping edges and $M_a$ adding edges [Def. \ref{def_res}]. Let the stochastic graph filters with the frequency response \eqref{eq:generalizeFrequencyResponse} satisfy Assumption \ref{as:LipschitzFilter} with $C_L$ and the nonlinearity $\sigma(\cdot)$ satisfy Assumptions \ref{as:LipschitzNonlinearity1}-\ref{as:LipschitzNonlinearity2} with $C_\sigma$. Then, for any input graph signal $\bbx$, the variance of the SGNN output is upper bounded as %\vspace{-1mm}
	\begin{align}\label{eq:varianceAnalysis}
			\!{\rm Var}\! \left[ \bbPhi(\bbx;\bbS_{P:1},\ccalH) \right] &\!\le\! C_L^2 \big(M_dp(1\!-\!p) + M_a q(1\!-\!q)\big) C \| \bbx \|_2^2\nonumber \\
			&~~ + \mathcal{O}(p^2(1\!-\!p)^2) + \mathcal{O}(q^2(1\!-\!q)^2) 
	\end{align}
	where $C= 4 K \sum_{\ell=1}^L \!F^{2L-3} C_\sigma^{2\ell-2} / n$ is a constant.
\end{theorem}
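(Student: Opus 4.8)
The plan is to upper bound the averaged variance $\frac{1}{n}\mbE_\ccalM[\|\bbPhi(\bbx;\bbS_{P:1},\ccalH) - \mbE_\ccalM[\bbPhi(\bbx;\bbS_{P:1},\ccalH)]\|^2]$ by the expected squared deviation of the random SGNN output from a \emph{deterministic} reference. First I would observe that, by independence of the shift realizations across positions, the mean of a single stochastic filter equals the filter evaluated on the mean shift $\barbS = \mbE_\ccalM[\bbS_k]$, since $\mbE_\ccalM[\bbS_k\cdots\bbS_1]=\barbS^{\,k}$. More generally, because the true mean minimizes the expected squared distance, $\text{Var}[\bbPhi]\le \frac1n\mbE_\ccalM[\|\bbPhi(\bbx;\bbS_{P:1},\ccalH)-\bbPhi(\bbx;\barbS_{P:1},\ccalH)\|^2]$, where $\bbPhi(\bbx;\barbS_{P:1},\ccalH)$ is the SGNN run on the mean graph. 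This converts the variance into a stability-type quantity and lets me reuse the Lipschitz machinery.

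Next I would set up a layer recursion for the deviation $\bbe_\ell^f := \bbx_\ell^f - \bar\bbx_\ell^f$ between the random and mean-graph features at layer $\ell$ [cf. \eqref{eq:sgnn}]. Passing through the pointwise nonlinearity costs a factor $C_\sigma$ by Assumption \ref{as:LipschitzNonlinearity1} (while Assumption \ref{as:LipschitzNonlinearity2} guarantees the nonlinearity does not inflate the variance), and the feature aggregation over the $F$ inputs contributes the powers of $F$. The filter bank has two effects: it propagates the incoming deviation, which stays controlled because $|h(\bblambda)|\le 1$ renders each filter non-expansive, and it injects fresh deviation from the randomness of the current-layer shifts, which I quantify in the next step. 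Unrolling this recursion across the $L$ layers is what assembles the constant $C = 4K\sum_{\ell=1}^L F^{2L-3}C_\sigma^{2\ell-2}/n$.

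The core estimate is the single-filter deviation bound. Writing each random shift as $\bbS_k = \barbS + \bbDelta_k$ with $\bbDelta_k$ zero-mean and independent across $k$, I would expand $\bbS_k\cdots\bbS_1$ and split off the term $\barbS^{\,k}$ (which cancels against the mean-graph filter), the terms linear in the $\bbDelta_k$, and the terms of order two or higher. The linear part, collected by perturbed position $j$, takes the form $\sum_{j=1}^K \bbB_j\bbDelta_j\barbS^{\,j-1}\bbx$ with $\bbB_j=\sum_{k\ge j}h_k\barbS^{\,k-j}$; its expected squared norm splits across $j$ by independence and zero mean, and the spectral image of $\bbB_j\barbS^{\,j-1}$ is precisely the partial sensitivity of the frequency response \eqref{eq:generalizeFrequencyResponse} in coordinate $\lambda_j$, which Assumption \ref{as:LipschitzFilter} bounds by $C_L$. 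This is exactly where the $C_L^2$ factor and the sum over the $K$ shift positions enter, reducing the estimate to the second moment of $\bbDelta_k\bbx$.

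Finally I would evaluate these perturbation statistics under the GRES($p,q$) model [Def. \ref{def_res}]: each of the $M_d$ droppable edges is an independent Bernoulli with variance $p(1-p)$ and each of the $M_a$ addable edges has variance $q(1-q)$, so the relevant second moment of $\bbDelta_k$ scales as $\big(M_d p(1-p)+M_a q(1-q)\big)$ and the signal dependence collapses to $\|\bbx\|_2^2$. Combining with the layer recursion yields the leading term of \eqref{eq:varianceAnalysis}, while the neglected quadratic-and-higher terms in the $\bbDelta_k$ expansion, being products of at least two independent zero-mean edge perturbations, carry factors $p^2(1-p)^2$ and $q^2(1-q)^2$ and thus give the stated $\mathcal{O}(\cdot)$ remainders. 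The main obstacle I anticipate is the single-filter step: correctly attributing the first-order sensitivity of the \emph{multivariate} frequency response to the Lipschitz constant $C_L$ (rather than merely to the operator norm of a polynomial in $\barbS$), and rigorously certifying that the remainder is genuinely second order in each edge probability; the multi-layer bookkeeping that produces the precise constant $C$ is fiddly but routine once this per-filter bound is established.
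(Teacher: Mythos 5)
Your proposal is correct and follows essentially the same route as the paper's proof: both decompose each random shift as $\bbS_k=\barbS+\bbE_k$, isolate the same-position quadratic terms whose expectation yields the $M_dp(1-p)+M_aq(1-q)$ factor from the Bernoulli edge statistics, bound the per-position sensitivity by the Lipschitz constant $C_L$ of the multivariate frequency response, relegate all cross-position terms to the $\ccalO(p^2(1-p)^2)+\ccalO(q^2(1-q)^2)$ remainder, and propagate through the $L$ layers by a Lipschitz recursion. The only cosmetic difference is your opening move of majorizing the variance by the expected squared deviation from the mean-graph SGNN output, whereas the paper expands the variance of the filter output directly and invokes the variance-non-increasing property of $\sigma$ for the layer propagation; both lead to the same leading-order estimate.
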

\begin{proof}
	See Appendix \ref{pr:proofTheorem2}.
\end{proof} %\vspace{-1mm}
Theorem \ref{thm:varianceAnalysis} states that the variance of the SGNN output is upper bounded proportionally by the Lipschitz term $C_L^2$ and quadratically by the edge dropping $\backslash$ adding probability $p ~\backslash~ q$. The constant $C$ shows the role of the architecture, i.e., the number of features $F$, layers $L$ and nonlinearity $C_\sigma$. The result identifies three explicit factors that affect the variance:

%\smallskip
\begin{enumerate}[(1)]
	
	\item \emph{Filter property.} The first term $C_L^2$ captures the variation of the filter frequency response $h(\bblambda)$. The variance decreases with the Lipschitz constant $C_L$, which is determined by the learned parameters $\ccalH$. A smaller $C_L$ implies the frequency response $h(\bblambda)$ changes slower in the spectral domain; thus, it is more stable to frequency deviations induced by the graph stochasticity and leads to a lower variance. However, this flatter response reduces the filter capacity to discriminate between nearby spectral features, i.e., the filter has similar responses for graph frequencies that are close to each other. The latter indicates an implicit trade-off between decreasing the variance and increasing the discrimination power.
	
	\item \emph{Graph stochasticity.} The second term $M_dp(1\!-\!p) + M_a q(1-q)$ represents the impact of the graph stochasticity. The variance decreases when the number of dropping edges $M_d$ or adding edges $M_a$ is small. The variance decreases also when edges are stable ($p ~\backslash~ q \to 0$) or highly unstable ($p ~\backslash~ q \to 1$). The latter is because the maximal uncertainty on an edge is for $p = q = 0.5$. Such graph stochasticity depends typically on external factors (e.g., interference, attacks) or design choices (e.g., graph dropout). 
	
	\item \emph{SGNN architecture.} The third term $4 K \sum_{\ell=1}^L F^{2L-3} C_\sigma^{2\ell-2}/n$ indicates the effect of the SGNN architecture, which is the consequence of the graph stochasticity propagating through the nonlinearity ($C_\sigma$), filter banks ($F$), and layers ($L$). First, $C_\sigma$ is typically one implying the non-expansivity of the nonlinearity. Second, the variance decreases with an architecture of less layers $L$ or less features $F$. Both imply less stochastic graph filters, interact with less random components, and ultimately result in a lower variance. However, such an architecture may have a limiting representation capacity.
	
\end{enumerate}

The aforementioned analysis indicates that we can constrain the variance in three ways: (1) reducing the Lipschitz constant $C_L$; (2) reducing the number of random edges $M_d, M_a$ or edge probabilities $p, q$; (3) reducing the architecture width $F$ and depth $L$. However, (2) and (3) are determined at the outset and cannot be controlled during training. This implies that the variance-constrained learning keeps the variance bounded by tuning parameters $\ccalH$ to lower the Lipschitz constant $C_L$ of the stochastic graph filters [As. \ref{as:LipschitzFilter}]. Consequently, the stochastic graph filters exhibit flatter frequency responses and restricting the variance comes at the expense of the discrimination power. From this perspective, the variance bound $C_v$ cannot be set too small; i.e., if $C_v$ is small, $C_L$ decreases yielding a flatter frequency response; hence, a lower discrimination power. This is an implicit trade-off we have to cope with for improving the SGNN robustness to stochastic deviations. We also note that Theorem \ref{thm:varianceAnalysis} extends the variance analysis in \cite{gao2021stochastic}, which is the particular case when all edges are only dropped with a probability $p$.

\begin{remark}
The bound in \eqref{eq:varianceAnalysis} may be loose when $M_d$, $M_a$ are large and the graph changes dramatically, i.e., $p ~\backslash~ q$ are around $0.5$, essentially because this bound holds uniformly for all graphs. However, this result still shows that the SGNN output variance is bounded and there is a trade-off between the stochastic deviation robustness and the architecture discrimination power. In turn, this indicates how the variance-constrained learning affects the performance, which mechanism inside the SGNN is mostly responsible, and which are our handle to reduce this bound (potentially the output variance).
\end{remark} %\vspace{-2mm}

\section{Duality Gap}\label{sec:dualityGap}

%!TEX root = mainOp.tex
%%%%%%%%%%%%%%%%%%%%%%%%%%%%%%%%%%%%%%%%%%%%%%%%%%%%%%%%%%%%%%%%%%%%%%%%%%%%%%%%
%%%%                                                                        %%%%
%%%%                              CONVERGENCE                               %%%%
%%%%                                                                        %%%%
%%%%%%%%%%%%%%%%%%%%%%%%%%%%%%%%%%%%%%%%%%%%%%%%%%%%%%%%%%%%%%%%%%%%%%%%%%%%%%%%

We solved problem \eqref{eq:alternativeVarianceConstrainedProblem} in the dual domain, where there exists a duality gap $\mathbb{P} - \mathbb{D}$ between the primal and dual solutions. The null duality gap can be achieved for convex problems, while problem \eqref{eq:alternativeVarianceConstrainedProblem} is typically non-convex. %due to the nonlinearity of the SGNN. 
The latter makes it unclear how close is the dual solution $\mathbb{D}$ of \eqref{eq:dualFunction} to the primal solution $\mathbb{P}$ of \eqref{eq:alternativeVarianceConstrainedProblem}. In this section, we argue that the formulated problem could have a small duality gap despite its nonconvexity, which guarantees a small optimality loss caused by the dual transformation. To show such a result, we first consider a more general version of \eqref{eq:alternativeVarianceConstrainedProblem}, where we generalize the SGNN to an unparameterized function and the discrete set of shift operator sequences to a continuous set. Upon proving this generalized setting has a null duality gap, we then analyze the duality deviation induced by two generalizations and %; ultimately, 
characterize the duality gap of problem \eqref{eq:alternativeVarianceConstrainedProblem}. %\vspace{-2mm}

\subsection{Problem Generalization}

We can consider the SGNN $\bbPhi(\bbx;\bbS_{P:1}, \ccalH)$ as a parameterized model of a function $f(\bbx;\bbS_{P:1})$ that takes as inputs a graph signal $\bbx$ and a discrete sequence of shift operators $\bbS_{P:1} \in \ccalM$ and generates representational features as outputs. Problem \eqref{eq:alternativeVarianceConstrainedProblem} considers the expected objective and constraints over the discrete set $\ccalM$. The latter can be extended to a continuous set $\widetilde{\ccalM}$ via the following $\varepsilon$-Borel set \cite{srivastava2008course}.
\begin{definition}[$\varepsilon$-Borel set]\label{def1:BorelSet0}
	For a shift operator $\bbS_k$, the $\varepsilon$-Borel set of $\bbS_k$ is %defined as %\vspace{-2mm}
	\begin{align}\label{eq:BorelSet}
		\ccalB_\varepsilon(\bbS_{k}) \!:=\! \{ \widetilde{\bbS}_k \!\in\! \mathbb{R}^{n \times n}\!\!:\! \| \widetilde{\bbS}_k \!-\! \bbS_k \| \!\le\! \varepsilon \},~\!\for~k\!=\!1,...,P
	\end{align}
    where $\| \cdot \|$ is the $\ell_2$-norm.
\end{definition}  %\vspace{-1mm}

\noindent The $\varepsilon$-Borel set $\ccalB_\varepsilon(\bbS_{k})$ is a continuous set of shift operators $\widetilde{\bbS}_k$ and has countless points [cf. \eqref{eq:BorelSet}]. For each sequence of the shift operators $\bbS_{P:1}\!=\! \{ \bbS_1,\ldots,\bbS_P \} \!\in\! \ccalM$, we can construct the corresponding sequence of the $\varepsilon$-Borel sets $\ccalB_\varepsilon(\bbS_{P:1}) = \{ \ccalB_\varepsilon(\bbS_{1}),\ldots,\ccalB_\varepsilon(\bbS_{P}) \}$. The latter is a set of shift operator sequences $\widetilde{\bbS}_{P:1}\!=\! \{ \widetilde{\bbS}_1,\ldots,\widetilde{\bbS}_P \}$ with each $\widetilde{\bbS}_k \in \ccalB_\varepsilon(\bbS_{k})$ for $k=1,...,P$ and contains also countless points. Given two discrete sequences $\bbS_{P:1}^{(i)}, \bbS_{P:1}^{(j)} \in \ccalM$, the union of the respective $\varepsilon$-Borel set sequences $\big\{\ccalB_\varepsilon\big(\bbS_1^{(i)}\big),\ldots,\ccalB_\varepsilon\big(\bbS_P^{(i)}\big)\big\}$ and $\big\{\ccalB_\varepsilon\big(\bbS_1^{(j)}\big),\ldots,\ccalB_\varepsilon\big(\bbS_P^{(j)}\big)\big\}$ is defined as
\begin{align}\label{eq:unionOperation}
	&\bigcup_{\bbS_{P:1}\in \big\{\bbS_{P:1}^{(i)}, \bbS_{P:1}^{(j)}\big\}} \Big\{\ccalB_\varepsilon\big(\bbS_1\big),...,\ccalB_\varepsilon\big(\bbS_P\big)\Big\} \\ %\Big\{\ccalB_\varepsilon\big(\bbS_1^{(j)}\big),...,\ccalB_\varepsilon\big(\bbS_P^{(j)}\big)\Big\} \\
	&:= \Big\{\ccalB_\varepsilon\big(\bbS_1^{(i)}\big)\cup\ccalB_\varepsilon\big(\bbS_1^{(j)}\big),\ldots,\ccalB_\varepsilon\big(\bbS_P^{(i)}\big)\cup\ccalB_\varepsilon\big(\bbS_P^{(j)}\big)\Big\} \nonumber
\end{align}
which is also a set of shift operator sequences $\widetilde{\bbS}_{P:1} = \{\widetilde{\bbS}_1, \ldots,\widetilde{\bbS}_P\}$ with each $\widetilde{\bbS}_k \in \ccalB_\varepsilon\big(\bbS_k^{(i)}\big)\cup\ccalB_\varepsilon\big(\bbS_k^{(j)}\big)$ for $k=1,...,P$. This union contains all possible shift operator sequences that belong to the constituted $\varepsilon$-Borel set sequences. We then define the $\varepsilon$-Borel generalization $\widetilde{\ccalM}$ as follows.
\begin{definition}[$\varepsilon$-Borel generalization]\label{def1:BorelSet}
	The $\varepsilon$-Borel generalization of the discrete set $\ccalM$ with shift operator sequences $\bbS_{P:1}$ is defined as the union of the $\varepsilon$-Borel set sequences 
	\begin{align}\label{def:BorelSet}
		\widetilde{\ccalM} := \bigcup_{\bbS_{P:1}\in\ccalM}\Big\{ \ccalB_\varepsilon(\bbS_{1}),\ldots,\ccalB_\varepsilon(\bbS_{P}) \Big\}.
	\end{align}
    where $\bigcup_{\bbS_{P:1}\in\ccalM}$ stands for the union of all $\varepsilon$-Borel set sequences w.r.t. all sequences $\bbS_{P:1} \in \ccalM$ [cf. \eqref{eq:unionOperation}].
\end{definition}
The $\varepsilon$-Borel generalization $\widetilde{\ccalM}$ contains countless points $\widetilde{\bbS}_{P:1} = \{\widetilde{\bbS}_1, \ldots,\widetilde{\bbS}_P\}$ and the expectation of any function $\widetilde{f}(\bbx; \widetilde{\bbS}_{P:1})$ over $\widetilde{\ccalM}$ is 
\begin{align}
	\mathbb{E}[\bbPhi(\bbx;\widetilde{\bbS}_{P:1}, \ccalH)] = \int_{\widetilde{\bbS}_{P:1}\in\widetilde{\ccalM}}\widetilde{f}(\bbx;\widetilde{\bbS}_{P:1}) d \mu(\widetilde{\bbS}_{P:1}) 
\end{align}
where $\mu(\cdot)$ is the probability measure over $\widetilde{\ccalM}$. Such a probability measure is non-atomic, i.e., for any set $\ccalA \in \widetilde{\ccalM}$ with positive measure $\mu(\ccalA) > 0$, there always exists a subset $\ccalA' \subset \ccalA$ such that $0 < \mu(\ccalA') < \mu(\ccalA)$. Given the function $f(\bbx;\bbS_{P:1})$ and the $\varepsilon$-Borel generalization $\widetilde{\ccalM}$, problem \eqref{eq:alternativeVarianceConstrainedProblem} can be seen as a particular instance of
\begin{alignat}{3} \label{eq:generalizedVarianceConstrainedProblem}
	\widetilde{\mathbb{P}}:=   &\min_{\widetilde{f}} \mathbb{E}_{\widetilde{\ccalM}} \big[ \mathbb{E}_\ccalT \big[ \ccalC(\bby, \widetilde{f}(\bbx;\widetilde{\bbS}_{P:1}))\big] \big]             \\
	&  \st \frac{1}{n}\mbE_{\widetilde{\ccalM}} \Big[ \sum_{i=1}^n [\widetilde{f}(\bbx;\widetilde{\bbS}_{P:1})]_i \Big] \ge C_f, \nonumber \\
	&  ~~~~~\frac{1}{n}\mbE_{\widetilde{\ccalM}} \Big[ \sum_{i=1}^n [\widetilde{f}(\bbx;\widetilde{\bbS}_{P:1})]^2_i \Big] \le C_s \nonumber
\end{alignat}
where $\widetilde{f}(\bbx;\widetilde{\bbS}_{P:1})$ is the function defined on $\widetilde{\ccalM}$ and $\widetilde{\bbS}_{P:1}$ is a sequence of random shift operators in $\widetilde{\ccalM}$. We now establish the strong duality for problem \eqref{eq:generalizedVarianceConstrainedProblem}.
\begin{proposition}\label{prop:strongDualityGeneralized}
	Suppose there exists a feasible solution $\widetilde{f}(\bbx;\widetilde{\bbS}_{P:1})$ satisfying the constraints in \eqref{eq:generalizedVarianceConstrainedProblem} with strict inequality. Then, problem \eqref{eq:generalizedVarianceConstrainedProblem} has a null duality gap $\widetilde{\mathbb{P}} = \widetilde{\mathbb{D}}$.
\end{proposition}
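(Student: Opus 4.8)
The plan is to prove strong duality through the \emph{perturbation (value) function} associated with \eqref{eq:generalizedVarianceConstrainedProblem}, following the functional-optimization approach of \cite{chamon2020functional}. Writing the two constraints as $g_1(\widetilde{f}) := C_f - \frac{1}{n}\mbE_{\widetilde{\ccalM}}\big[\sum_{i=1}^n [\widetilde{f}(\bbx;\widetilde{\bbS}_{P:1})]_i\big] \le 0$ and $g_2(\widetilde{f}) := \frac{1}{n}\mbE_{\widetilde{\ccalM}}\big[\sum_{i=1}^n [\widetilde{f}(\bbx;\widetilde{\bbS}_{P:1})]^2_i\big] - C_s \le 0$, I would introduce for $\bbu=(u_1,u_2)$ the value function
\[
	P(\bbu) := \min_{\widetilde{f}} \; \mathbb{E}_{\widetilde{\ccalM}}\big[\mathbb{E}_\ccalT[\ccalC(\bby,\widetilde{f}(\bbx;\widetilde{\bbS}_{P:1}))]\big] \quad \text{s.t.}\quad g_1(\widetilde{f}) \le u_1,\; g_2(\widetilde{f}) \le u_2,
\]
so that $\widetilde{\mathbb{P}} = P(\mathbf{0})$. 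It is standard that the dual value $\widetilde{\mathbb{D}}$ equals the value of the convex biconjugate of $P$ at the origin; hence it suffices to show (i) that $P$ is convex and (ii) that Slater's condition holds so that $P$ is subdifferentiable at $\mathbf{0}$, which together yield $\widetilde{\mathbb{P}}=P(\mathbf{0})=P^{**}(\mathbf{0})=\widetilde{\mathbb{D}}$.

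The core of the argument --- and the step I expect to be the main obstacle --- is establishing convexity of $P$, since the integrands (in particular the quadratic term $[\widetilde{f}]_i^2$ and the loss $\ccalC$) are not convex in any underlying decision. Here the generalization to the continuous, \emph{non-atomic} set $\widetilde{\ccalM}$ is decisive. Fix perturbations $\bbu_1,\bbu_2$ with near-optimal feasible functions $\widetilde{f}_1,\widetilde{f}_2$ and a weight $\theta\in[0,1]$. I would consider the finitely many signed measures $\ccalA \mapsto \int_\ccalA \psi \, d\mu$ on $\widetilde{\ccalM}$, where $\psi$ ranges over the objective, first-moment, and second-moment integrands evaluated at $\widetilde{f}_1$ and at $\widetilde{f}_2$. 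Because $\mu$ is non-atomic, Lyapunov's convexity theorem guarantees a measurable set $\ccalA\subseteq\widetilde{\ccalM}$ for which every one of these integrals is split in the proportion $\theta:(1-\theta)$ simultaneously. Defining the time-shared function $\widetilde{f}_\theta := \widetilde{f}_1$ on $\ccalA$ and $\widetilde{f}_\theta := \widetilde{f}_2$ on $\widetilde{\ccalM}\setminus\ccalA$, each expectation in the objective and the two constraints equals the exact convex combination $\theta(\cdot)_1+(1-\theta)(\cdot)_2$. Hence $\widetilde{f}_\theta$ is feasible for the perturbation $\theta\bbu_1+(1-\theta)\bbu_2$ and attains objective value $\theta P(\bbu_1)+(1-\theta)P(\bbu_2)$, so $P(\theta\bbu_1+(1-\theta)\bbu_2)\le \theta P(\bbu_1)+(1-\theta)P(\bbu_2)$.

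With convexity in hand, I would invoke Slater's condition: by hypothesis there is a strictly feasible $\widetilde{f}$, so $\mathbf{0}$ lies in the interior of the domain of $P$, where $P$ is finite and continuous. The convex epigraph of $P$ then admits a non-vertical supporting hyperplane at $(\mathbf{0},P(\mathbf{0}))$, whose slope $-\bbgamma^*$ (with $\bbgamma^*\ge \mathbf{0}$, since relaxing the constraints cannot increase the minimum) is exactly an optimal dual variable; evaluating the supporting-hyperplane inequality gives $\widetilde{\mathbb{D}} = \min_{\widetilde{f}} \ccalL(\widetilde{f},\bbgamma^*) = P(\mathbf{0}) = \widetilde{\mathbb{P}}$. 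The only delicate points beyond the convexity step are ensuring the function class over $\widetilde{\ccalM}$ is rich enough to realize the time-shared selection $\widetilde{f}_\theta$ as an admissible decision, and verifying the measurability and integrability required for Lyapunov's theorem --- both accommodated by the unparameterized functional setting and the boundedness of $\widetilde{f}$ implied by the second-moment constraint.
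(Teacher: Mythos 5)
Your proof is correct, but it is a genuinely different write-up from the paper's. The paper's proof is a two-line reduction: it introduces the moment variables $\bbz_1 = \mbE_{\widetilde{\ccalM}}[\widetilde{f}]$, $\bbz_2 = \mbE_{\widetilde{\ccalM}}[\widetilde{f}^2]$, observes that the constraints are convex (in fact linear) functions of $\bbz$, and then invokes \cite[Theorem 1]{chamon2020functional} as a black box, the non-atomicity of $\mu$ over $\widetilde{\ccalM}$ having been set up beforehand precisely so that theorem applies. You instead re-derive the content of that citation from first principles: convexity of the perturbation function $P(\bbu)$ via Lyapunov's convexity theorem applied to the finitely many signed measures generated by the objective and constraint integrands at the two competitors, followed by Slater's condition to obtain a non-vertical supporting hyperplane and hence $P(\mathbf{0})=P^{**}(\mathbf{0})=\widetilde{\mathbb{D}}$. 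This is exactly the mechanism underlying the cited result, so the mathematics coincides; what your version buys is self-containedness and a transparent identification of where non-atomicity and the unparameterized (decomposable) function class are used, at the cost of having to police the measurability/integrability details you flag at the end. Two minor points to tighten if you wrote this out in full: state that $\widetilde{\mathbb{P}}>-\infty$ (e.g.\ the loss is bounded below) so the biconjugate argument is non-degenerate, and note that since your $g_1,g_2$ are affine in the moments the time-shared $\widetilde{f}_\theta$ meets the interpolated constraints with equality, so feasibility for $\theta\bbu_1+(1-\theta)\bbu_2$ is immediate.
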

\begin{proof}
Define $\bbz_1 = \mbE_{\widetilde{\ccalM}} \big[\widetilde{f}(\bbx;\widetilde{\bbS}_{P:1})\big]$, $\bbz_2 = \mbE_{\widetilde{\ccalM}} \big[\widetilde{f}(\bbx;\widetilde{\bbS}_{P:1})^2\big]$ where $(\cdot)^2$ is the pointwise square operation, and $\bbz = \big[\bbz_1^\top, \bbz_2^\top\big]^\top$. Let $g_1(\bbz) = \sum_{i=1}^n [\bbz_1]_i$, $g_2(\bbz)=\sum_{i=1}^n [\bbz_2]_i$ be functions of $\bbz$. Substituting these representations into problem \eqref{eq:generalizedVarianceConstrainedProblem} yields
\begin{alignat}{3} \label{eq:proofProp12}
	\widetilde{\mathbb{P}}:=   &\min_{\widetilde{f}} \mathbb{E}_{\widetilde{\ccalM}} \left[ \mathbb{E}_\ccalT \left[ \ccalC(\bby, \widetilde{f}(\bbx;\widetilde{\bbS}_{P:1}))\right] \right],             \\
	&  \st~ -g_1(\bbz) + C_f \le 0,~g_2(\bbz) - C_s \le 0, \nonumber \\
	&  ~~~~~~ \bbz = \big[\mbE_{\widetilde{\ccalM}} \big[\widetilde{f}(\bbx;\widetilde{\bbS}_{P:1})\big]^\top, \mbE_{\widetilde{\ccalM}} \big[\widetilde{f}(\bbx;\widetilde{\bbS}_{P:1})^2\big]^\top\big]. \nonumber
\end{alignat}
Since $-g_1(\bbz)$ and $g_2(\bbz)$ are convex functions of $\bbz$, problem \eqref{eq:proofProp12} can be considered as a sparse functional program \cite{chamon2020functional}. By using \cite[Theorem 1]{chamon2020functional}, we prove the strong duality $\widetilde{\mathbb{P}} = \widetilde{\mathbb{D}}$. Note that $-g_1(\bbz)$ and $g_2(\bbz)$ are also composite functions of ($\bbx$, $\widetilde{\bbS}_{P:1}$), which integrally may not be convex. But from the condition in \cite{chamon2020functional}, we need only the outer form convex but not the composite form.  
\end{proof}
\noindent That is, problem \eqref{eq:generalizedVarianceConstrainedProblem} can be solved in the dual domain without loss of optimality. We leverage this results to characterize the duality gap of problem \eqref{eq:alternativeVarianceConstrainedProblem} where the SGNN $\bbPhi(\bbx;\bbS_{P:1}, \ccalH)$ operates over a discrete set $\ccalM$.

\begin{remark}\label{remark1}
	 Proposition \ref{prop:strongDualityGeneralized} proves the null duality gap for the general version of the surrogate problem \eqref{eq:alternativeVarianceConstrainedProblem}. If we were to consider the general version of the original problem \eqref{eq:varianceConstrainedProblem}, we would have not proven such strong duality. This is because the variance constraint in problem \eqref{eq:varianceConstrainedProblem} takes the form \vspace{-2mm}
	 \begin{align}
	 	g(\bbz) - C_v \le 0~\text{with}~ g(\bbz) = \sum_{i=1}^n [\bbz_2]_i - [\bbz_1]_i^2.
	 \end{align}
     Since $g(\bbz)$ is a non-convex function of $\bbz$, the conditions of \cite[Theorem 1]{chamon2020functional} do not apply. 
\end{remark} %\vspace{-4mm}

\subsection{Duality Analysis}\label{subsec:nearUniversal}

\begin{figure*}%
	\centering
	\begin{subfigure}{0.65\columnwidth}
		\includegraphics[width=1.0\linewidth, height = 0.75\linewidth]{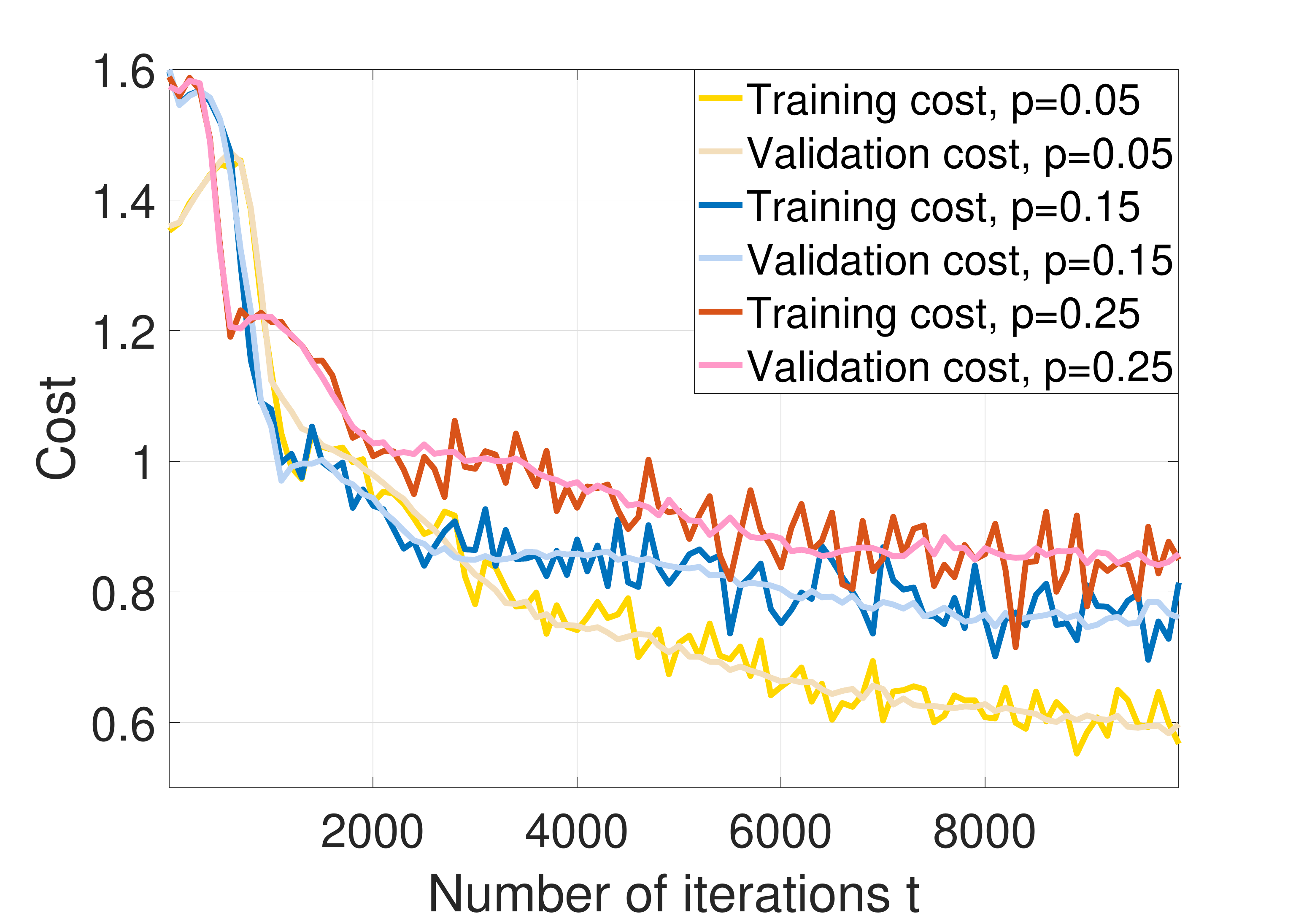}%
		\caption{}%
		\label{fig0}%
	\end{subfigure}\hfill\hfill%\quad%
	\begin{subfigure}{0.65\columnwidth}
		\includegraphics[width=1.0\linewidth,height = 0.75\linewidth]{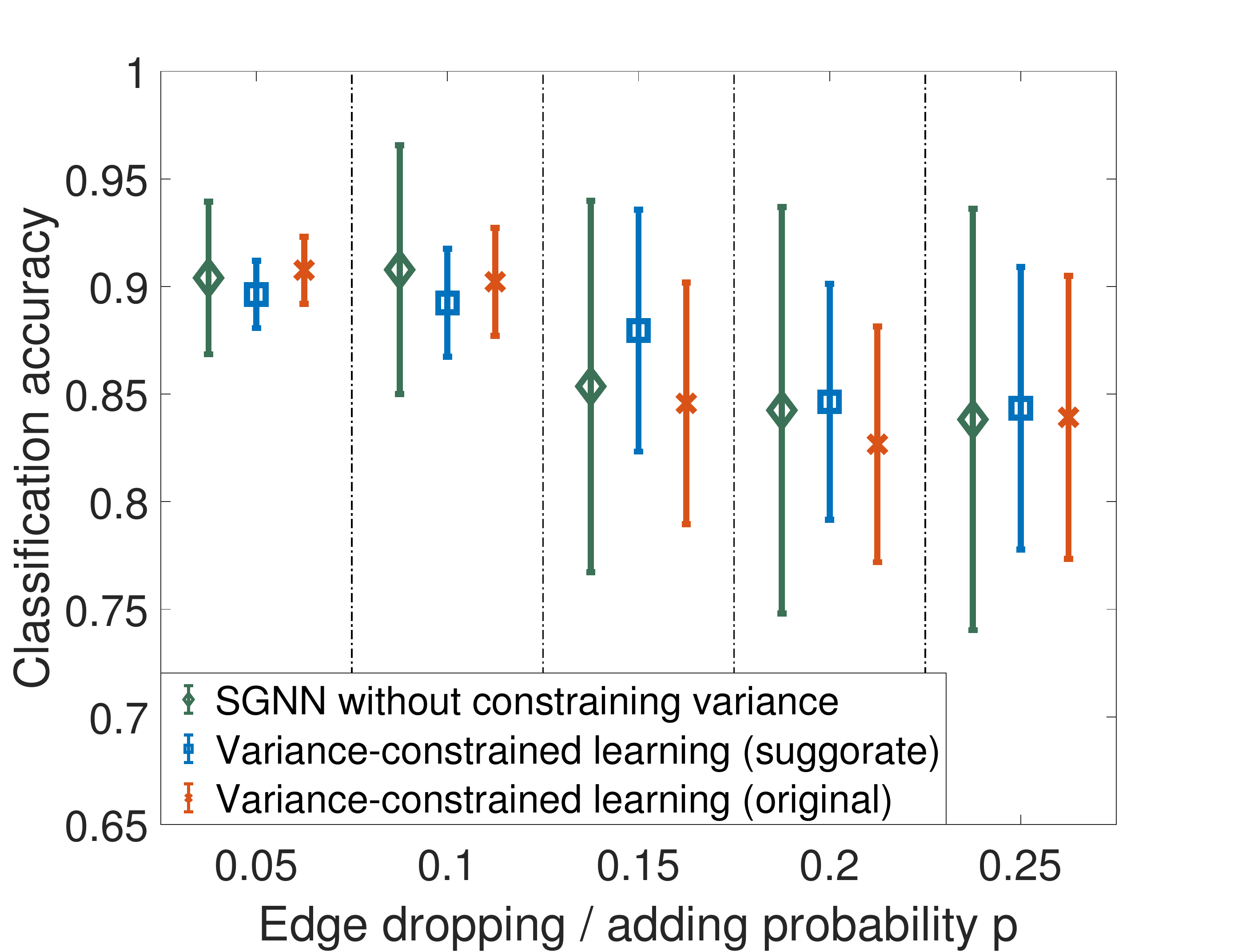}%
		\caption{}%
		\label{fig2}%
	\end{subfigure}\hfill\hfill%\quad%
	\begin{subfigure}{0.65\columnwidth}
		\includegraphics[width=1.0\linewidth,height = 0.75\linewidth]{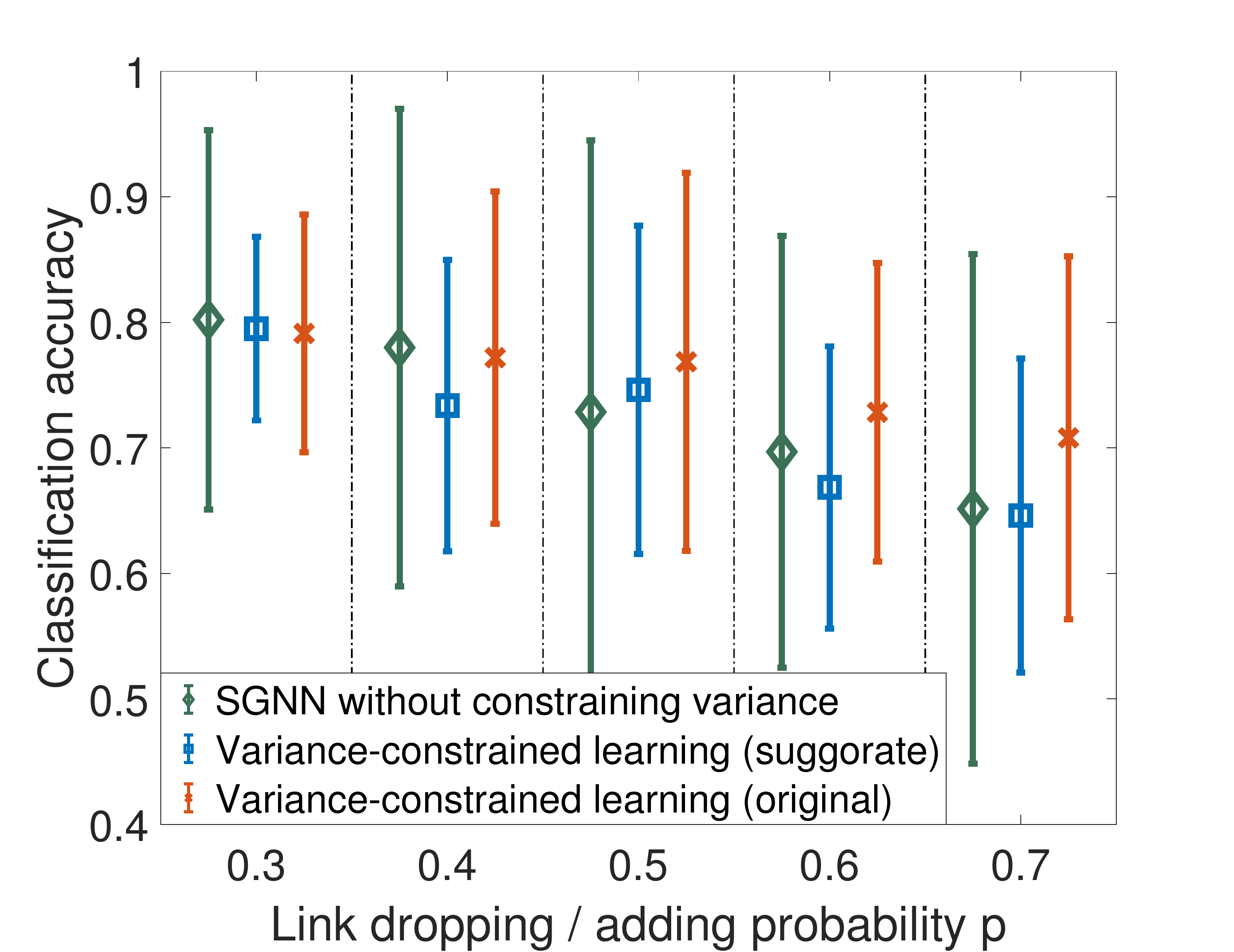}%
		\caption{}%
		\label{fig333}%
	\end{subfigure}
	\caption{(a) Convergence of the cost with different edge dropping probabilities. (b) Expected classification accuracy and standard deviation with and without (w/o) the variance-constrained learning for source localization. (c) Performance with large edge dropping probabilities $p$.
	}\label{fig:vary_n1}%\vspace{-5mm}
\end{figure*}

We now analyze the duality deviation induced by the problem generalization. First, we particularize the function $\widetilde{f}(\bbx;\widetilde{\bbS}_{P:1})$ to the SGNN $\bbPhi(\bbx;\widetilde{\bbS}_{P:1}, \ccalH)$ via the %concept of 
\emph{$\epsilon$-universal parameterization}. %\vspace{-1mm}
\begin{definition}[$\eps$-universal parameterization]\label{def:universalParameterization}
A parameterization\footnote{A parameterization is defined as a mathematical model that represents some mapping as a function of some independent parameters} $\bbPhi(\bbx;\widetilde{\bbS}_{P:1}, \ccalH)$ is $\epsilon$-universal if for any function $\widetilde{f}(\bbx;\widetilde{\bbS}_{P:1})$ in the considered domain, there exist a set of parameters $\ccalH$ such that %\vspace{-1mm}
\begin{align}
	\mathbb{E}_{\widetilde{\ccalM}}\big[\| \widetilde{f}(\bbx;\widetilde{\bbS}_{P:1}) - \bbPhi(\bbx;\widetilde{\bbS}_{P:1}, \ccalH) \|^2\big] \le \epsilon^2
\end{align}
where the expectation $\mathbb{E}_{\widetilde{\ccalM}}[\cdot]$ is over the generalized set $\widetilde{\ccalM}$ of the shift operator sequence $\widetilde{\bbS}_{P:1}$.
\end{definition} %\vspace{-1mm}
\noindent %That is, 
An $\epsilon$-universal parametrization can model any function in the considered domain within some accuracy $\epsilon$. Such a property has been shown in a number of machine learning architectures, including radial basis function networks \cite{park1991universal}, reproducing kernel Hilbert spaces \cite{sriperumbudur2010relation}, and deep neural networks \cite{hornik1991approximation}.
\begin{assumption}\label{as:universalParameterization}
	For a given %deep (high $L$) and wide (high $F$) enough 
	SGNN $\bbPhi(\bbx;\widetilde{\bbS}_{P:1}, \ccalH)$, there exists a finite accuracy $\epsilon>0$ such that the SGNN is an $\epsilon$-universal parameterization w.r.t. the generalized set $\widetilde{\ccalM}$.
\end{assumption} %\vspace{-1mm}
\noindent Assumption \ref{as:universalParameterization} %is a reasonable assumption that 
implies that for the considered SGNN, there exists some finite $\epsilon > 0$ to make it an $\eps$-universal parameterization. The value of $\eps$ depends on the representational capacity of the considered SGNN, i.e., a deeper (high $L$) and wider (high $F$) SGNN may have a higher representational capacity and we may choose a smaller $\eps$ for it w.r.t. a stronger $\eps$-universal parameterization. This property characterizes the deviation induced by particularizing $\widetilde{f}(\bbx;\widetilde{\bbS}_{P:1})$ to $\bbPhi(\bbx;\widetilde{\bbS}_{P:1}, \ccalH)$ and will be reflected in the duality gap. Second, we particularize the continuous set $\widetilde{\ccalM}$ to the discrete set $\ccalM$. The relation between these two sets is characterized by the $\varepsilon$-Borel set [Def. \ref{def1:BorelSet0}]. To proceed the analysis, we assume the following. %\vspace{-1mm}
%.
\begin{assumption}\label{as:LipschitzCost}
	The loss $\ccalC(\cdot, \cdot)$ is Lipschitz over $\ccalT = \{(\bbx,\bby)\}$, i.e., for any $\bby_1$ and $\bby_2$, there exists a constant $C_\ell$ such that %\vspace{-1mm}
	\begin{align}%\label{eq:LipschitzCost}
		&\big| \mathbb{E}_\ccalT [\ccalC(\bby, \bby_1)] \!-\! \mathbb{E}_\ccalT [\ccalC(\bby, \bby_2)] \big| \!\le\! C_\ell \| \bby_1 \!-\! \bby_2 \|. 
	\end{align} 
\end{assumption} %\vspace{-2mm}
\begin{assumption}\label{as:inputBound}
	The SGNN output $\bbPhi(\bbx;\bbS_{P:1}, \ccalH)$ is bounded, i.e., there exists a constant $C_y$ s.t. $\|\bbPhi(\bbx;\bbS_{P:1}, \ccalH)\| \le C_y$.
\end{assumption} %\vspace{-1mm}
%.
\noindent Assumption \ref{as:LipschitzCost} is a continuity statement on the loss $\ccalC(\cdot, \cdot)$, which is common in optimization theory \cite{boyd2004convex} and holds for popular classification and regression losses. Assumption \ref{as:inputBound} considers the SGNN output bounded by a constant $C_y$ independent of the filter coefficients, which has been proven %is mild and 
for the SGNN in Lemma \ref{lemma:outputBound} of Appendix \ref{lemmasProof}.

The following theorem shows the duality gap of problem \eqref{eq:alternativeVarianceConstrainedProblem}.  %\vspace{-1mm}
\begin{theorem}\label{thm:dualityGap}
	Consider problem \eqref{eq:alternativeVarianceConstrainedProblem} with primal and dual solutions $\mathbb{P}$ and $\mathbb{D}$, respectively. Let the SGNN $\bbPhi(\bbx;\bbS_{P:1}, \ccalH)$ be of $L$ layers comprising $F$ filters of order $K$. Let the frequency response \eqref{eq:generalizeFrequencyResponse} of these filters satisfy Assumption \ref{as:LipschitzFilter} with $C_L$ and the nonlinearity $\sigma(\cdot)$ satisfy Assumptions \ref{as:LipschitzNonlinearity1}-\ref{as:LipschitzNonlinearity2} with $C_\sigma$. Let also the SGNN satisfy Assumption \ref{as:universalParameterization} w.r.t. the $\varepsilon$-Borel generalization $\widetilde{\ccalM}$ with $\epsilon$, its output be bounded according to Assumption \ref{as:inputBound} with $C_y$, and the cost function $\ccalC(\cdot, \cdot)$ satisfy Assumption \ref{as:LipschitzCost} with $C_\ell$. Then, the duality gap of problem \eqref{eq:alternativeVarianceConstrainedProblem} is bounded by
\begin{align}\label{eq:mainThmDualityGap}
	|\mathbb{P}\!-\!\mathbb{D}| \!\le & \Big( C_\ell + \frac{\widetilde{\gamma}_1^*}{\sqrt{n}} \!+\! \widetilde{\gamma}_2^*\big(\frac{2C_y}{\sqrt{n}} \!+\! \frac{\eps}{n}\big)\! \Big) \eps \!+\!	C \varepsilon \!+\! \ccalO(\varepsilon^2)
\end{align}
where $\widetilde{\bbgamma}^* = [\widetilde{\gamma}_1^*,\widetilde{\gamma}_2^*]^\top$ is the optimal dual variable of problem \eqref{eq:generalizedVarianceConstrainedProblem} and $C$ is a constant related to the SGNN architectural properties -- see the expression of $C$ in \eqref{proof:prop3eq20}.
\end{theorem}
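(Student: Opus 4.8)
The plan is to sandwich the parameterized pair $(\mathbb{P},\mathbb{D})$ between the generalized pair $(\widetilde{\mathbb{P}},\widetilde{\mathbb{D}})$, which already satisfies the null duality gap $\widetilde{\mathbb{P}}=\widetilde{\mathbb{D}}$ by Proposition \ref{prop:strongDualityGeneralized}. Writing $|\mathbb{P}-\mathbb{D}| \le |\mathbb{P}-\widetilde{\mathbb{P}}| + |\widetilde{\mathbb{P}}-\widetilde{\mathbb{D}}| + |\widetilde{\mathbb{D}}-\mathbb{D}| = |\mathbb{P}-\widetilde{\mathbb{P}}| + |\widetilde{\mathbb{D}}-\mathbb{D}|$, it suffices to control the primal deviation and the dual deviation separately. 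Each is in turn split into two effects: (i) replacing the arbitrary function $\widetilde{f}$ by the SGNN $\bbPhi(\bbx;\cdot,\ccalH)$, controlled by the $\eps$-universal parameterization [Def. \ref{def:universalParameterization}, As. \ref{as:universalParameterization}], and (ii) replacing the continuous set $\widetilde{\ccalM}$ by the discrete set $\ccalM$, controlled by the $\varepsilon$-Borel construction [Def. \ref{def1:BorelSet0}] through the Lipschitz dependence of the SGNN output on the shift operators. The $\eps$-bracket in \eqref{eq:mainThmDualityGap} will come from effect (i) on the primal side, while the $C\varepsilon+\ccalO(\varepsilon^2)$ term collects effect (ii).

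I would first treat the dual deviation, the easier direction. Since $\widetilde{\bbgamma}^*$ maximizes the generalized dual and $\widetilde{\mathbb{D}} = \min_{\widetilde{f}}\widetilde{\ccalL}(\widetilde{f},\widetilde{\bbgamma}^*)$ for the Lagrangian $\widetilde{\ccalL}$ of \eqref{eq:generalizedVarianceConstrainedProblem}, and since the SGNN family is a subclass of all functions, I evaluate $\mathbb{D}\ge\min_{\ccalH}\ccalL(\ccalH,\widetilde{\bbgamma}^*)$ [cf. \eqref{eq:Lagrangian}] and replace the discrete expectation $\mathbb{E}_\ccalM$ by the continuous $\mathbb{E}_{\widetilde{\ccalM}}$ term by term. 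This gives $\mathbb{D}\ge\min_{\ccalH}\widetilde{\ccalL}(\bbPhi(\bbx;\cdot,\ccalH),\widetilde{\bbgamma}^*) - (C\varepsilon+\ccalO(\varepsilon^2))\ge\widetilde{\mathbb{D}} - (C\varepsilon+\ccalO(\varepsilon^2))$, where the last inequality holds because minimizing over the SGNN subclass can only increase the value, so no parameterization error enters. The key estimate is that for a fixed $\ccalH$ and any $\widetilde{\bbS}_k\in\ccalB_\varepsilon(\bbS_k)$, the filter Lipschitz property [As. \ref{as:LipschitzFilter}] propagated through the $L$ layers, $F$ features, order-$K$ filters and $C_\sigma$-Lipschitz nonlinearity bounds $\|\bbPhi(\bbx;\widetilde{\bbS}_{P:1},\ccalH)-\bbPhi(\bbx;\bbS_{P:1},\ccalH)\|$ by $C\varepsilon+\ccalO(\varepsilon^2)$; the cross terms among the $P$ simultaneously perturbed shift operators produce the $\ccalO(\varepsilon^2)$ remainder.

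Second, I would bound the primal deviation $\mathbb{P}-\widetilde{\mathbb{P}}$, which is where the $\eps$-bracket and the multipliers $\widetilde{\gamma}_1^*,\widetilde{\gamma}_2^*$ arise. Starting from the optimal $\widetilde{f}^*$ of \eqref{eq:generalizedVarianceConstrainedProblem}, I use As. \ref{as:universalParameterization} to produce an SGNN with $\mathbb{E}_{\widetilde{\ccalM}}[\|\widetilde{f}^*-\bbPhi\|^2]\le\eps^2$ and quantify the induced perturbations of the objective and the two moment constraints: by As. \ref{as:LipschitzCost} and Jensen the objective moves by at most $C_\ell\eps$; by Cauchy--Schwarz with $\|\bbone\|=\sqrt{n}$ the first-moment functional moves by at most $\eps/\sqrt{n}$; and writing $\sum_i(\widetilde{f}_i^2-\bbPhi_i^2)=\langle\widetilde{f}-\bbPhi,\widetilde{f}+\bbPhi\rangle$ with the output bound $C_y$ [As. \ref{as:inputBound}, justified in Lemma \ref{lemma:outputBound}] the second-moment functional moves by at most $(2C_y/\sqrt{n}+\eps/n)\eps$. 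Converting these $\eps$-sized constraint violations into an objective penalty through the perturbation (sensitivity) inequality of the strongly dual generalized problem weights them by $\widetilde{\gamma}_1^*$ and $\widetilde{\gamma}_2^*$, which assembles the bracketed $\eps$-term; adding the $\widetilde{\ccalM}\!\to\!\ccalM$ discretization contribution $C\varepsilon+\ccalO(\varepsilon^2)$ and summing with the dual deviation yields \eqref{eq:mainThmDualityGap}.

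The main obstacle I anticipate is the feasibility/sensitivity step in the primal bound: the approximating SGNN only satisfies the moment constraints up to $\eps$-sized slacks, so it is not directly admissible. One must either restore strict feasibility by mixing with the Slater point guaranteed in Proposition \ref{prop:strongDualityGeneralized} (paying a controlled cost) or, equivalently, invoke the perturbation inequality to charge the violations against the optimal multipliers; this is precisely what forces $\widetilde{\bbgamma}^*$ to appear in \eqref{eq:mainThmDualityGap} and is the delicate part of the argument. A secondary, bookkeeping-heavy burden is establishing the uniform Lipschitz-in-shift constant $C$: propagating As. \ref{as:LipschitzFilter} through the full cascade of filter banks, nonlinearities, features and layers to obtain the explicit architecture-dependent constant of \eqref{proof:prop3eq20}, while isolating the higher-order $\ccalO(\varepsilon^2)$ contributions from products of the $P$ perturbations.
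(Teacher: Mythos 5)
Your proposal is correct and follows essentially the same route as the paper's proof: sandwich $(\mathbb{P},\mathbb{D})$ against the strongly dual generalized problem, charge the $\eps$-universal parameterization error to the objective and the two moment constraints (with exactly the constants $C_\ell$, $1/\sqrt{n}$, and $2C_y/\sqrt{n}+\eps/n$), convert the resulting constraint slacks into an objective penalty via the perturbation inequality weighted by $\widetilde{\gamma}_1^*,\widetilde{\gamma}_2^*$, and absorb the $\widetilde{\ccalM}\to\ccalM$ discretization through the Lipschitz stability of the SGNN in the shift operators. The only divergence is on the dual side, where you pay an extra $C\varepsilon$ term, whereas the paper obtains the exact inequality $\mathbb{D}\ge\widetilde{\mathbb{D}}$ with no error by extending any $f$ on $\ccalM$ to a $\widetilde{f}$ that is constant on each $\varepsilon$-Borel set (so the discrete and continuous Lagrangians coincide); this only inflates your constant $C$, not the form of the bound.
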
 %\vspace{-4mm}
\begin{proof}
	See Appendix \ref{pr:proofTheorem3}.
\end{proof} %\vspace{-1mm}
\noindent The result indicates that the duality gap is induced by two types of errors: the parameterization error $\epsilon$ of the SGNN [Def. \ref{def:universalParameterization}] and the generalization error $\varepsilon$ of the set [Def. \ref{def1:BorelSet}]. The parameterization error is present in the first term of \eqref{eq:mainThmDualityGap}, which becomes small when the SGNN exhibits a strong representational capacity to approximate unparameterized functions. This is an irreducible error that tells how well the SGNN covers the function space and exists for any GNN solutions. The generalization error is present in the second term of \eqref{eq:mainThmDualityGap}, which can be sufficiently small by considering small Borel sets that satisfy Assumption \ref{as:universalParameterization}. A small duality gap indicates that solving the problem in the dual domain comes with a contained optimality loss, compared to solving it directly in the primal domain,
which justifies the primal-dual learning procedure.

Theorem \ref{thm:dualityGap} discusses the duality gap induced by solving problem \eqref{eq:alternativeVarianceConstrainedProblem} in the dual domain exactly. However, \emph{it is still unclear if the primal-dual learning procedure [Alg.~\ref{alg:primalDualAlgorithm}] even converges to a neighboorhood of the dual solution $\mathbb{D}$.} In the next section, we answer this question affirmative %under mild conventional assumptions 
and combine the convergence error with the duality gap to provide a unified performance analysis. %\vspace{-2mm}

\section{Convergence}\label{sec:Convergence}

%!TEX root = mainOp.tex

%%%%%%%%%%%%%%%%%%%%%%%%%%%%%%%%%%%%%%%%%%%%%%%%%%%%%%%%%%%%%%%%%%%%%%%%%%%%%%%%
%%%%                                                                        %%%%
%%%%                              EXPERIMENTS                               %%%%
%%%%                                                                        %%%%
%%%%%%%%%%%%%%%%%%%%%%%%%%%%%%%%%%%%%%%%%%%%%%%%%%%%%%%%%%%%%%%%%%%%%%%%%%%%%%%%

\begin{figure*}%
	\centering
	\begin{subfigure}{0.485\columnwidth}
		\includegraphics[width=1\linewidth, height = 0.75\linewidth]{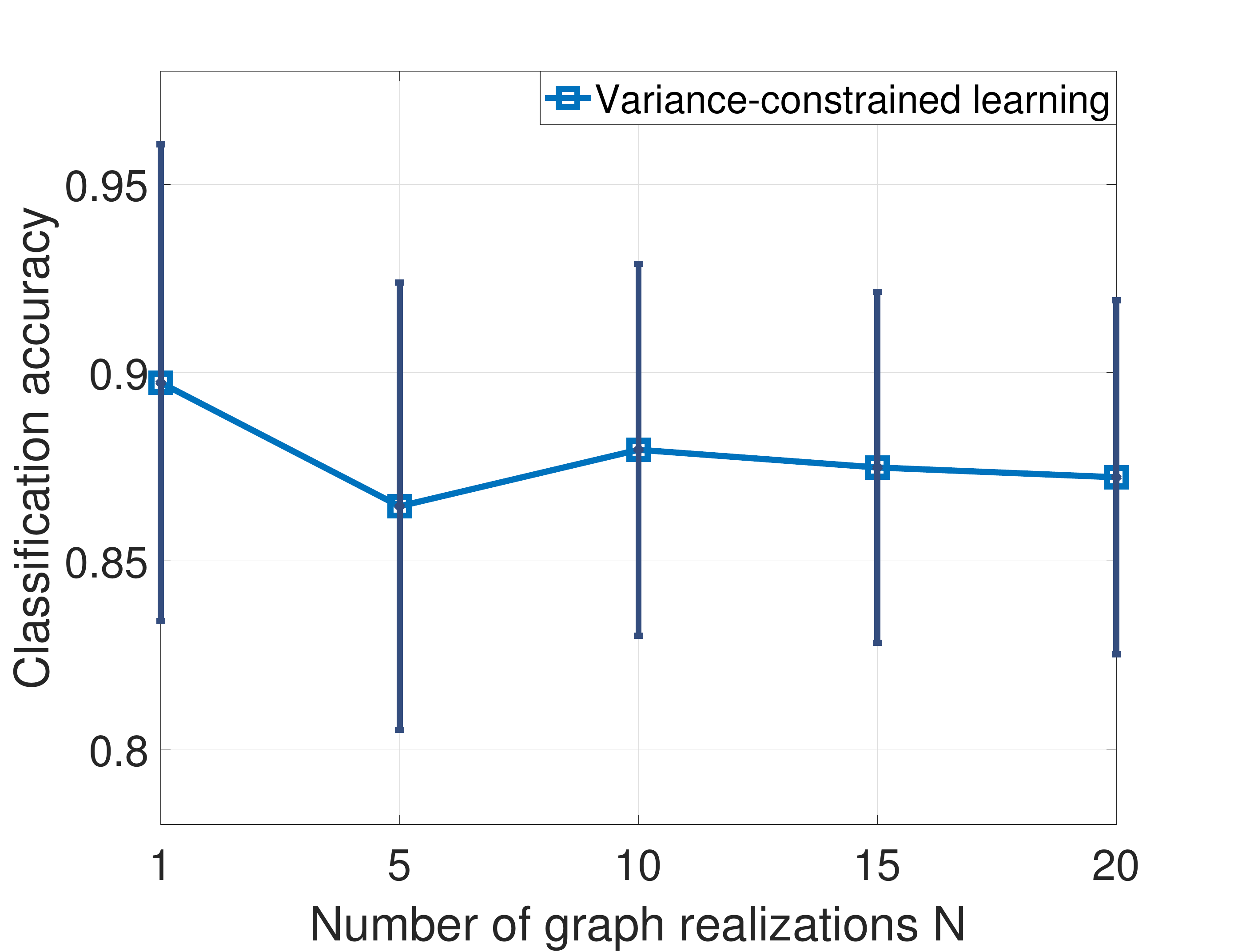}%
		\caption{}%
		\label{fig_different_samples}%
	\end{subfigure}\hfill\hfill%
	\begin{subfigure}{0.485\columnwidth}
		\includegraphics[width=1\linewidth, height = 0.75\linewidth]{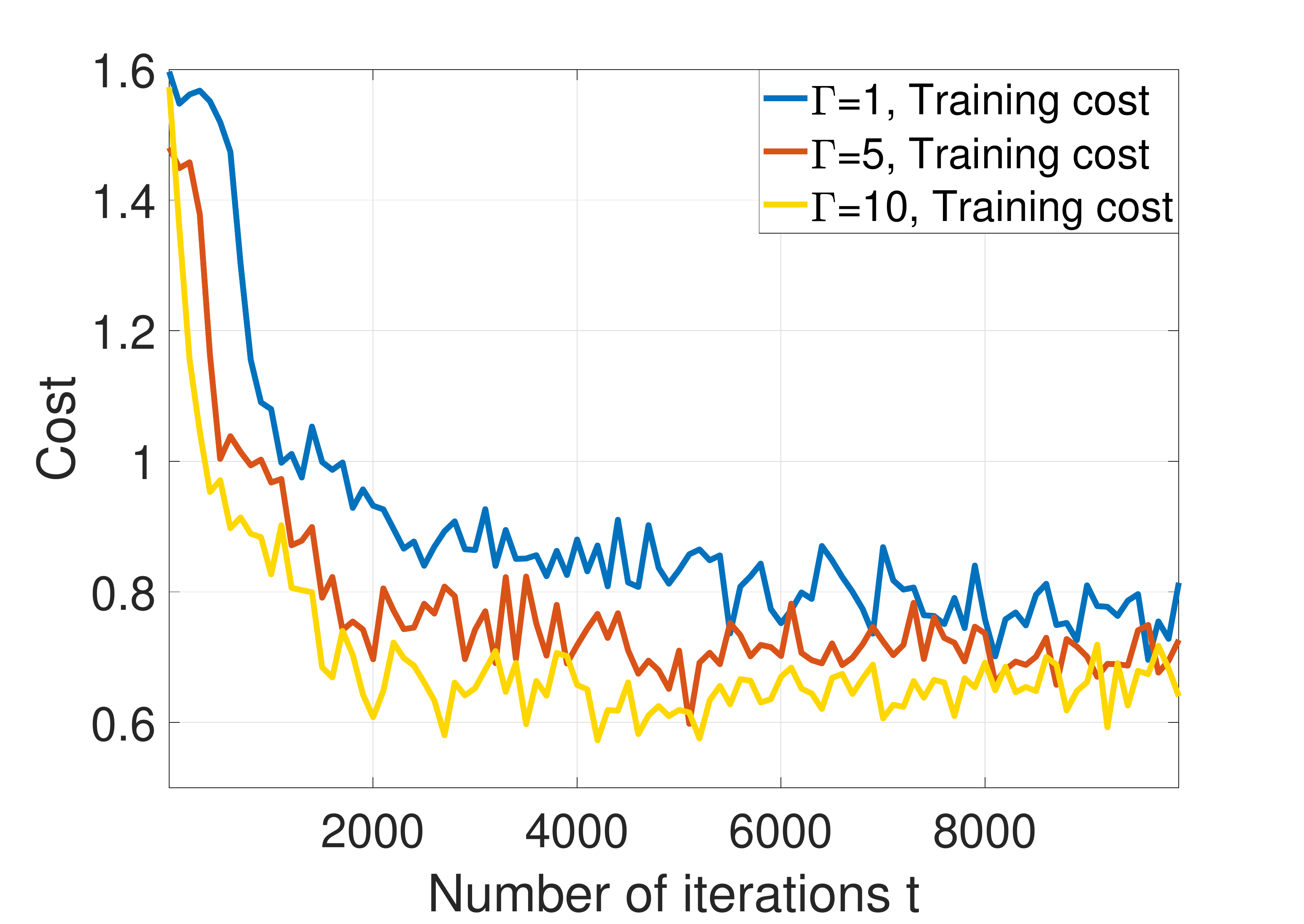}%
		\caption{}%
		\label{fig_different_steps}%
	\end{subfigure}\hfill\hfill%
	\begin{subfigure}{0.485\columnwidth}
		\includegraphics[width=1\linewidth,height = 0.75\linewidth]{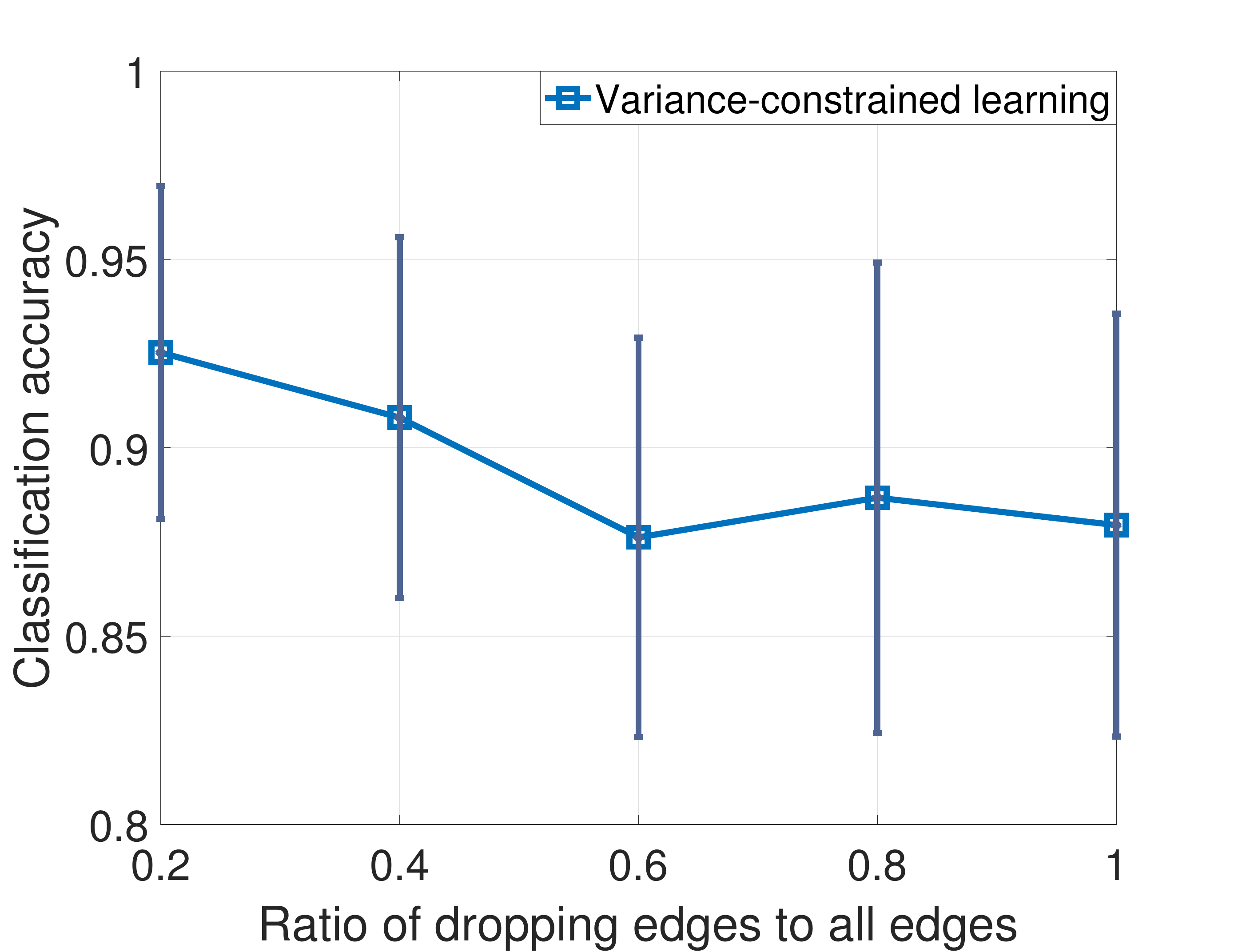}%
		\caption{}%
		\label{fig_different_edges}%
	\end{subfigure}\hfill\hfill%
	\begin{subfigure}{0.485\columnwidth}
		\includegraphics[width=1\linewidth,height = 0.75\linewidth]{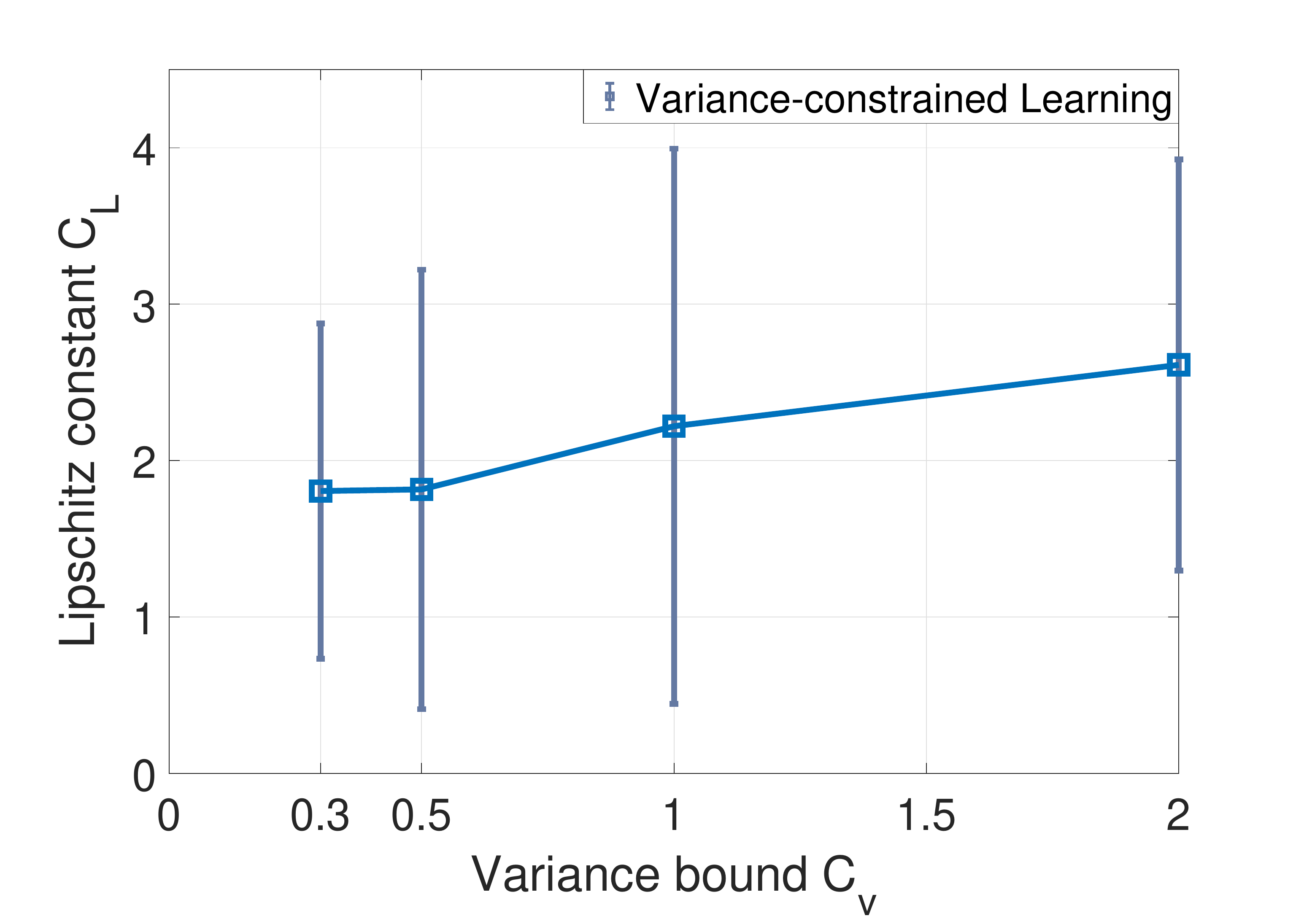}%
		\caption{}%
		\label{fig_Lipschitz}%
	\end{subfigure}
	\caption{(a) Expected classification accuracy and standard deviation with different numbers of GRES($p,q$) realizations $N$. (b) Convergence of the cost with different numbers of gradient steps $\Gamma$ at the primal phase. (c) Expected classification accuracy and standard deviation with different numbers of dropping edges $M_d$. (d) Expected Lipschitz constant $C_L$ with different variance bounds $C_v$. }\label{fig:vary_hyparameters}%\vspace{-5mm}
\end{figure*}

The main challenge to prove the convergence of the primal-dual learning, stands in the facts that we approximate the minimization at the primal phase with stochastic gradient descent [cf. \eqref{eq:dualFunction}] and every $\Gamma$ primal updates we run a single dual update. To characterize this convergence, we make the following mild assumption. %\vspace{-2mm}
\begin{assumption}\label{as:primalOptimalityLoss}
 Let $\ccalH^*$ be the minima of the Lagrangian $\mathcal{L}(\ccalH,\bbgamma)$ [cf. \eqref{eq:Lagrangian}] and $\ccalH^{(\Gamma)}$ the approximate solution obtained by the primal phase with gradient descent [cf. \eqref{eq_priup}]. There exists a constant $\xi \ge 0$ such that for any dual variable $\bbgamma \in \mathbb{R}_+$, it holds that %\vspace{-2mm}
	\begin{align}\label{eq:primalOptimalLoss}
		| \mathcal{L}(\ccalH^*,\bbgamma) - \mathcal{L}(\ccalH^{(\Gamma)},\bbgamma) | \le \xi.
	\end{align} 
\end{assumption} %\vspace{-1mm}
\noindent That is, the gradient descent applied at the primal phase solves the dual function $\mathcal{D}(\bbgamma) = \min_{\ccalH} \mathcal{L}(\ccalH,\bbgamma)$ within an error neighborhood $\xi$. The value of $\xi$ depends on the performance of the gradient descent, which has exhibited success in a wide array of optimization problems \cite{bottou2012stochastic}. The following theorem then establishes the convergence result. %\vspace{-1mm}

\begin{theorem}\label{thm:convergencePrimalDual}
	Consider the primal-dual learning for problem \eqref{eq:alternativeVarianceConstrainedProblem} [Alg. \ref{alg:primalDualAlgorithm}]. Let the SGNN output satisfy Assumption \ref{as:inputBound} with $C_y$ and the primal phase satisfy Assumption \ref{as:primalOptimalityLoss} with $\xi$. Then, for any desirable accuracy $\delta > 0$, Algorithm \ref{alg:primalDualAlgorithm} converges to an error neighborhood of the dual solution $\mathbb{D}$ of problem \eqref{eq:alternativeVarianceConstrainedProblem} as %\vspace{-2mm}
\begin{align}\label{eq:convergenceResult}
	&|\ccalL(\ccalH^{(\Gamma)}_T\!, \bbgamma_T)\!-\! \mathbb{D}| \!\le\! 2\xi \!+\! \frac{\Big( \!\big(C_f \!+\! \frac{C_y}{\sqrt{n}}\big)^2 \!+\! \big(C_s \!+\! \frac{C_y^2}{n}\big)^2 \Big)}{2}\eta_\gamma \!+ \delta
\end{align}
    in at most $T$ iterations with $T \le \|\bbgamma_0 - \bbgamma^*\|^2/(2 \eta_\gamma \delta)$, where $\bbgamma_0$ and $\bbgamma^*$ are the initial and optimal dual variables for the dual problem [cf. \eqref{eq:dualFunction}], and $\eta_\gamma$ is the dual step-size.
\end{theorem}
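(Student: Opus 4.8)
The plan is to run a standard projected dual-ascent analysis on the dual variable $\bbgamma$, tracking the squared distance $\|\bbgamma_t - \bbgamma^*\|^2$ to the optimal dual variable $\bbgamma^*$ and absorbing the inexact primal solve through Assumption \ref{as:primalOptimalityLoss}. First I would observe that the dual update \eqref{eq:dualUpdate} is exactly projected gradient ascent $\bbgamma_{t+1} = [\bbgamma_t + \eta_\gamma \bbh_t]_+$ on the Lagrangian, where $\bbh_t := \nabla_\bbgamma \ccalL(\ccalH_{t+1},\bbgamma_t)$ is the vector of constraint slacks evaluated at the approximate primal iterate $\ccalH_{t+1} = \ccalH^{(\Gamma)}_t$ [cf. \eqref{eq_priup}]. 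Since $\bbgamma^* \in \mathbb{R}^2_+$ and $[\cdot]_+$ is the non-expansive Euclidean projection onto $\mathbb{R}^2_+$, expanding the square yields
\begin{equation*}
\|\bbgamma_{t+1}-\bbgamma^*\|^2 \le \|\bbgamma_t-\bbgamma^*\|^2 + 2\eta_\gamma\langle \bbh_t, \bbgamma_t-\bbgamma^*\rangle + \eta_\gamma^2\|\bbh_t\|^2 .
\end{equation*}

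Second, I would bound the last two terms. For the quadratic term, Assumption \ref{as:inputBound} together with Cauchy--Schwarz gives $\big|\tfrac1n\mbE_\ccalM[\sum_i[\bbPhi(\bbx;\bbS_{P:1},\ccalH)]_i]\big| \le C_y/\sqrt n$ and $\tfrac1n\mbE_\ccalM[\sum_i[\bbPhi(\bbx;\bbS_{P:1},\ccalH)]^2_i] \le C_y^2/n$, so each slack is bounded and $\|\bbh_t\|^2 \le (C_f+C_y/\sqrt n)^2 + (C_s+C_y^2/n)^2 =: B$, precisely the coefficient appearing in \eqref{eq:convergenceResult}. For the cross term, the key observation is that the Lagrangian \eqref{eq:Lagrangian} is \emph{affine} in $\bbgamma$ for fixed $\ccalH$; hence $\langle\bbh_t,\bbgamma_t-\bbgamma^*\rangle = \ccalL(\ccalH_{t+1},\bbgamma_t) - \ccalL(\ccalH_{t+1},\bbgamma^*)$ holds with equality, which circumvents having to bound the error between the true supergradient of $\mathcal{D}$ and the slack computed at the inexact minimizer. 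Invoking Assumption \ref{as:primalOptimalityLoss} gives $\ccalL(\ccalH_{t+1},\bbgamma_t) \le \mathcal{D}(\bbgamma_t) + \xi$, whereas $\ccalL(\ccalH_{t+1},\bbgamma^*) \ge \min_\ccalH \ccalL(\ccalH,\bbgamma^*) = \mathbb{D}$, so the cross term is at most $\mathcal{D}(\bbgamma_t) + \xi - \mathbb{D}$.

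Third, substituting these bounds and rearranging into $2\eta_\gamma(\mathbb{D}-\mathcal{D}(\bbgamma_t)-\xi) \le \|\bbgamma_t-\bbgamma^*\|^2 - \|\bbgamma_{t+1}-\bbgamma^*\|^2 + \eta_\gamma^2 B$, I would telescope over $t=0,\dots,T-1$ so that the distance terms collapse to $\|\bbgamma_0-\bbgamma^*\|^2$. Selecting the best (equivalently, the stopping) iterate and choosing $T \le \|\bbgamma_0-\bbgamma^*\|^2/(2\eta_\gamma\delta)$ drives the residual distance term below $\delta$, producing $\mathbb{D}-\mathcal{D}(\bbgamma_T) \le \xi + \tfrac{\eta_\gamma B}{2} + \delta$. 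Finally, converting this dual-function gap into a statement about the reported value $\ccalL(\ccalH^{(\Gamma)}_T,\bbgamma_T)$ via the two-sided estimate $|\ccalL(\ccalH^{(\Gamma)},\bbgamma) - \mathcal{D}(\bbgamma)| \le \xi$ of Assumption \ref{as:primalOptimalityLoss} contributes a second $\xi$, yielding the claimed bound \eqref{eq:convergenceResult}.

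I expect the main obstacle to be the rigorous handling of the inexact primal phase: the ascent direction $\bbh_t$ is the slack at the approximate minimizer $\ccalH^{(\Gamma)}$, not a genuine supergradient of the concave dual function $\mathcal{D}$, so a naive concavity argument does not apply directly. Exploiting the affineness of $\ccalL$ in $\bbgamma$ to rewrite the inner product as a difference of Lagrangian values---and then controlling that difference through Assumption \ref{as:primalOptimalityLoss}---is the crucial maneuver. Keeping careful track of which estimates are exact and which incur a $\xi$ penalty (one in the telescoped dual gap, one in the final conversion to the reported Lagrangian value) is exactly what produces the precise $2\xi$ constant rather than a single $\xi$.
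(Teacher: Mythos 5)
Your proposal is correct and follows essentially the same route as the paper's proof in Appendix D: non-expansiveness of the projection plus expansion of $\|\bbgamma_{t+1}-\bbgamma^*\|^2$, the bound $\|\bbh_t\|^2\le (C_f+C_y/\sqrt n)^2+(C_s+C_y^2/n)^2$ via Assumption \ref{as:inputBound}, the rewriting of the cross term as a Lagrangian difference (the paper does this implicitly through $\ccalD(\bbgamma^*)-\ccalD(\bbgamma_t)\le \ccalL(\ccalH^{(\Gamma)},\bbgamma^*)-\ccalL(\ccalH^{(\Gamma)},\bbgamma_t)+\xi$), and the telescoping/stopping-time argument yielding $T\le\|\bbgamma_0-\bbgamma^*\|^2/(2\eta_\gamma\delta)$ followed by the final $|\ccalL(\ccalH^{(\Gamma)}_T,\bbgamma_T)-\ccalD(\bbgamma_T)|\le\xi$ conversion. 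Your accounting of the two separate $\xi$ penalties matches the paper exactly.
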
 %\vspace{-4mm}

\begin{proof}
  See Appendix \ref{pr:theorem4}.
\end{proof} %\vspace{-1mm}

\noindent Theorem \ref{thm:convergencePrimalDual} states that the primal-dual learning converges to an error neighborhood of the dual solution within a finite number of iterations that is inversely proportional to the desirable accuracy $\delta$. The error size depends on the suboptimality of the solution of the primal phase and the step-size of the dual phase. Inspecting \eqref{eq:convergenceResult}, the error size consists of three terms:

%\smallskip
\begin{enumerate}[(1)]
	
	\item The first term $2\xi$ decreases when we perform sufficient gradient steps at the primal phase and the parameters $\ccalH_t^{(\Gamma)}$ [cf. \eqref{eq_priup}] are close to the optimal $\ccalH_{t}^*$ at iteration $t$.
	
	\item The second term is proportional to the dual step-size $\eta_\gamma$, which could be set sufficiently small [cf. \eqref{eq:dualUpdate}].
	
	\item The third term $\delta$ is inversely proportional to the number of iterations $T$, which decreases if we run the primal-dual learning for more iterations.
	
\end{enumerate}

By combining Theorems \ref{thm:dualityGap}-\ref{thm:convergencePrimalDual}, we can characterize completely the solution suboptimality of the primal-dual learning procedure w.r.t. both the duality gap and the iterative method. %\vspace{-1mm}
\begin{corollary}\label{coro:optimallossPrimalDual}
	Under the same settings of Theorems \ref{thm:dualityGap}-\ref{thm:convergencePrimalDual}, the suboptimality of solving problem \eqref{eq:alternativeVarianceConstrainedProblem} with the primal-dual learning procedure can be bounded as %\vspace{-2mm}
	\begin{align}\label{eq:optimalLoss}
		|\ccalL(\ccalH_T^{(\Gamma)}\!,\! \bbgamma_T) \!-\! \mathbb{P}| &\le C_1 \eps \!+\! 2 \xi \!+\! C_2 \eta_\gamma \!+\! \sigma \!+\! C_3 \varepsilon \!+\! \ccalO(\varepsilon^2)
	\end{align} 
    where constants $C_1, C_2$ are specified in \eqref{eq:mainThmDualityGap} and constant $C_3$ is specified in \eqref{eq:convergenceResult}.
\end{corollary}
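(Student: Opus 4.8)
The plan is to derive \eqref{eq:optimalLoss} from a single triangle inequality that inserts the dual optimum $\mathbb{D}$ of problem \eqref{eq:alternativeVarianceConstrainedProblem} between the terminal Lagrangian value $\ccalL(\ccalH_T^{(\Gamma)}, \bbgamma_T)$ produced by Algorithm \ref{alg:primalDualAlgorithm} and the primal optimum $\mathbb{P}$. Concretely, I would start from
\[
	|\ccalL(\ccalH_T^{(\Gamma)}, \bbgamma_T) - \mathbb{P}| \le |\ccalL(\ccalH_T^{(\Gamma)}, \bbgamma_T) - \mathbb{D}| + |\mathbb{D} - \mathbb{P}|,
\]
so that the two resulting terms are exactly the quantities already controlled by Theorem \ref{thm:convergencePrimalDual} and Theorem \ref{thm:dualityGap}, respectively. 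Each term is then replaced by its established bound and the constants are collected.

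First I would invoke Theorem \ref{thm:convergencePrimalDual} on the iterate-to-dual term. Since the corollary inherits the hypotheses of that theorem (Assumption \ref{as:inputBound} with $C_y$ and Assumption \ref{as:primalOptimalityLoss} with $\xi$), the bound \eqref{eq:convergenceResult} applies verbatim and gives $|\ccalL(\ccalH_T^{(\Gamma)}, \bbgamma_T) - \mathbb{D}| \le 2\xi + C_2 \eta_\gamma + \delta$, where $C_2 = \tfrac{1}{2}\big((C_f + C_y/\sqrt{n})^2 + (C_s + C_y^2/n)^2\big)$ is read off directly as the coefficient of $\eta_\gamma$ in \eqref{eq:convergenceResult}. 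Next I would invoke Theorem \ref{thm:dualityGap} on the duality-gap term. The corollary also inherits the assumptions of that theorem (Assumptions \ref{as:LipschitzFilter}--\ref{as:LipschitzNonlinearity2}, \ref{as:universalParameterization}, \ref{as:inputBound}, and \ref{as:LipschitzCost}), so \eqref{eq:mainThmDualityGap} yields $|\mathbb{D} - \mathbb{P}| \le C_1 \eps + C_3 \varepsilon + \ccalO(\varepsilon^2)$, with $C_1 = C_\ell + \widetilde{\gamma}_1^*/\sqrt{n} + \widetilde{\gamma}_2^*(2C_y/\sqrt{n} + \eps/n)$ and $C_3 = C$ the architectural constant from \eqref{proof:prop3eq20}.

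Summing the two displays and regrouping by the error source (parameterization $\eps$, primal suboptimality $\xi$, dual step-size $\eta_\gamma$, iteration accuracy $\delta$, and set generalization $\varepsilon$) gives precisely \eqref{eq:optimalLoss}, which completes the argument.

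The argument is essentially bookkeeping, so there is no single hard technical step; the only point to verify with care is that the two theorems are invoked under a mutually consistent set of assumptions, which the phrase ``under the same settings of Theorems \ref{thm:dualityGap}--\ref{thm:convergencePrimalDual}'' guarantees, and in particular that the shared constant $C_y$ (Assumption \ref{as:inputBound}) enters both bounds in the same way. I expect the $\sigma$ term in \eqref{eq:optimalLoss} to be the accuracy slack $\delta$ of Theorem \ref{thm:convergencePrimalDual}, so I would reconcile that notation between the two statements when presenting the final bound.
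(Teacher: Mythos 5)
Your proof is correct and follows exactly the route the paper intends: a triangle inequality through the dual optimum $\mathbb{D}$, then substituting the bounds of Theorem \ref{thm:convergencePrimalDual} and Theorem \ref{thm:dualityGap}. Your reading of the $\sigma$ in \eqref{eq:optimalLoss} as the accuracy slack $\delta$ from \eqref{eq:convergenceResult} is also the right reconciliation of the notation.
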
 %\vspace{-1mm}
\noindent This result indicates that the proposed variance-constrained learning converges to a solution $\ccalL(\ccalH_T^{(\Gamma)}, \bbgamma_T)$ in the dual domain within a finite number of iterations, which is close to the optimal solution $\mathbb{P}$ of the formulated problem \eqref{eq:alternativeVarianceConstrainedProblem}. %\vspace{-1mm}

\begin{remark}
	The convergence result \eqref{eq:convergenceResult} holds when the primal phase obtains parameters in a neighborhood of the global solution [cf. \eqref{eq:primalOptimalLoss}]. Since working with neural networks is in a non-convex setting, it is likely to obtain parameters close to a local minima. In this context, \eqref{eq:convergenceResult} indicates what can be achieved at best via the primal-dual learning and what is our handle to control it. We corroborate next that the variance-constrained learning converges satisfactorily in numerical simulations.
\end{remark} %\vspace{-4mm}

\section{Numerical Simulations} \label{numer}

%!TEX root = mainOp.tex

%%%%%%%%%%%%%%%%%%%%%%%%%%%%%%%%%%%%%%%%%%%%%%%%%%%%%%%%%%%%%%%%%%%%%%%%%%%%%%%%
%%%%                                                                        %%%%
%%%%                              EXPERIMENTS                               %%%%
%%%%                                                                        %%%%
%%%%%%%%%%%%%%%%%%%%%%%%%%%%%%%%%%%%%%%%%%%%%%%%%%%%%%%%%%%%%%%%%%%%%%%%%%%%%%%%

We compare the variance-constrained learning with the vanilla GNN and the SGNN using synthetic data from source localization and real data from recommender systems \cite{harper2015movielens}. The vanilla GNN is the standard GNN trained over the deterministic underlying graph \cite{gama2020graphs} and has the same architecture hyper-parameters as the SGNN. In the stochastic setting, the vanilla GNN has shown a lower performance compared with the SGNN \cite{gao2021stochastic} and thus, we focus principally on the comparison with the latter and report the performance of the vanilla GNN as a baseline. For all architectures we tested both the stochastic gradient descent and the ADAM optimizer \cite{Ba2010} for training, while used the latter because it has consistently shown a better performance. The learning rate is $\mu = 10^{-3}$ and decaying factors are $\beta_1=0.9, \beta_2=0.999$. The assumptions made in Sec. \ref{DOL} - \ref{sec:Convergence} typically hold for these practical applications, where the graph signals, graph eigenvalues and architecture parameters are of finite values, while the assumption constants depends on specific problem settings that vary among different applications. Moreover, these are assumed properties for theoretical analysis to shed insights on the proposed algorithm but are not necessary for the algorithm implementation. 

\subsection{Source Localization}

\begin{figure*}%
	\centering
	\begin{subfigure}{0.65\columnwidth}
		\includegraphics[width=1.0\linewidth, height = 0.75\linewidth]{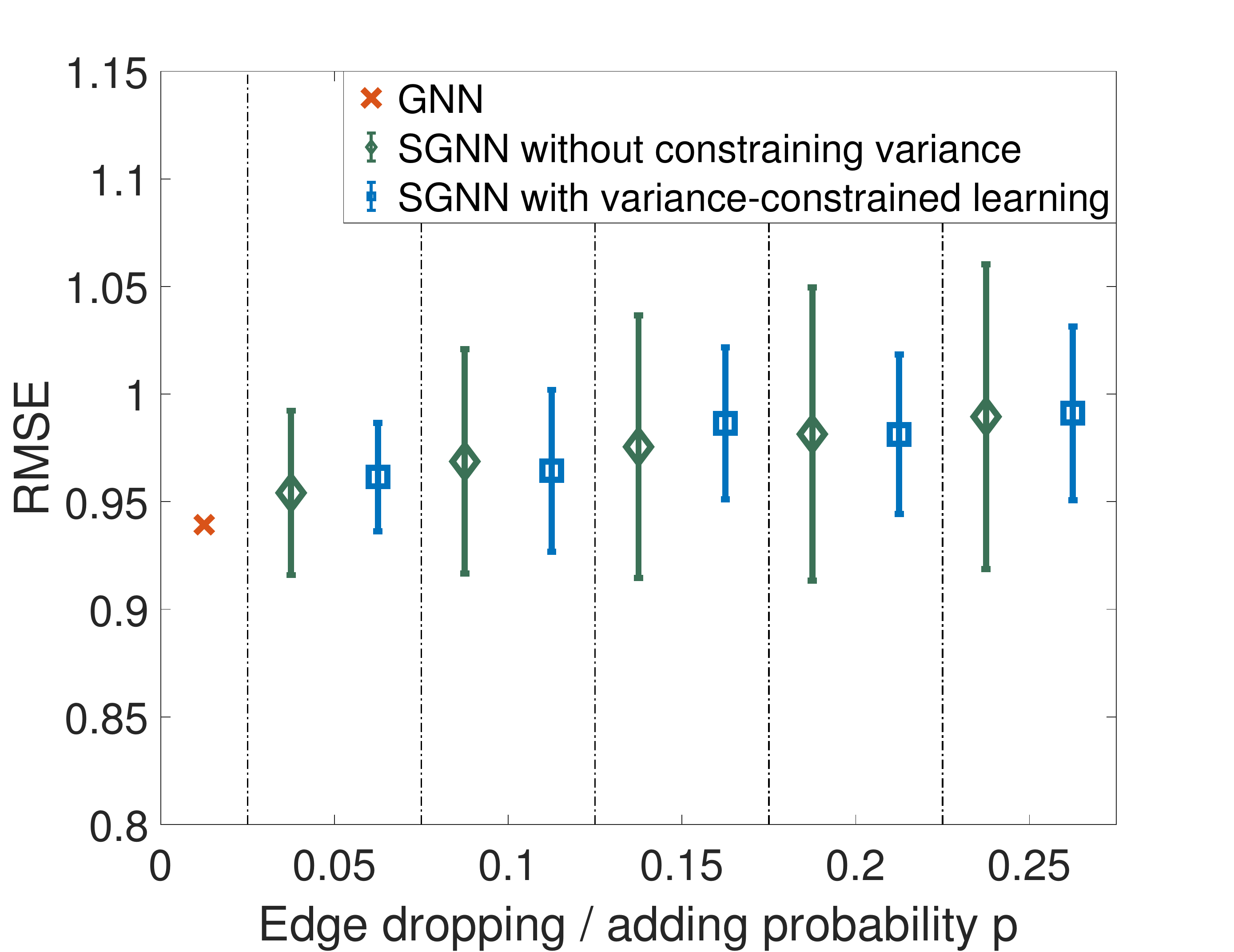}%
		\caption{}%
		\label{subfig:recommendationRMSE}%
	\end{subfigure}\hfill\hfill%
	\begin{subfigure}{0.65\columnwidth}
		\includegraphics[width=1.0\linewidth,height = 0.75\linewidth]{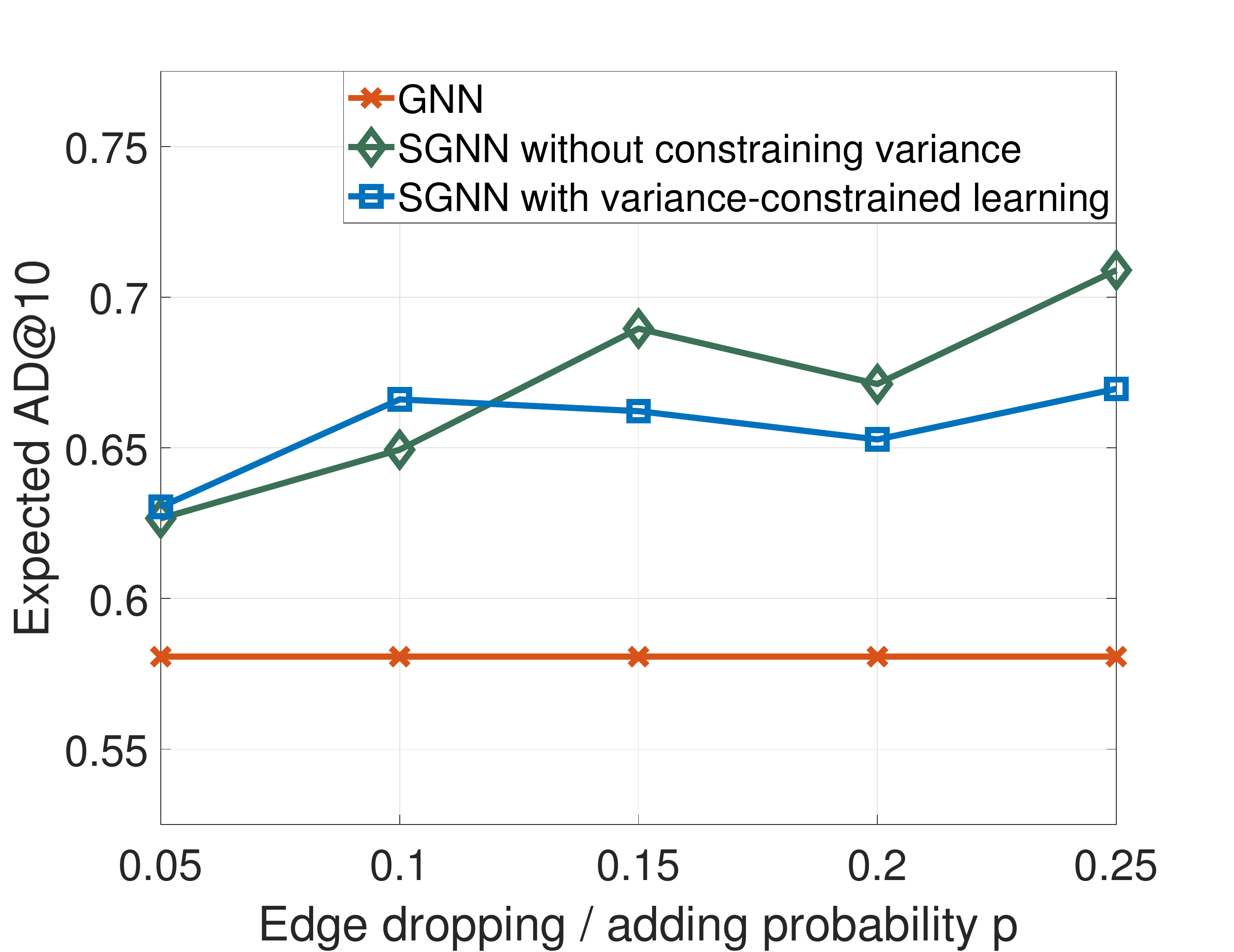}%
		\caption{}%
		\label{subfig:recommendationAD}%
	\end{subfigure}\hfill\hfill%
	\begin{subfigure}{0.65\columnwidth}
		\includegraphics[width=1.0\linewidth,height = 0.75\linewidth]{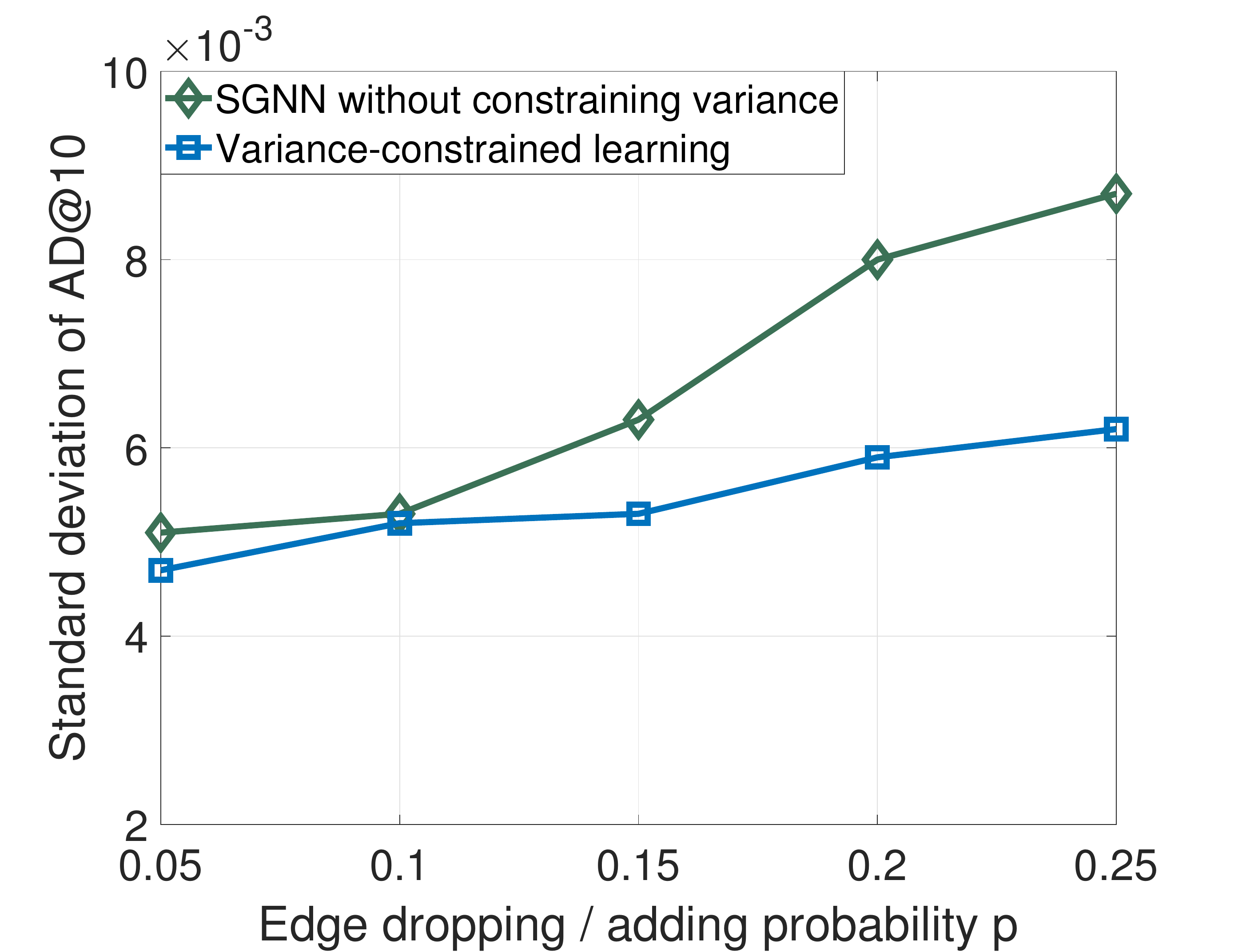}%
		\caption{}%
		\label{subfig:recommendationAD2}%
	\end{subfigure}
	\caption{(a) Expected RMSE and standard deviation of the GNN, the SGNN w/o the variance-constrained learning for movie recommendation. (b)-(c) Expected AD and standard deviation of the GNN, the SGNN w/o the variance-constrained learning for movie recommendation. }\label{fig:recommendation}%\vspace{-5mm}
\end{figure*}

We consider a diffusion process over a stochastic block model (SBM) graph of $50$ nodes divided into $5$ communities, where the intra- and inter-community edge probabilities are $0.8$ and $0.2$. The goal is to find the community originating the diffusion distributively at a node. The initial graph signal is a Kronecker delta $\bbdelta_{s} \in \reals^{50}$ originated at a source node $s \in \{ s_1,\ldots,s_5 \}$ of a community, where $\{ s_1,\ldots,s_5 \}$ are the five source nodes of five communities respectively. The signal at time $t$ is $\bbx_{s}^{(t)} = \bbS^t \bbdelta_{s} +\bbn$ with $\bbn \in \reals^{50}$ a zero-mean Gaussian noise. We generate $15000$ samples by randomly selecting a source node $s$ and a diffused time $t \in [0,50]$, which are split into $10000$, $2500$, and $2500$ samples for training, validation, and testing, respectively. We consider all edges of the nominal graph may fall with a probability $p$ due to channel fading effects and no edges are added during testing, according to the GRES($p,q$) model with $q=0$. The SGNN has two layers, each with $F=32$ filters of order $K=8$ and the ReLU nonlinearity. The mini-batch contains $50$ samples and the cost function is the cross entropy. The constraint bounds are selected empirically from the confidential interval $[0, 1]$ via validation and are set to $C_f = 0, C_s = 0.5$, i.e., $C_v=0.5$ according to \eqref{eq:resultConstrainedVariance}. The performance is measured by the classification accuracy and the results are averaged over $10$ SBM graph realizations, conditioned on which different graph stochasticity scenarios are investigated.

\smallskip
\noindent \textbf{Performance.} First, we corroborate the convergence of the variance-constrained learning. Fig. \ref{fig0} displays the primal-dual learning procedure over $10000$ iterations with the edge dropping probability $p=0.05$, $0.15$ and $0.25$. The expected cost decreases with the number of iterations, while the decreasing rate reduces gradually; ultimately, approaching a stationary point in all cases. The expected cost of $p = 0.05$ converges slightly later than that of $p=0.15$, $0.25$ because $p = 0.05$ yields a more stable graph with a better performance, such that it takes more iterations to reach a lower cost. The convergent value increases with the edge dropping probability $p$ because of the increased graph randomness.

Then, we compare the performance of the SGNN w/o the variance-constrained learning w.r.t. both the surrogate problem \eqref{eq:alternativeVarianceConstrainedProblem} and the original problem \eqref{eq:varianceConstrainedProblem}. Fig. \ref{fig2} and Fig. \ref{fig333} shows the classification accuracy under different edge dropping probabilities: (mild) $p \in [0.05, 0.25]$ and (harsh) $p \in [0.3, 0.7]$. The variance-constrained learning exhibits a better performance with a comparable expected value and a lower standard deviation. The latter is emphasized when $p$ increases, i.e., when more edges are dropped. The expected performance degrades as $p$ increases, which can be explained by the increased graph variation. The variance-constrained learning maintains a smaller standard deviation, while the unconstrained training increases the standard deviation inevitably. For small probabilities $p$, the variance-constrained learning w.r.t. the surrogate problem \eqref{eq:alternativeVarianceConstrainedProblem} performs comparably to that w.r.t. the original problem \eqref{eq:varianceConstrainedProblem}. We attribute the latter to the fact that the surrogate constraints in \eqref{eq:alternativeVarianceConstrainedProblem} provide similar guarantees on stochastic deviations as the variance constraint in \eqref{eq:varianceConstrainedProblem} [cf. \eqref{eq:resultConstrainedVariance}]. For large probabilities $p$, the surrogate exhibits lower expected performance but tighter standard deviation. This is because the surrogate is a stronger constraint, i.e., the surrogate is a strict bound of the original [cf. \eqref{eq:resultConstrainedVariance}].

Lastly, we evaluate the effects of hyper-parameters on the variance-constrained learning, i.e., the GRES($p,q$) realizations $N$ for empirical estimations [cf. \eqref{eq:averageApproximate1}] in Fig. \ref{fig_different_samples}, the gradient steps $\Gamma$ at the primal phase [cf. \eqref{eq_priup}] in Fig. \ref{fig_different_steps} and the number of dropping edges $M_d$ in Fig. \ref{fig_different_edges}. Fig. \ref{fig_different_samples} shows that the expected cost fluctuates with $N$ and steadies as $N$ becomes large, while the standard deviation decreases with $N$. This is because the empirical estimation with a larger $N$ approximates better the expectation, which however takes more training time. %time to estimate the statistical moments. 
Fig. \ref{fig_different_steps} shows that the variance-constrained learning converges faster and to a lower value as $\Gamma$ increases. This corroborates Theorem \ref{thm:convergencePrimalDual} since more gradient steps approach better the optimal solution at each primal phase, which reduces the error size $\xi$ and accelerates the convergence. It is remarkable from Figs. \ref{fig_different_samples}-\ref{fig_different_steps} that small values of graph realizations, e.g., $N \ge 10$, and gradient steps, e.g., $\Gamma \ge 1$, achieve a satisfactory performance, indicating an efficient implementation. Fig. \ref{fig_different_edges} shows that the expected accuracy decreases and the standard deviation increases with $M_d$. This follows our finding in Theorem \ref{thm:varianceAnalysis} that 
%can be explained by the fact that 
more unstable edges increase the graph stochasticity and the latter degrades the performance. Finally, we corroborate the relation between the variance and the discrimination power analyzed in Sec. \ref{DOL}. Fig. \ref{fig_Lipschitz} shows that the expected Lipschitz constant $C_L$ of stochastic graph filters increases with the variance bound $C_v$. This corresponds to the theoretical finding in Theorem \ref{thm:varianceAnalysis} that constraining the variance may lead to a less discriminative architecture, which contains filters with flatter frequency responses.

\subsection{Recommender Systems}

We consider the recommender system (RecSys) with data from MovieLens100k, which comprises $943$ users and $1682$ movies \cite{harper2015movielens}. Following the pre-processing steps in \cite{monti2017geometric}, we build the graph by considering nodes as movies and edges as similarities between them. We compute the movie similarity via the Pearson correlation and keep the $35$ edges with the highest correlation. The graph signal is the ratings given by a user to the movies, where the signal value is zero if the movie is unrated.

In the RecSys, accuracy measures how well we predict the ratings a user has given to the movies. However, high accuracy is not necessarily linked to a better user satisfaction. Diversity also plays an important role, which measures the capability of the RecSys to include items of different categories in the recommendation list \cite{aggarwal2016recommender}. To measure accuracy we use the root mean squared error (RMSE), which is a standard criterion for the rating-based RecSys. To measure diversity we use the aggregated diversity for the recommendation list containing top ten items (AD@10), which is defined as the number of different items included in the list. A lower RMSE indicates a better accuracy and a higher AD implies a more diversified RecSys, i.e., the system does not overfit accuracy by recommending only niche items. The joint goal is to tweak the accuracy-diversity trade-off, i.e., predict accurate ratings and increase the recommendation diversity.

\smallskip
\noindent \textbf{Parameterization.} We consider the SGNN comprising a single layer with $F=32$ filters of order $K=4$ and the Leaky ReLU nonlinearity. The graph stochasticity throughout the architecture is leveraged as a training strategy to aid diversity because it will randomly remove some similarity edges between movies and connect different movies with each other \cite{monti2017geometric}. We consider the first $35$ edges with the highest correlation may be dropped and the next $20$ edges may be added with a probability $p$, corresponding to the GRES($p,q$) model with $p=q$ for simplicity. The constraint bounds are set as $C_f = 0$ and $C_s = 0.5$.

\smallskip
\noindent \textbf{Performance.} We compare the accuracy-diversity trade-off of the vanilla GNN, the SGNN with and without the variance-constrained learning. Fig. \ref{subfig:recommendationRMSE} shows the expected RMSE and the standard deviation under different edge dropping / adding probabilities $p \in [0.05, 0.25]$. For a lower $p \to 0$, the graph is stable and the SGNN exhibits comparable accuracies to the GNN; for a higher $p$, the graph varies more dramatically and the SGNN degrades gradually. The variance-constrained learning accounts for the variance during training, and thus maintains a lower standard deviation around the expected RMSE. Contrarily, the baseline method ignores this factor and has a higher standard deviation that increases with $p$.

Figs. \ref{subfig:recommendationAD}-\ref{subfig:recommendationAD2} display the expected AD@10 and the standard deviation around it. The SGNN improves the diversity compared to the GNN, which can be explained by the involved graph stochasticity. While restricting the variance during training, the variance-constrained learning achieves a comparable (slightly lower) AD@10 to the baseline method. This result together with the well-controlled RMSE in Fig. \ref{subfig:recommendationRMSE} indicate that the variance-constrained learning exhibits a better accuracy-diversity trade-off.

%%%%%%%%%%%%%%%%%%%%%%%%%%%%%%%%%%%%%%%%%%%%%%%%%%%%%%%%%%%%%%%%%%%%%%%%%%%%%%%%
%%%%                                                                        %%%%
%%%%                              CONCLUSIONS                               %%%%
%%%%                                                                        %%%%
%%%%%%%%%%%%%%%%%%%%%%%%%%%%%%%%%%%%%%%%%%%%%%%%%%%%%%%%%%%%%%%%%%%%%%%%%%%%%%%%

\section{Conclusions} \label{sec_conclusions}

We proposed a variance-constrained learning strategy for stochastic graph neural networks that achieves a trade-off between the expected performance and stochastic deviations. This strategy adheres to solving a constrained stochastic optimization problem. We developed a primal-dual learning method to solve the problem in the dual domain, which alternates gradient updates between the SGNN parameters and the dual variable. The variance-constrained learning can be interpreted as a self-learning variance regularizer that provides explicit guarantees for stochastic deviations. A statistical analysis on the SGNN output is conducted to identify how %the variance-constrained learning decreases 
the output variance is decreased and indicates the constrained variance comes at the expense of the discrimination power. We further analyzed the duality gap of the variance-constrained optimization problem and the convergence of the primal-dual learning method, which characterize the solution suboptimality and provide theoretical guarantees for the performance. Numerical results corroborate that the variance-constrained learning finds a favorable balance between the optimal performance and the deviation degradation.

%\label{sec:appendix}
%\appendices %\label{sec:appendix}

%\clearpage
%\newpage

%%%%%%%%%%%%%%%%%%%%%%%%
%%% EXTRA APPENDIX   %%%
%%%%%%%%%%%%%%%%%%%%%%%%
%\newpage
%\clearpage

%\label{sec:appendix}
%\appendices %\label{sec:appendix}

%In this supplementary material, we provide the proofs of main results in the paper and three lemmas that are used in these proofs. %for main results in the paper.

%\appendices %\label{sec:appendix}

%\section{Proof of Lemmas}\label{proof:lemmas}

\appendices %!TEX root = mainOp.tex

%%%%%%%%%%%%%%%%%%
%%% APPENDIX   %%%
%%%%%%%%%%%%%%%%%%

%%%%%%%%%%%%%%%%%%%%%%%%%%%%%%%%%%%%%%%%%%%%%%%%%%%%%%%%%%%%%%%%%%%%%

\section{Proof of Proposition \ref{thm:probabilityBound}} \label{pr:proofTheorem1}

\noindent To ease notation, denote by $\bbPhi=\bbPhi(\bbx;\bbS_{P:1},\ccalH)$ and by $\bar{\bbPhi} = \mathbb{E}_\ccalM[\bbPhi(\bbx;\bbS_{P:1},\ccalH)]$. For a feasible solution $\ccalH$, it holds that
\begin{align}\label{eq:proofThm1eq1}
	{\rm Var} \left[\bbPhi\right] = \frac{1}{n}\mathbb{E}_\ccalM \big[\big\| \bbPhi - \bar{\bbPhi}\big\|^2\big] \le C_v.
\end{align} 
Leveraging the conditional probability, we represent ${\rm Var} \left[\bbPhi\right]$ as
\begin{align}\label{eq:proofThm1eq2}
	&\mathbb{E}_\ccalM\! \left[\frac{1}{n}\big\| \bbPhi \!-\! \bar{\bbPhi}\big\|^2 \Big| \frac{1}{n}\big\| \bbPhi \!-\! \bar{\bbPhi}\big\|^2 \!\!\le\! \epsilon \right] \!\!\cdot\! \text{Pr} \left[\! \frac{1}{n}\big\| \bbPhi \!-\! \bar{\bbPhi}\big\|^2 \!\!\le\! \epsilon \!\right]\\
	&\!+\!\mathbb{E}_\ccalM\! \left[\frac{1}{n}\big\| \bbPhi \!-\! \bar{\bbPhi}\big\|^2\Big| \frac{1}{n}\big\| \bbPhi \!-\! \bar{\bbPhi}\big\|^2 \!\!>\! \epsilon \right] \!\!\cdot\! \text{Pr} \left[ \!\frac{1}{n}\big\| \bbPhi \!-\! \bar{\bbPhi}\big\|^2 \!\!>\! \epsilon \right]\!.\nonumber
\end{align}
Since $\big\| \bbPhi - \bar{\bbPhi}\big\|^2 \ge 0$, we can lower bound \eqref{eq:proofThm1eq2} by
\begin{align}\label{eq:proofThm1eq3}
	&\quad 0 \cdot \text{Pr} \left[ \frac{1}{n}\big\| \bbPhi - \bar{\bbPhi}\big\|^2 \!\le\! \epsilon \right] \!+\! \epsilon \cdot \text{Pr} \left[ \frac{1}{n}\big\| \bbPhi - \bar{\bbPhi}\big\|^2 \!>\! \epsilon \right]. 
\end{align}
By substituting \eqref{eq:proofThm1eq3} into \eqref{eq:proofThm1eq1}, we get $\epsilon \cdot \text{Pr}\! \left[ \big\| \bbPhi \!-\! \bar{\bbPhi}\big\|^2/n \!>\! \epsilon \right] \!<\! C_v$. Since $\text{Pr} \big[ \big\| \bbPhi - \bar{\bbPhi}\big\|^2 / n \!>\! \epsilon \big] \!+\! \text{Pr} \big[ \big\| \bbPhi \!-\! \bar{\bbPhi}\big\|^2 / n \!\le\! \epsilon \big] \!=\! 1$, we have
\begin{align}\label{eq:proofCoro2_5}
	\text{Pr}\left[\frac{1}{n}\big\| \bbPhi - \bar{\bbPhi}\big\|^2 \le \epsilon \right] \ge 1- \frac{C_v}{\epsilon}.
\end{align}

%%%%%%%%%%%%%%%%%%%%%%%%%%%%%%%%%%%%%%%%%%%%%%%%%%%%%%%%%%%%%%%%%%%%%

\section{Proof of Theorem \ref{thm:varianceAnalysis}} \label{pr:proofTheorem2}

\noindent We start by considering the variance of the filter output $\bbu = \bbH(\bbS_{K:0})\bbx$. Denote by $\bbS_k = \barbS + \bbE_k$ with $\barbS$ the expected shift operator and $\bbE_k$ the random deviation. Following the proof of Proposition 1 in \cite{gao2021stochastic}, substituting the filter output \eqref{eq:stochasticGraphFilterOutput} into the variance \eqref{eq:varianceDefinition} and expanding the terms yields
\begin{align} \label{eq:proofThm2eq1}
	&{\rm Var}[\bbu] =\frac{1}{n}\sum_{k=0}^K \sum_{\ell=0}^K h_k h_\ell \tr \left( \mathbb{E}\left[ \bbC_{k\ell}\right] \right)\\
	& \! +\!\frac{1}{n}\sum_{k=1}^K \sum_{\ell=1}^K h_k h_\ell \tr \Big( \mathbb{E}\Big[\! \sum_{r=1}^{\lfloor k\ell \rfloor} \barbS^{k-r}\bbE_r \barbS^{r\!-\!1}\bbx \bbx^\top \barbS^{r-1}\bbE_r \barbS^{\ell-r}\!\Big] \Big)\nonumber
\end{align}
where $\lceil k\ell \rceil = \max(k,\ell)$, $\lfloor k\ell \rfloor = \min(k,\ell)$ and $\bbC_{k\ell}$ is the sum of the terms that contain at least two deviations $\bbE_{r_1}$ and $\bbE_{r_2}$ with $r_1 \ne r_2$. We now proceed by analyzing each of the terms in \eqref{eq:proofThm2eq1} starting from the latter.

\smallskip
\noindent $\textbf{Second term.}$ The second term in \eqref{eq:proofThm2eq1} is similar to the second term in \cite[Eq. (37)]{gao2021stochastic}. Following steps Eq. (39)-(43) in \cite{gao2021stochastic}, we can upper bound it by\begin{align} \label{eq:proofThm2eq4}
& \sum_{i = 1}^N\hat{x}_i^2 \sum_{r\!=\!1}^K\!\tr \Big( \sum_{k, \ell=r}^K h_k h_\ell \bar{\lambda}_i^{2r-2} \barbS^{k+\ell-2r} \mathbb{E}\!\left[ \bbE_r^2\right] \Big)
\end{align}
where $\{\hat{x}_i\}_{i=1}^n$ are the Fourier coefficients of $\bbx$ over $\barbS$. From Lemma \ref{lemma:errorMatrixExpectation} in Appendix \ref{lemmasProof}, we have $\mathbb{E}\left[ \bbE_k^2 \right] =  \alpha p(1-p) \bbE_d + \alpha q(1-q) \bbE_a$ with (i) $\alpha = 1$, $\bbE_d = \bbD_d$, $\bbE_a = \bbD_a$ the degree matrices of $\ccalG_d$, $\ccalG_a$ if $\bbS = \bbA$ is the adjacency matrix and (ii) $\alpha = 2$, $\bbE_d = \bbL_d$, $\bbE_a = \bbL_a$ the Laplacian matrices of $\ccalG_d$, $\ccalG_a$ if $\bbS = \bbL$ is the Laplacian matrix. Then, using the trace property $\tr(\bbA \bbB) \le \| \bbA \| \tr(\bbB)$ for any square matrix $\bbA$ and positive semi-definite matrix $\bbB$ and substituting $\mathbb{E}\left[ \bbE_k^2 \right]$, we bound \eqref{eq:proofThm2eq4} by% and using the trace property ...., we can bound (44) by.... \eqref{eq:proofThm2eq4} is bounded by
\begin{align}
\label{eq:proofThm2eq5}
& 2 p(1-p)\!\sum_{i = 1}^N\!\hat{x}_i^2 \big\| \sum_{r\!=\!1}^K\! \sum_{k, \ell=r}^K\! h_k h_\ell \bar{\lambda}_i^{2r-2} \barbS^{k+\ell-2r} \big\| \tr\! \left( \bbE_d \right)\\
& + 2 q(1-q)\sum_{i = 1}^N\hat{x}_i^2 \big\| \sum_{r\!=\!1}^K\! \sum_{k, \ell=r}^K\! h_k h_\ell \bar{\lambda}_i^{2r-2} \barbS^{k+\ell-2r} \big\| \tr\! \left( \bbE_a \right)\nonumber
\end{align}
with $\tr\left( \bbE_d \right)\!=\! 2M_d$ and $\tr\left( \bbE_a \right)\!=\! 2M_a$ where $M_d$ and $M_a$ are the numbers of dropping and adding edges [Def. \ref{def_res}]. The matrix norm in \eqref{eq:proofThm2eq5} is similar to that in \cite[Eq. (44)]{gao2021stochastic}. In this context, following steps Eq. (45)-(49) of \cite{gao2021stochastic} and using the Lipschitz property of $h(\bblambda)$ [As. \ref{as:LipschitzFilter}], we can upper bound it by $K C_L^2$. By substituting this bound into \eqref{eq:proofThm2eq5}, the second term in \eqref{eq:proofThm2eq1} is bounded as
\begin{align}
\label{eq:proofThm2eq9} &\frac{1}{n}\mathbb{E}\Big[ \sum_{k, \ell=1}^K h_k h_\ell \sum_{r=1}^{\lfloor k\ell \rfloor} \tr \left( \barbS^{k-r}\bbE_r \barbS^{r-1}\bbx \bbx^\top \barbS^{r-1}\bbE_r \barbS^{\ell-r} \right) \Big]\nonumber \\
& \!\le\! \frac{ 4 K}{n} C_L^2 \big(M_d p (1-p) + M_a q (1-q) \big) \| \bbx \|^2.
\end{align}

\noindent $\textbf{First term.}$ Matrix $\bbC_{k\ell}$ is the sum of the remaining expansion terms. Each of these terms contains at least two deviations $\bbE_k$, $\bbE_\ell$ with $k \neq \ell$ and can be bounded by a factor containing at least two terms $\tr\left(\mathbb{E}[\bbE_{k}^2]\right)$ and $\tr\left(\mathbb{E}[\bbE_{\ell}^2]\right)$. Since the filter coefficients $\{ h_k \}_{k=0}^K$ and the expected shift operator norm $\|\barbS\|$ are bounded, we can write the first term in \eqref{eq:proofThm2eq1} as
\begin{gather} \label{eq:proofThm2eq10}
\begin{split}
 \mathbb{E} \Big[\!\sum_{k, \ell\!=\!0}^K h_k h_\ell \bbC_{k\ell} \Big] \!=\! \ccalO(p^2(1\!-\!p)^2) + \ccalO(q^2(1\!-\!q)^2).
\end{split}
\end{gather}

\noindent By substituting \eqref{eq:proofThm2eq9} and \eqref{eq:proofThm2eq10} into \eqref{eq:proofThm2eq1}, we have
\begin{gather} \label{eq:proofThm2eq11}
\begin{split}
 {\rm Var} \left[ \bbu \right] &\le \frac{4 K}{n} \big(M_d p (1-p) + M_a q (1-q) \big) C_L^2 \| \bbx \|^2\\
&+ \ccalO(p^2(1\!-\!p)^2) + \ccalO(q^2(1\!-\!q)^2).
\end{split}
\end{gather}

Then, by leveraging \eqref{eq:proofThm2eq11} and steps Eq. (54)-(73) in the proof of Theorem 1 in \cite{gao2021stochastic}, we extend the variance bound from the filter to the SGNN and obtain the result \eqref{eq:varianceAnalysis} to complete the proof. \iffalse
\begin{align} \label{eq:proofThm2eq12}
{\rm Var} \left[ \bbPhi(\bbx;\bbS_{P:1},\!\ccalH) \right]  \!&\le\!  C_L^2 \big(M_d p (1\!-\!p)\! +\! M_a q (1\!-\!q) \big)C\| \bbx \|_2^2 \nonumber\\
&+\! \ccalO(p^2(1-p)^2) + \ccalO(q^2(1-q)^2)
\end{align}
where $ C= 2 \alpha\sum_{\ell=1}^L F^{2L-3} C_\sigma^{2\ell-2}/n$ is a constant.\fi

%%%%%%%%%%%%%%%%%%%%%%%%%%%%%%%%%%%%%%%%%%%%%%%%%%%%%%%%%%%%%%%%%%%%%%%%%%%%%%%%%%%

\section{Proof of Theorem \ref{thm:dualityGap}} \label{pr:proofTheorem3}

We prove the theorem as follows. First, we upper bound the primal solution $\mathbb{P}$ of problem \eqref{eq:alternativeVarianceConstrainedProblem} using the primal solution $\widetilde{\mathbb{P}}$ of the general problem \eqref{eq:generalizedVarianceConstrainedProblem} [cf. \eqref{eq:proofThm3eq14}]. Then, we lower bound the dual solution $\mathbb{D}$ of problem \eqref{eq:alternativeVarianceConstrainedProblem} using the dual solution $\widetilde{\mathbb{D}}$ of the general problem \eqref{eq:generalizedVarianceConstrainedProblem} [cf. \eqref{eq:proofThm3eq19}]. Lastly, we complete the proof by leveraging the strong duality $\widetilde{\mathbb{P}} = \widetilde{\mathbb{D}}$ of the general problem \eqref{eq:generalizedVarianceConstrainedProblem}. 

\smallskip
\noindent \textbf{Primal upper bound.} For the primal upper bound, we first particularize the function $\widetilde{f}(\bbx, \widetilde{\bbS}_{P:1})$ to the SGNN $\bbPhi(\bbx;\widetilde{\bbS}_{P:1}, \ccalH)$ and then particularize the continuous distribution $\widetilde{\ccalM}$ to the discrete one $\ccalM$. We do so for the objective and constraints, respectively. 

\textit{Objective:} Denote by $\ccalC_\ccalT(\cdot, \cdot) = \mathbb{E}_\ccalT [\ccalC(\cdot, \cdot)]$, $\widetilde{f} = \widetilde{f}(\bbx;\widetilde{\bbS}_{P:1})$, $f = f(\bbx;\bbS_{P:1})$, $\widetilde{\bbPhi}=\bbPhi(\bbx;\widetilde{\bbS}_{P:1}, \!\ccalH)$ and $\bbPhi = \bbPhi(\bbx;\bbS_{P:1}, \!\ccalH)$ for concise notations. By using Jensen's inequality combined with the fact that the absolute value $|\cdot|$ is a convex function, we can bound the deviation induced by the SGNN parameterization as
\begin{align}\label{eq:proofThm3eq2}
	&\!\!\Big| \mathbb{E}_{\widetilde{\ccalM}}\! \Big[ \ccalC_\ccalT(\bby,\! \widetilde{f})\Big] \!\!-\! \mathbb{E}_{\widetilde{\ccalM}}\!\Big[ \ccalC_\ccalT(\bby,\! \widetilde{\bbPhi}\!) \Big] \!\Big| \!\!\le\! \mathbb{E}_{\widetilde{\ccalM}}\! \Big[ \big| \ccalC_\ccalT(\bby,\! \widetilde{f}) \!-\!  \ccalC_\ccalT(\bby,\! \widetilde{\bbPhi}\!) \big| \Big]\!.
\end{align}
By using the Lipschitz condition of the loss function [As. \ref{as:LipschitzCost}], Jensen's inequality with the fact that the square $(\cdot)^2$ is a convex function, and the $\eps$-universal parameterization [As. \ref{as:universalParameterization}], we can further bound \eqref{eq:proofThm3eq2} as
 \begin{align}\label{eq:proofThm3eq25}
 	&\mathbb{E}_{\widetilde{\ccalM}} \big[ \big| \ccalC_\ccalT(\bby, \widetilde{f}) -  \ccalC_\ccalT(\bby, \widetilde{\bbPhi}) \big| \big]\le C_\ell \mathbb{E}_{\widetilde{\ccalM}} \big[ \| \widetilde{f} - \widetilde{\bbPhi} \| \big] \\
 	& \le C_\ell \sqrt{\mathbb{E}_{\widetilde{\ccalM}} \big[ \| \widetilde{f} - \widetilde{\bbPhi} \|^2 \big]} \le C_L \epsilon.\nonumber
 \end{align}
Since $\widetilde{\ccalM}$ is the $\varepsilon$-Borel generalization of $\ccalM$, we use the stability result of Lemma \ref{lemma:stabilitySGNN} in Appendix \ref{lemmasProof} and bound the deviation induced by the distribution generalization as
\begin{align}\label{eq:proofThm3eq4}
	&\!\big| \mathbb{E}_{\widetilde{\ccalM}} \big[ \ccalC_\ccalT(\bby, \widetilde{\bbPhi})\big] \!\!-\! \mathbb{E}_{\ccalM}\!\left[ \ccalC_\ccalT(\bby, \bbPhi) \right] \!\big| \le C_\ell C_B \varepsilon + \ccalO(\varepsilon^2).
\end{align}
with $C_B$ the stability constant. Adding and subtracting $\mathbb{E}_{\widetilde{\ccalM}}\big[ \ccalC_\ccalT(\bby, \widetilde{\bbPhi}) \big]$ in $\mathbb{E}_{\widetilde{\ccalM}} \big[ \ccalC_\ccalT(\bby, \widetilde{f})\big] - \mathbb{E}_{\ccalM}\!\left[ \ccalC_\ccalT(\bby, \bbPhi) \right]$, and using the triangular inequality, \eqref{eq:proofThm3eq25} and \eqref{eq:proofThm3eq4}, we get
\begin{align}\label{eq:proofThm3eq5}
	&\big| \mathbb{E}_{\widetilde{\ccalM}} \big[ \ccalC_\ccalT(\bby, \widetilde{f})\big] \!-\! \mathbb{E}_{\ccalM}\!\left[ \ccalC_\ccalT(\bby, \bbPhi) \right] \big|  \!\le\! C_\ell \epsilon \!+\! C_\ell C_B \varepsilon \!+\! \ccalO(\varepsilon^2).
\end{align}

\textit{First order moment constraint:} Following similar steps, we can bound the constraint deviation induced by the SGNN parameterization as
\begin{align}\label{eq:proofThm3eq6}
	&\Big| \frac{1}{n}\mbE_{\widetilde{\ccalM}} \Big[\! \sum_{i=1}^n [\widetilde{f}]_i \Big]  \!-\! \frac{1}{n}\mbE_{\widetilde{\ccalM}} \Big[\! \sum_{i=1}^n [\widetilde{\bbPhi}]_i \Big] \Big| \\
	&\le \frac{1}{n}\mbE_{\widetilde{\ccalM}} \Big[\! \sum_{i=1}^n \big| [\widetilde{f}]_i \!-\!  [\widetilde{\bbPhi}]_i \big| \Big] = \frac{1}{n}\mathbb{E}_{\widetilde{\ccalM}} \Big[ \| \widetilde{f} \!-\! \widetilde{\bbPhi} \|_1 \Big] \le  \frac{\eps}{\sqrt{n}}\nonumber
\end{align}
where $\|\cdot\|_1$ is 1-norm and where the norm property $\|\cdot\|_1 \le \sqrt{n}\|\cdot\|$ and the $\eps$-universal parameterization are used in the last inequality. Likewise, we can bound the constraint deviation induced by the distribution generalization as
\begin{align}\label{eq:proofThm3eq7}
	&\!\Big| \frac{1}{n}\mbE_{\widetilde{\ccalM}}\! \Big[ \sum_{i=1}^n [\widetilde{\bbPhi}]_i \Big] \!-\! \frac{1}{n}\mbE_{\ccalM}\! \Big[ \sum_{i=1}^n [\bbPhi]_i \Big] \Big| \!\le\!  \frac{C_B}{\sqrt{n}}\varepsilon \!+\! \ccalO(\varepsilon^2)
\end{align}
which holds similarly because of $\|\cdot\|_1 \le \sqrt{n}\|\cdot\|$ and the stability result in Lemma \ref{lemma:stabilitySGNN}. By using \eqref{eq:proofThm3eq6} and \eqref{eq:proofThm3eq7}, we get
\begin{align}\label{eq:proofThm3eq8}
	&\!\Big| \frac{1}{n}\mbE_{\widetilde{\ccalM}} \Big[ \sum_{i=1}^n [\widetilde{f}]_i \Big] \!-\! \frac{1}{n}\mbE_{\ccalM} \Big[ \sum_{i=1}^n [\bbPhi]_i \Big] \Big|\!\le\! \frac{1}{\sqrt{n}}\eps \!+\! \frac{C_B}{\sqrt{n}}\varepsilon \!+\! \ccalO(\varepsilon^2).
\end{align}

\textit{Second order moment constraint:} Since $a^2 - b^2 = (a+b)(a-b) = (2b + a - b)(a-b)$ for any $a,b \in \mathbb{R}$, we have the bound for the SGNN parameterization as
\begin{align}\label{eq:proofThm3eq9}
	&\Big| \frac{1}{n} \mbE_{\widetilde{\ccalM}} \Big[ \sum_{i=1}^n [\widetilde{f}]^2_i \Big] \!-\! \frac{1}{n} \mbE_{\widetilde{\ccalM}}\! \Big[ \sum_{i=1}^n [\widetilde{\bbPhi}]^2_i \Big] \Big| \\
	&=\! \frac{1}{n}\Big| \mbE_{\widetilde{\ccalM}}\! \Big[\! \sum_{i=1}^n (2[\widetilde{\bbPhi}]_i + [\widetilde{f}]_i \!-\! [\widetilde{\bbPhi}]_i)([\widetilde{f}]_i \!-\! [\widetilde{\bbPhi}]_i) \Big] \Big| \nonumber \\
	& \le \! \frac{1}{n} \mbE_{\widetilde{\ccalM}} \Big[\big| \sum_{i=1}^n\! 2[\widetilde{\bbPhi}]_i([\widetilde{f}]_i \!-\! [\widetilde{\bbPhi}]_i) \big| \!+\! \sum_{i=1}^n\!([\widetilde{f}]_i \!-\! [\widetilde{\bbPhi}]_i)^2 \Big] \nonumber
\end{align}
where the last inequality holds because of the triangular inequality. Using $|[\widetilde{\bbPhi}]_i| \le \|\widetilde{\bbPhi}\| \le C_y$ [As. \ref{as:inputBound}], $\|\cdot\|_1 \le \sqrt{n}\|\cdot\|$, and the $\eps$-universal parameterization, we can upper bound \eqref{eq:proofThm3eq9} by
\begin{align}\label{eq:proofThm3eq95}
&\frac{2C_y}{n} \mbE_{\widetilde{\ccalM}} \big[ \|\widetilde{f} \!-\! \widetilde{\bbPhi}\|_1 \big] \!+\!\frac{1}{n} \mbE_{\widetilde{\ccalM}} \big[ \|\widetilde{f} \!-\! \widetilde{\bbPhi}\|_2^2\big] \!\le\!  \frac{2C_y}{\sqrt{n}}\eps \!+\! \frac{1}{n}\eps^2.
\end{align}
By similarly leveraging $a^2 - b^2 = (a+b)(a-b)$ for any $a,b \in \mathbb{R}$, $|[\widetilde{\bbPhi}]_i| \le \|\widetilde{\bbPhi}\| \le C_y$, $|[\bbPhi]_i| \le \|\bbPhi\| \le C_y$, and the stability results in Lemma \ref{lemma:stabilitySGNN}, we have the bound for the distribution generalization as
\begin{align}\label{eq:proofThm3eq10}
	&\Big| \frac{1}{n}\mbE_{\widetilde{\ccalM}} \Big[ \sum_{i=1}^n [\widetilde{\bbPhi}]^2_i \Big] \!-\!\frac{1}{n}\mbE_{\ccalM}\! \Big[ \sum_{i=1}^n [\bbPhi]^2_i \Big] \Big|\le \frac{2C_yC_B}{\sqrt{n}}\varepsilon + \ccalO(\varepsilon^2). 
\end{align}
By using \eqref{eq:proofThm3eq95} and \eqref{eq:proofThm3eq10}, we get
\begin{align}\label{eq:proofThm3eq11}
	&\Big| \frac{1}{n}\mbE_{\widetilde{\ccalM}}\! \Big[ \sum_{i=1}^n [\widetilde{f}]^2_i \Big] - \frac{1}{n}\mbE_\ccalM\! \Big[ \sum_{i=1}^n [\bbPhi]^2_i \Big] \Big|\nonumber \\
	&\le \big(\frac{2C_y}{\sqrt{n}} + \frac{\eps}{n}\big)\eps + \frac{2C_yC_B}{\sqrt{n}}\varepsilon + \ccalO(\varepsilon^2). 
\end{align}

We now consider a modified version of problem \eqref{eq:generalizedVarianceConstrainedProblem} that changes the constraints with the bounds in \eqref{eq:proofThm3eq8} and \eqref{eq:proofThm3eq11}
\begin{alignat}{3} \label{eq:proofThm3eq12}
	\widetilde{\mathbb{P}}_{\epsilon, \varepsilon}:=   &\min_{\widetilde{f}} \mathbb{E}_{\widetilde{\ccalM}} \Big[ \ccalC_\ccalT\big(\bby, \widetilde{f}\big) \Big],             \\
	&  \st ~ \frac{1}{n}\mbE_{\widetilde{\ccalM}} \! \Big[ \sum_{i=1}^n [\widetilde{f}]_i \Big] \!\ge\! C_f \!+\! \frac{\eps}{\sqrt{n}} \!+\! \frac{C_B}{\sqrt{n}}\varepsilon\! +\! \ccalO(\varepsilon^2)  \nonumber\\
	& \!\frac{1}{n}\mbE_{\widetilde{\ccalM}}\! \Big[ \!\sum_{i=1}^n [\widetilde{f}]^2_i \Big] \!\!\le\! C_s\!-\! \big(\frac{2C_y}{\sqrt{n}} \!+\! \frac{\eps}{n}\big)\eps \!-\! \frac{2C_yC_B}{\sqrt{n}}\varepsilon \!-\! \ccalO(\varepsilon^2\!).  \nonumber
\end{alignat}
For each feasible function $\widetilde{f}(\bbx;\widetilde{\bbS}_{P:1})$ in \eqref{eq:proofThm3eq12}, there exist a set of $\ccalH$ and $\bbS_{P:1}\in \ccalM$ such that the corresponding SGNN parameterization $\bbPhi(\bbx;\bbS_{P:1}, \ccalH)$ satisfies the constraints in \eqref{eq:alternativeVarianceConstrainedProblem} because of the constraint gaps established in \eqref{eq:proofThm3eq8} and \eqref{eq:proofThm3eq11}. This observation and the objective gap in \eqref{eq:proofThm3eq4} imply that the primal solution $\mathbb{P}$ of problem \ref{eq:alternativeVarianceConstrainedProblem} is close to the primal solution $\widetilde{\mathbb{P}}_{\epsilon, \varepsilon}$ of this modified problem \ref{eq:proofThm3eq12} by at most
\begin{align}\label{eq:proofThm3eq13}
	\mathbb{D} \le \mathbb{P} \le \widetilde{\mathbb{P}}_{\epsilon, \varepsilon} + C_\ell \epsilon + C_\ell C_B \varepsilon + \ccalO(\varepsilon^2).
\end{align}
By further using the perturbation inequality between $\widetilde{\mathbb{P}}_{\epsilon, \varepsilon}$ and $\widetilde{\mathbb{P}}$ \cite[Eq. (5.57)]{boyd2004convex}, we get
\begin{align}\label{eq:proofThm3eq14}
	\mathbb{D} \le \mathbb{P} &\le \widetilde{\mathbb{P}} \!+\! \Big(C_\ell \! + \!  \frac{\widetilde{\gamma}_1^*}{\sqrt{n}} \!+\! \widetilde{\gamma}_2^*\big(\frac{2C_y}{\sqrt{n}} \!+\! \frac{\eps}{n}\big) \Big) \eps \nonumber\\	
&\quad +	\Big( C_\ell C_B \!+\! \frac{\widetilde{\gamma}_1^* C_B }{\sqrt{n}} \!+\! \frac{2 \widetilde{\gamma}_2^* C_y C_B}{\sqrt{n}} \Big) \varepsilon + \ccalO(\varepsilon^2)
\end{align}
with $\widetilde{\bbgamma}^* = [\widetilde{\gamma}^*_1, \widetilde{\gamma}^*_2]$ the optimal dual variable of problem \eqref{eq:generalizedVarianceConstrainedProblem}. 

\smallskip
\noindent \textbf{Dual lower bound.} Following \eqref{eq:Lagrangian}, we can represent the Lagrangian of problem \eqref{eq:generalizedVarianceConstrainedProblem} at its optimal dual variable $\widetilde{\bbgamma}^*$ as
\begin{align} \label{eq:proofThm3eq17}
	\widetilde{\ccalL}(\widetilde{f}, \widetilde{\bbgamma}^*) = \mathbb{E}_{\widetilde{\ccalM}}\! \Big[ \ccalC_\ccalT(\bby, \widetilde{f})\Big]\!&+ \widetilde{\gamma}_1^* \Big(\! C_f-\!\frac{1}{n}\mbE_{\widetilde{\ccalM}} \Big[\! \sum_{i=1}^n [\widetilde{f}]_i \Big] \Big) \nonumber\\
	& \!\!\!\!\!\!\!- \widetilde{\gamma}_2^* \Big( C_s-\frac{1}{n}\mbE_{\widetilde{\ccalM}} \Big[ \sum_{i=1}^n [\widetilde{f}]^2_i \Big] \Big).
\end{align}
We first particularize the continuous distribution $\widetilde{\ccalM}$ to the discrete distribution $\ccalM$ in \eqref{eq:proofThm3eq17} and obtain the particularized Lagrangian $\ccalL(f, \widetilde{\bbgamma}^*)$. For any function $f$ and $\bbS_{P:1} \in \ccalM$, we can define a function $\widetilde{f}$ as $\widetilde{f}(\bbx;\widetilde{\bbS}_{P:1}) = f(\bbx;\bbS_{P:1})$ for any $\widetilde{\bbS}_{P:1} \in \ccalB_{\varepsilon}(\bbS_{P:1}) \subset \widetilde{\ccalM}$ [Def. \ref{def1:BorelSet}]. Since both $\widetilde{\ccalM}$ and $\ccalM$ are uniform distributions, substituting $f$ and $\widetilde{f}$ into their respective Lagrangian yields $\ccalL(f, \widetilde{\bbgamma}^*) = \widetilde{\ccalL}(\widetilde{f}, \widetilde{\bbgamma}^*)$. That is, for any $f$, there exists an associated $\widetilde{f}$ satisfying $\ccalL(f, \widetilde{\bbgamma}^*) = \widetilde{\ccalL}(\widetilde{f}, \widetilde{\bbgamma}^*)$. Therefore, we have
\begin{align} \label{eq:proofThm3eq18}
	\min_{f} \ccalL(f, \widetilde{\bbgamma}^*) \ge \min_{\widetilde{f}} \widetilde{\ccalL}(\widetilde{f}, \widetilde{\bbgamma}^*). 
\end{align}
We then particularize the function $f$ to the SGNN $\bbPhi$ and obtain the Lagrangian $\ccalL(\ccalH, \widetilde{\bbgamma}^*)$ [cf. \eqref{eq:Lagrangian}]. Since the set of functions spanned by the SGNN $\bbPhi$ is a subset of the set of functions spanned by $f$, it holds that 
\begin{align} \label{eq:proofThm3eq19}
	\min_{\ccalH} \ccalL(\ccalH, \widetilde{\bbgamma}^*) \ge \min_{f} \ccalL(f, \widetilde{\bbgamma}^*) \ge \min_{\widetilde{f}} \widetilde{\ccalL}(\widetilde{f}, \widetilde{\bbgamma}^*). 
\end{align}
By substituting the facts $\mathbb{D} = \max_{\bbgamma} \min_{\ccalH} \ccalL(\ccalH, \bbgamma) \ge \min_{\ccalH} \ccalL(\ccalH, \widetilde{\bbgamma}^*)$ and $\widetilde{\mathbb{D}} = \min_{\widetilde{f}} \widetilde{\ccalL}(\widetilde{f}, \widetilde{\bbgamma}^*)$ into \eqref{eq:proofThm3eq19} and leveraging the strong duality $\tilde{\mathbb{P}} = \tilde{\mathbb{D}}$ in Proposition \ref{prop:strongDualityGeneralized}, we get $\mathbb{D} \ge \widetilde{\mathbb{D}} = \widetilde{\mathbb{P}}$.

By combining the primal upper bound in \eqref{eq:proofThm3eq14} and the dual lower bound $\mathbb{D} \ge \widetilde{\mathbb{P}}$, we complete the proof
\begin{align}\label{proof:prop3eq20}
	|\mathbb{P}-\mathbb{D}| \le & \Big( C_\ell + \frac{\widetilde{\gamma}_1^*}{\sqrt{n}} \!+\! \widetilde{\gamma}_2^*\big(\frac{2C_y}{\sqrt{n}} \!+\! \frac{\eps}{n}\big) \Big) \eps\\	
	& +	\Big( C_\ell C_B \!+\! \frac{\widetilde{\gamma}_1^* C_B }{\sqrt{n}} \!+\! \frac{2 \widetilde{\gamma}_2^* C_y C_B}{\sqrt{n}} \Big) \varepsilon + \ccalO(\varepsilon^2). \nonumber
\end{align}

%%%%%%%%%%%%%%%%%%%%%%%%%%%%%%%%%%%%%%%%%%%%%%%%%%%%%

\section{Proof of Theorem \ref{thm:convergencePrimalDual}} \label{pr:theorem4}

\noindent Let $\bbgamma^*$ be the optimal dual variable for the dual problem [cf. \eqref{eq:dualFunction}]. From the dual update \eqref{eq:dualUpdate} and the fact that the non-negative projection $[\cdot]_+$ is non-expansive, we can write
\begin{align}\label{eq:proofThm4eq225}
	&\| \bbgamma_{t+1} - \bbgamma^* \|^2 \\
	&\le \Big( \gamma_{1,t} + \eta_\gamma\Big(\! C_f\!-\!\frac{1}{n}{\mathbb{E}}_\ccalM\! \Big[ \!\sum_{i=1}^n [\bbPhi(\bbx;\bbS_{P:1},\ccalH_{t+1})]_i \!\Big] \Big) \!-\! \gamma^*_1 \Big)^2 \nonumber \\
	& + \Big( \gamma_{2,t} \!-\! \eta_\gamma\Big(\! C_s\!-\!\frac{1}{n}{\mathbb{E}}_\ccalM \Big[ \!\sum_{i=1}^n [\bbPhi(\bbx;\bbS_{P:1},\ccalH_{t+1})]^2_i \Big] \Big) \!-\! \gamma^*_2 \Big)^2. \nonumber
\end{align}
By expanding the square operations $(\cdot)^2$ in \eqref{eq:proofThm4eq225}, we can rewrite the upper bound as 
\begin{align}\label{eq:proofThm4eq25}
	&\| \bbgamma_{t} \!-\! \bbgamma^* \|^2 \!+\! \eta_\gamma^2 \Big( C_f\!-\!\frac{1}{n}{\mathbb{E}}_\ccalM\! \Big[ \!\sum_{i=1}^n [\bbPhi(\bbx;\bbS_{P:1},\ccalH_{t+1})]_i \Big] \Big)^2 \nonumber\\
	&+ \eta_\gamma^2 \Big( C_s\!-\!\frac{1}{n}{\mathbb{E}}_\ccalM \Big[ \!\sum_{i=1}^n [\bbPhi(\bbx;\bbS_{P:1},\ccalH_{t+1})]^2_i \Big] \Big)^2  \\
	&- 2 \eta_\gamma (\gamma_{1}^* \!-\! \gamma_{1,t})\Big(C_f\!-\!\frac{1}{n}{\mathbb{E}}_\ccalM \Big[ \!\sum_{i=1}^n [\bbPhi(\bbx;\bbS_{P:1},\ccalH_{t+1})]_i \Big] \Big) \nonumber\\
	&- 2 \eta_\gamma (\gamma_{2, t} \!-\! \gamma_2^*)\Big(C_s\!-\!\frac{1}{n}{\mathbb{E}}_\ccalM \Big[ \!\sum_{i=1}^n [\bbPhi(\bbx;\bbS_{P:1},\ccalH_{t+1})]^2_i \Big] \Big). \nonumber
\end{align}
We now analyze the terms in \eqref{eq:proofThm4eq25} separately. 

\smallskip
\noindent \textbf{Second and third terms.} Using the triangle inequality and the condition $\| \bbPhi(\bbx;\bbS_{P:1},\ccalH_{t+1}) \| \le C_y$ [As. \ref{as:inputBound}], we have
\begin{align}\label{eq:proofThm4eq4}
	&\Big( C_f-\frac{1}{n}{\mathbb{E}}_\ccalM\! \Big[ \!\sum_{i=1}^n [\bbPhi(\bbx;\bbS_{P:1},\ccalH_{t+1})]_i \Big] \Big)^2 \\
	&\le \Big( C_f + \frac{1}{n}{\mathbb{E}}_\ccalM\! \Big[ \| \bbPhi(\bbx;\bbS_{P:1},\ccalH_{t+1}) \|_1 \Big]\Big)^2 \nonumber \le \Big( C_f + \frac{C_y}{\sqrt{n}} \Big)^2
\end{align}
where we also use $\|\cdot\|_1 \le \sqrt{n}\| \cdot \|$. Similarly we have
\begin{align}\label{eq:proofThm4eq5}
	&\Big( C_s\!-\frac{1}{n}{\mathbb{E}}_\ccalM\! \Big[ \!\sum_{i=1}^n [\bbPhi(\bbx;\bbS_{P:1},\ccalH)]^2_i \Big] \Big)^2 \!\le\! \Big( C_s \!+\! \frac{C_y^2}{n} \Big)^2.
\end{align}

\noindent \textbf{Forth and fifth terms.} To analyze these two terms, we first consider the difference of the dual problem $\ccalD(\bbgamma)$ [cf. \eqref{eq:dualFunction}] evaluated at the optimal dual variable $\bbgamma^*$ and an arbitrary one $\bbgamma$ as
\begin{align}\label{eq:proofThm4eq1}
	\ccalD(\bbgamma^*) - \ccalD(\bbgamma) & = \min_\ccalH \ccalL(\ccalH, \bbgamma^*) - \min_\ccalH \ccalL(\ccalH, \bbgamma) \\
	&\le \ccalL(\ccalH^{(\Gamma)}, \bbgamma^*) -  \ccalL(\ccalH^*, \bbgamma) \nonumber
\end{align}
where $\ccalH^*$ is the optimal solution of $\min_\ccalH \ccalL(\ccalH, \bbgamma)$ and $\ccalH^{(\Gamma)}$ is the solution of $\min_\ccalH \ccalL(\ccalH, \bbgamma)$ obtained by the gradient descent at the primal phase [cf. \eqref{eq_priup}]. From Assumption \ref{as:primalOptimalityLoss}, we have $\ccalL(\ccalH^*, \bbgamma) \ge \ccalL(\ccalH^{(\Gamma)}, \bbgamma) - \xi$. Substituting this result and the Lagrangian expression \eqref{eq:Lagrangian} into \eqref{eq:proofThm4eq1}, we get
\begin{align}\label{eq:proofThm4eq2}
	&\ccalD(\bbgamma^*) - \ccalD(\bbgamma) \le \ccalL(\ccalH^{(\Gamma)}, \bbgamma^*) -  \ccalL(\ccalH^{(\Gamma)}, \bbgamma) + \xi \\
	& = (\gamma_1^* - \gamma_1) \Big( C_f\!-\!\frac{1}{n}{\mathbb{E}}_\ccalM\! \Big[ \!\sum_{i=1}^n [\bbPhi(\bbx;\bbS_{P:1},\ccalH^{(\Gamma)})]_i \Big] \Big)\nonumber\\
	& ~ + (\gamma_2 - \gamma_2^*) \Big( C_s\!-\!\frac{1}{n}{\mathbb{E}}_\ccalM\! \Big[ \!\sum_{i=1}^n [\bbPhi(\bbx;\bbS_{P:1},\ccalH^{(\Gamma)})]^2_i \Big] \Big) + \xi.\nonumber
\end{align}

By using the fact $\ccalH_{t+1} = \ccalH^{(\Gamma)}_t$, we substitute \eqref{eq:proofThm4eq4}, \eqref{eq:proofThm4eq5} and \eqref{eq:proofThm4eq2} into \eqref{eq:proofThm4eq25} and altogether into \eqref{eq:proofThm4eq225} to get
\begin{align}\label{eq:proofThm4eq6}
	\| \bbgamma_{t+1} \!-\! \bbgamma^* \!\|^2 &\le \| \bbgamma_{t} \!-\! \bbgamma^* \|^2 \!+\! \eta_\gamma A_t
\end{align}
with $A_t = \eta_\gamma ((C_f + C_y/\sqrt{n})^2 + (C_s + C_y^2/n)^2 ) + 2 (\ccalD(\bbgamma_t) - \ccalD(\bbgamma^*) + \xi)$. This expression characterizes the update progress of the dual step. By unrolling \eqref{eq:proofThm4eq6} to the initialization $t=0$, we get 
\begin{align}\label{eq:proofThm4eq7}
	\| \bbgamma_{t\!+\!1} \!-\! \bbgamma^* \|^2 \!&\le\! \| \bbgamma_{t} \!-\! \bbgamma^* \|^2 \!+\! \eta_\gamma A_t \!\le\! \| \bbgamma_0 \!-\! \bbgamma^* \|^2 \!+\! \sum_{i=0}^{t} \eta_\gamma A_i.
\end{align}
Since $\ccalD(\bbgamma^*)$ is the maxima of $\ccalD(\bbgamma)$, $\ccalD(\bbgamma_t) - \ccalD(\bbgamma^*)$ is always negative. Therefore, when $\bbgamma_t$ is far from $\bbgamma^*$, the difference $\ccalD(\bbgamma_t) - \ccalD(\bbgamma^*)$ is largely negative, and thus $A_t$ is also negative. Consider the iteration number $T = \argmin_{t \in \mathbb{Z}} A_t > -2 \delta$ where $\delta$ is the desirable accuracy. Then, we have $A_t \le -2 \delta$ for all $0 \le t < T$ and substituting this result into \eqref{eq:proofThm4eq7} yields 
\begin{align}\label{eq:proofThm4eq8}
	\| \bbgamma_{T} - \bbgamma^* \|^2 \le \| \bbgamma_0 - \bbgamma^* \|^2 - 2 T \eta_\gamma \delta.
\end{align}
Since $\| \bbgamma_{T} - \bbgamma^* \|^2 \ge 0$, we get $T \le \|\bbgamma_0 - \bbgamma^* \|^2/(2\eta_\gamma\delta)$ which indicates that $T$ is finite and bounded. By substituting the expression of $A_T$ into the condition $A_T > -2 \delta$, we obtain
\begin{align}
	D(\!\bbgamma_T\!) \!\le\! \ccalD(\!\bbgamma^*\!) \!<\!\! D(\!\bbgamma_T\!) \!+\! \frac{\eta_\gamma \Big( \!\!\big(\!C_f \!+\! \frac{C_y}{\sqrt{n}}\big)^2 \!\!\!+\! \big(C_s \!+\! \frac{C_y^2}{n}\big)^2 \!\Big)}{2} \!+\! \xi \!+\! \delta. \nonumber
\end{align}
Further leveraging the fact $|\ccalL(\ccalH^{(\Gamma)}_T, \bbgamma_T) - \ccalD(\bbgamma_T)| \le \xi$, we get
\begin{align}
	|\ccalL(\ccalH^{(\Gamma)}_T\!,\! \bbgamma_T)\!-\! D(\bbgamma^*)| \!\le\!\!  \frac{\eta_\gamma \Big(\! \!\big(\!C_f \!+\! \frac{C_y}{\sqrt{n}}\big)^2 \!\!\!+\! \big(\!C_s \!+\! \frac{C_y^2}{n}\!\big)^2 \Big)}{2} \!+\! 2\xi \!+\! \delta \nonumber
\end{align}
which completes the proof.

\section{Lemmas and Proofs}\label{lemmasProof}

\begin{lemma} \label{lemma:errorMatrixExpectation}
	Consider the nominal graph $\ccalG$ with the shift operator $\bbS$ and the GRES($p,q$) model [Def. \ref{def_res}]. Let $\ccalG_d$ be the subgraph representing $M_d$ existing edges that may be dropped with the degree matrix $\bbD_d$ and the shift operator $\bbS_d$, and $\ccalG_a$ be the subgraph representing $M_a$ new edges that may be added with the degree matrix $\bbD_a$ and the shift operator $\bbS_a$. Let also $\bbS_k$ be the shift operator of the $k$th GRES($p,q$) graph realization and $\barbS=\mathbb{E}[\bbS_k]$ the expected shift operator. Then, it holds that
	\begin{equation}\label{lemma2:main}
		\mathbb{E}\left[ \bbS_k^2 \right] \!\!=\! 
		\begin{cases}
			\barbS^2 \!+\! p (1\!-\!p)\bbD_d \!+\! q(1\!-\!q)\bbD_a,\! & \!\text{if } \bbS \!=\! \bbA, \\
			\barbS^2 \!+\! 2 p (1\!-\!p)\bbS_d \!+\! 2q(1\!-\!q)\bbS_a,\! & \!\text{if } \bbS \!=\! \bbL
		\end{cases}
	\end{equation}
	where $\bbA$ is the adjacency matrix and $\bbL$ is the Laplacian matrix of the nominal graph $\ccalG$. 
\end{lemma}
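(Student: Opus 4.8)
The plan is to reduce the computation to the second moment of the centered deviation. Writing $\bbS_k = \barbS + \bbE_k$ with $\bbE_k := \bbS_k - \barbS$, so that $\mathbb{E}[\bbE_k] = \bbzero$ by definition of $\barbS = \mathbb{E}[\bbS_k]$, I would expand the square and use linearity of expectation to get
\[
	\mathbb{E}[\bbS_k^2] = \barbS^2 + \barbS\,\mathbb{E}[\bbE_k] + \mathbb{E}[\bbE_k]\,\barbS + \mathbb{E}[\bbE_k^2] = \barbS^2 + \mathbb{E}[\bbE_k^2],
\]
since the two cross terms vanish. Thus the whole lemma reduces to evaluating $\mathbb{E}[\bbE_k^2]$, and the two cases will be distinguished purely by the edge-wise structure of $\bbA$ versus $\bbL$.

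Next I would write $\bbE_k$ as a sum over the random edges of the GRES($p,q$) model [Def. \ref{def_res}]. For each $(i,j) \in \ccalE_d$ introduce a Bernoulli indicator equal to $1$ with probability $1-p$ (the edge survives) and for each $(i,j) \in \ccalE_a$ one equal to $1$ with probability $q$ (the edge appears); their centered versions have second moments $p(1-p)$ and $q(1-q)$, respectively. Each such edge contributes a fixed elementary matrix to $\bbS_k$: the symmetric rank-two matrix $\bbe_i\bbe_j^\top + \bbe_j\bbe_i^\top$ in the adjacency case, and $(\bbe_i - \bbe_j)(\bbe_i - \bbe_j)^\top$ in the Laplacian case. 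Centering these contributions expresses $\bbE_k$ as a sum of independent, zero-mean Bernoulli scalars times these fixed matrices. Squaring and taking expectation, the mutual independence across distinct edges together with the zero-mean property annihilates every cross term (including those coupling $\ccalE_d$ with $\ccalE_a$), leaving only the diagonal terms, each weighted by $p(1-p)$ or $q(1-q)$ and multiplied by the \emph{square} of the corresponding elementary matrix.

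The final step evaluates these elementary squares and identifies the sums with the subgraph shift operators. In the adjacency case $(\bbe_i\bbe_j^\top + \bbe_j\bbe_i^\top)^2 = \bbe_i\bbe_i^\top + \bbe_j\bbe_j^\top$, so summing over $\ccalE_d$ (resp. $\ccalE_a$) yields exactly the degree matrix $\bbD_d$ (resp. $\bbD_a$) of the subgraph $\ccalG_d$ (resp. $\ccalG_a$), giving $\mathbb{E}[\bbE_k^2] = p(1-p)\bbD_d + q(1-q)\bbD_a$. In the Laplacian case $\|\bbe_i - \bbe_j\|^2 = 2$ produces a factor of two, so $\big[(\bbe_i - \bbe_j)(\bbe_i - \bbe_j)^\top\big]^2 = 2(\bbe_i - \bbe_j)(\bbe_i - \bbe_j)^\top$, and summation yields $2\bbL_d = 2\bbS_d$ and $2\bbL_a = 2\bbS_a$, giving the second branch of \eqref{lemma2:main}. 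Combining with $\mathbb{E}[\bbS_k^2] = \barbS^2 + \mathbb{E}[\bbE_k^2]$ completes both cases. The main obstacle is not any single computation but the careful bookkeeping: verifying that independence and zero mean truly eliminate all cross terms in the double sum, and correctly matching each squared elementary matrix to the degree or Laplacian matrix of the appropriate edge subset, with the extra factor of two appearing only in the Laplacian case.
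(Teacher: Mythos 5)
Your proof is correct, but it takes a more self-contained route than the paper. The paper splits each realization as $\bbS_k = \bbS_{d,k} + \bbS_{a,k}$ (the droppable-edge part and the addable-edge part), imports the single-mechanism second moments $\mathbb{E}[\bbS_{d,k}^2] = \barbS_d^2 + p(1-p)\bbD_d$ and $\mathbb{E}[\bbS_{a,k}^2] = \barbS_a^2 + q(1-q)\bbD_a$ (and their Laplacian analogues) from Lemma 2 of the prior SGNN paper \cite{gao2021stochastic}, and then only has to observe that independence of the two parts gives $\mathbb{E}[\bbS_{d,k}\bbS_{a,k}] = \barbS_d\barbS_a$, so the cross terms reassemble into $(\barbS_d+\barbS_a)^2 = \barbS^2$. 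You instead center first, reducing everything to $\mathbb{E}[\bbS_k^2] = \barbS^2 + \mathbb{E}[\bbE_k^2]$ with $\bbE_k = \bbS_k - \barbS$, and then compute $\mathbb{E}[\bbE_k^2]$ from scratch as a sum over independent centered Bernoulli edge indicators times elementary matrices; your evaluations $(\bbe_i\bbe_j^\top+\bbe_j\bbe_i^\top)^2 = \bbe_i\bbe_i^\top+\bbe_j\bbe_j^\top$ and $\big[(\bbe_i-\bbe_j)(\bbe_i-\bbe_j)^\top\big]^2 = 2(\bbe_i-\bbe_j)(\bbe_i-\bbe_j)^\top$, and the variances $p(1-p)$, $q(1-q)$, are all right, and the zero-mean independence argument correctly kills every cross term including the drop--add couplings. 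What the paper's route buys is brevity, since the per-edge bookkeeping is outsourced to the cited lemma; what yours buys is a fully elementary, verifiable derivation that also sidesteps the slight ambiguity in the paper's decomposition about where the deterministic (non-droppable, non-addable) edges live, since in your formulation they contribute only to $\barbS$ and never enter $\bbE_k$ at all.
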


\begin{proof}
	The GRES($p,q$) model drops edges in $\ccalG_d$ with probability $p$ and adds edges in $\ccalG_a$ with probability $q$ independently. Let $\bbS_k = \bbS_{d,k} + \bbS_{a,k}$ be a GRES($p,q$) realization of $\bbS$ where $\bbS_{d,k}$ and $\bbS_{a,k}$ are realizations of $\bbS_d$ and $\bbS_a$, and $\barbS = \barbS_d + \barbS_a$ be the expected shift operator of $\bbS$ where $\barbS_d$ and $\barbS_a$ are the expected shift operators of $\bbS_d$ and $\bbS_a$, respectively. By substituting the latter into $\bbS_k^2$, we have
	\begin{align}\label{proof:lemma2eq1}
		\mathbb{E}\!\left[ \bbS_k^2 \right]\!\!=\! \mathbb{E}\!\left[(\bbS_{d,k} \!+\! \bbS_{a,k}\!)^2 \right] \!\!=\! \mathbb{E}\!\left[\bbS_{d,k}^2\!\! +\! 2 \bbS_{d,k} \bbS_{a,k} \!\!+\! \bbS_{a,k}^2 \right]\!\!.
	\end{align}
	
	\noindent \textbf{Adjacency.} For the adjacency matrix $\bbS = \bbA$, we have from Lemma 2 in \cite{gao2021stochastic} that
	\begin{subequations} \label{proof:lemma2eq15}
		\begin{align}\label{proof:lemma2eq2}
			\mathbb{E}\left[ \bbS_{d,k}^2 \right] \!=\! \barbS_d^2 + p (1\!-\!p)\bbD_d, \\
			\label{proof:lemma2eq3} \mathbb{E}\left[ \bbS_{a,k}^2 \right] \!=\! \barbS_a^2 + q (1\!-\!q)\bbD_a. 
		\end{align}
	\end{subequations} 
	Since $\bbS_{d,k}$ and $\bbS_{a,k}$ are mutually independent, we have
	\begin{align}\label{proof:lemma2eq4}
		\mathbb{E}\left[ \bbS_{d,k}\bbS_{a,k} \right] \!=\! \barbS_d \barbS_a.
	\end{align}
	By substituting \eqref{proof:lemma2eq15} and \eqref{proof:lemma2eq4} into \eqref{proof:lemma2eq1}, we get
	\begin{align}\label{proof:lemma2eq5}
		\mathbb{E}\!\left[ \bbS_k^2 \right]\!\!&=\!\barbS_d^2 \!+\!\barbS_a^2 \!+\! p (1\!-\!p)\bbD_d \!+\!q (1\!-\!q)\bbD_a + 2 \barbS_d \barbS_a\\
		& =\! ( \barbS_d +  \barbS_a)^2\!+\! p (1\!-\!p)\bbD_d \!+\!q (1\!-\!q)\bbD_a. \nonumber 
	\end{align}
	By leveraging the fact $\barbS^2 = (\barbS_d + \barbS_a)^2$ in \eqref{proof:lemma2eq5}, we have
	\begin{align}\label{proof:lemma2eq6}
		\mathbb{E}\!\left[ \bbS_k^2 \right]\!=\!\barbS^2 + p (1\!-\!p)\bbD_d + q(1\!-\!q)\bbD_a.
	\end{align}
	
	\noindent \textbf{Laplacian.} For the Laplacian matrix $\bbS = \bbL$, we have again from Lemma 2 in \cite{gao2021stochastic} that 
	\begin{subequations}
		\begin{align}\label{proof:lemma2eq7}
			\mathbb{E}\left[ \bbS_{d,k}^2 \right] \!=\! \barbS_d^2 + 2p (1\!-\!p)\bbS_d, \\
			\label{proof:lemma2eq8} \mathbb{E}\left[ \bbS_{a,k}^2 \right] \!=\! \barbS_a^2 + 2q (1\!-\!q)\bbS_a. 
		\end{align}
	\end{subequations}
	Following the same process as \eqref{proof:lemma2eq4}-\eqref{proof:lemma2eq6}, we get
	\begin{align}\label{proof:lemma2eq9}
		\mathbb{E}\!\left[ \bbS_k^2 \right]\!=\!\barbS^2 + 2p (1\!-\!p)\bbS_d + 2q(1\!-\!q)\bbS_a
	\end{align}
	which completes the proof.
\end{proof}

\smallskip
%%%%%%%%%%%%%%%%%%%%%%%%%%%%%%%%%%%%%%%%%%%%%%%%%%%%%%%%%%%%%%%%%%%%%%%%%%%%%%%%%%%%%%%%%%%%%%%%%%%%%%%%%%%%%%%%

\begin{lemma}\label{lemma:outputBound}
	Consider the SGNN $\bbPhi(\bbx;\bbS_{P:1}, \ccalH)$ of $L$ layers, each comprising $F$ filters with the frequency response \eqref{eq:generalizeFrequencyResponse} satisfying Assumption \ref{as:LipschitzFilter} and the nonlinearity $\sigma(\cdot)$ satisfying Assumption \ref{as:LipschitzNonlinearity1} with $C_\sigma$. Then, for any input signal $\bbx$ with a finite energy $\|\bbx\| < \infty$, there exists a constant $C_y$ such that
	\begin{align}
		\|\bbPhi(\bbx;\!\bbS_{P:1},\! \ccalH)\| \le C_y.
	\end{align}
\end{lemma}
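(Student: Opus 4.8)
The plan is to bound the SGNN output by propagating an energy estimate through its $L$ layers, after first controlling a single stochastic graph filter. The central estimate I would establish is that every filter in the architecture is non-expansive up to a coefficient-independent constant: for any signal $\bbz$,
\[
\| \bbH_\ell^{fg}(\bbS_{K:0})\bbz \| \le C_H \|\bbz\|,
\]
where $C_H$ depends only on the filter order $K$ and on the uniform bound $|h(\bblambda)| \le 1$ from Assumption \ref{as:LipschitzFilter}, and crucially \emph{not} on the particular coefficients $\{h_k\}$. This is precisely what makes the final constant $C_y$ independent of $\ccalH$, as the statement requires.

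To obtain this filter estimate I would work in the graph spectral domain via the chain representation \eqref{eq:GFTChain}. Writing the output in the eigenbasis of the last shift $\bbS_K$, the filter multiplies the chained Fourier coefficients of $\bbz$ by the frequency response $h(\bblambda)$ evaluated at the realized eigenvalues $\{\lambda_{ji_j}\}$, all of which lie in the domain $\Lambda^K$ on which $|h(\bblambda)| \le 1$; hence the filter cannot amplify energy beyond a factor governed by this bound, uniformly over the coefficients. The analogous estimate (the $p$-only special case) is established in \cite{gao2021stochastic} and carries over to the GRES$(p,q)$ realizations, since each realized $\bbS_k$ still has bounded spectral norm. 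I expect this to be the main obstacle, precisely because the random shift operators $\bbS_1,\ldots,\bbS_K$ neither commute nor share an eigenbasis, so the filter does not diagonalize and the clean deterministic identity $\|\bbH(\bbS)\| = \max_i |h(\lambda_i)|$ is unavailable; the chained-GFT bookkeeping is what replaces it.

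Given the filter bound, the layer recursion is routine. Let $B_\ell$ denote a uniform bound on the feature energies $\|\bbx_\ell^f\|$ at layer $\ell$, with $B_0 = \|\bbx\| < \infty$ for the single input feature. Applying the layer update \eqref{eq:sgnn}, the triangle inequality across the $F$ aggregated filters, the filter bound $C_H$, and the Lipschitz property of $\sigma$ together with $\sigma(0)=0$ (which give $\|\sigma(\bbv)\| \le C_\sigma\|\bbv\|$, since $|\sigma(v_i)| = |\sigma(v_i)-\sigma(0)| \le C_\sigma|v_i|$ entrywise), I obtain the recursion $B_\ell \le C_\sigma F C_H\, B_{\ell-1}$. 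Unrolling from $\ell = 0$ to $L$ then yields
\[
\|\bbPhi(\bbx;\bbS_{P:1},\ccalH)\| = \|\bbx_L^1\| \le (C_\sigma F C_H)^L \|\bbx\|,
\]
so setting $C_y := (C_\sigma F C_H)^L \|\bbx\|$ (with the feature factor adjusted at the first and last layers, which each carry a single feature) completes the proof. The resulting $C_y$ is finite and independent of $\ccalH$, as claimed.
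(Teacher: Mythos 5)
Your proposal is correct and follows essentially the same route as the paper: a chained-GFT spectral bound showing each stochastic graph filter is non-expansive (the paper gets $C_H = 1$ directly from $|h(\bblambda)|\le 1$ and eigenvector orthonormality), followed by the layer recursion via the triangle inequality and the Lipschitz property of $\sigma$ with $\sigma(0)=0$, yielding $C_y = C_\sigma^L F^{L-1}\|\bbx\|$ once the single input/output features are accounted for.
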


\begin{proof}
	We start by considering the stochastic graph filter $\bbH(\bbS_{K:0})$. By conducting a chain of GFTs on the input signal $\bbx$ [cf. \eqref{eq:GFTChain}], we have
	\begin{align}\label{proof:lemma1eq1}
		&\bbH(\bbS_{K:0}) \bbx \\
		&=\!\! \sum_{i_0=1}^N\sum_{i_1=1}^N \!\!\cdots\!\!\! \sum_{i_K=1}^N \!\!\!\hat{x}_{0i_0}\hat{x}_{1i_{0}i_1}\cdots \hat{x}_{Ki_{K\!-\!1}i_K} \!\!\sum_{k=0}^K \!h_k\! \prod_{j=0}^k\!\lambda_{ji_j} \bbv_{Ki_K}.\nonumber
	\end{align} 
	Since from Assumption \ref{as:LipschitzFilter} it holds that $|h(\bblambda)| \le 1$, we have $|\sum_{k=0}^K h_k \prod_{j=0}^k\lambda_{ji_j}| \le 1$. By substituting this result and the orthogonality of eigenvectors $\{\bbv_{Ki_K}\}_{i_K=1}^n$ into \eqref{proof:lemma1eq1}, we have.
	\begin{align}\label{proof:lemma1eq2}
		\|\bbH(\bbS_{K:0}) \bbx \| \le \|\bbx\|.
	\end{align} 
	
	We then consider the SGNN output, whose norm can be bounded by
	\begin{align}\label{proof:lemma1eq3}
		&\|\bbPhi(\bbx;\!\bbS_{P:1},\! \ccalH)\| \!:=\! \|\bbx_{L}^{1}\| \!=\! \Big\|\sigma\Big(\sum_{g=1}^{F} \bbH_{L}^{1g}(\bbS_{K:0})\bbx_{L-1}^g \Big)\Big\|\nonumber\\
		& \le C_\sigma \sum_{g=1}^F \big\| \bbH_{\ell}^{1g}(\bbS_{K:0})\bbx_{L-1}^g \big\| \le C_\sigma \sum_{g=1}^F \| \bbx_{L-1}^g \|
	\end{align} 
	where in the first inequality we used the Lipschitz condition of the nonlinearity $\sigma(\cdot)$ from Assumption \ref{as:LipschitzNonlinearity1} and the triangle inequality, and in the second inequality we used \eqref{proof:lemma1eq2}. Unrolling \eqref{proof:lemma1eq3} recursively until the input layer and proceeding in the same way, we have
	\begin{align}\label{proof:lemma1eq4}
		&\|\bbPhi(\bbx;\!\bbS_{P:1},\! \ccalH)\| \le C_\sigma^L F^{L-1} \| \bbx \|
	\end{align} 
	where $C_y = C_\sigma^L F^{L-1} \| \bbx \|$ is a finite constant. The latter completes the proof.
\end{proof}

%%%%%%%%%%%%%%%%%%%%%%%%%%%%%%%%%%%%%%%%%%%%%%%%%%%%%%%%%%%%%%%%%%%%%%%%%%%%%%%%%%%%%%%%%%%%%%%%%%%%%%%%%%%%%%%%

\smallskip

\begin{lemma} \label{lemma:stabilitySGNN}
	Consider the SGNN $\bbPhi(\bbx;\bbS_{P:1},\ccalH)$ of $L$ layers comprising $F$ stochastic graph filters of order $K$ [cf. \eqref{eq:sgnn}] and the frequency response \eqref{eq:generalizeFrequencyResponse} satisfying Assumption \ref{as:LipschitzFilter} with $C_L$. Let the nonlinearity $\sigma(\cdot)$ satisfy Assumption \ref{as:LipschitzNonlinearity1} with $C_\sigma$ and $\bbS_{P:1} = \{\bbS_1,\ldots,\bbS_P\}$, $\widetilde{\bbS}_{P:1} = \{\widetilde{\bbS}_1,\ldots,\widetilde{\bbS}_P\}$ be two sequences of random shift operators satisfying $\|\widetilde{\bbS}_k - \bbS_k \| \le \varepsilon$ for all $k=1,\ldots,P$. Then, for any input signal $\bbx$ with a finite energy $\|\bbx\| < \infty$, it holds that
	\begin{align}\label{lemma3:main}
		\| \bbPhi(\bbx;\widetilde{\bbS}_{P:1},\ccalH) - \bbPhi(\bbx;\bbS_{P:1},\ccalH) \| \le C_B \varepsilon + \ccalO(\varepsilon^2)
	\end{align}
	where $C_B = K C_L L C_\sigma^L F^{L-1} \|\bbx\|$ is a constant.
\end{lemma}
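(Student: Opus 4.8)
The plan is to reduce the stability of the whole SGNN to the stability of a single stochastic graph filter and then to propagate this through the layered architecture by a recursion that mirrors the one used in Lemma \ref{lemma:outputBound}. Throughout, I treat the two sequences $\bbS_{P:1}$ and $\widetilde{\bbS}_{P:1}$ as fixed, so that \eqref{lemma3:main} is a deterministic (pathwise) perturbation bound rather than a statement in expectation.

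First I would establish a per-filter stability bound of the form
\begin{align}
\| \bbH(\widetilde{\bbS}_{K:0})\bbx - \bbH(\bbS_{K:0})\bbx \| \le K C_L \varepsilon \|\bbx\| + \ccalO(\varepsilon^2).
\end{align}
Writing the difference as $\sum_{k=0}^K h_k \big(\prod_{i=0}^k \widetilde{\bbS}_i - \prod_{i=0}^k \bbS_i\big)\bbx$ and telescoping each product tap by tap (recall $\widetilde{\bbS}_0 = \bbS_0 = \bbI$, so the $r=0$ term drops), the first-order part collects, for each tap $r\in\{1,\ldots,K\}$, a perturbation of a single shift operator. Passing to the graph spectral domain via the chain of GFTs in \eqref{eq:GFTChain}, such a tap perturbation moves the argument of the frequency response $h(\bblambda)$ in only one coordinate of $\bblambda$; since symmetric shift operators obey $|\widetilde{\lambda}_{ri}-\lambda_{ri}|\le\|\widetilde{\bbS}_r-\bbS_r\|\le\varepsilon$ (Weyl), the Lipschitz property of $h(\bblambda)$ [As. \ref{as:LipschitzFilter}] bounds that contribution by $C_L\varepsilon$, and summing over the $K$ taps produces the factor $K C_L$. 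Products of two or more perturbations are quadratic in $\varepsilon$ and are absorbed into $\ccalO(\varepsilon^2)$. This step is the main obstacle: a purely vertex-domain telescoping would leave operator norms $\|\bbS_i\|$ in the bound, so one genuinely has to argue in the multivariate spectral domain where each shift operator instantiates one coordinate of $\bblambda$ and eigenvector perturbations must be controlled; here I would follow the stochastic filter stability analysis of \cite{gao2021stochastic}.

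Next I would set up a layer-wise recursion. Let $\bbx_\ell^f$ and $\widetilde{\bbx}_\ell^f$ denote the layer-$\ell$ feature outputs generated from $\bbS_{P:1}$ and $\widetilde{\bbS}_{P:1}$, with $\bbx_0=\widetilde{\bbx}_0=\bbx$, and write $D_\ell$ for a representative deviation $\|\widetilde{\bbx}_\ell^f-\bbx_\ell^f\|$. Using the Lipschitz nonlinearity [As. \ref{as:LipschitzNonlinearity1}] and the triangle inequality over the $F$ input features, I would split each filtered-feature difference into a filter-stability term (same input $\widetilde{\bbx}_{\ell-1}^g$, perturbed shift), controlled by the per-filter bound above, and a propagation term (nominal filter acting on $\widetilde{\bbx}_{\ell-1}^g-\bbx_{\ell-1}^g$), controlled by the non-expansiveness $\|\bbH(\bbS_{K:0})\bbz\|\le\|\bbz\|$ shown in \eqref{proof:lemma1eq2}. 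Bounding the feature norms $\|\widetilde{\bbx}_{\ell-1}^g\|$ by Lemma \ref{lemma:outputBound}, this yields a recursion of the form $D_\ell \le C_\sigma F\big(K C_L\varepsilon\, C_\sigma^{\ell-1}F^{\ell-1}\|\bbx\| + D_{\ell-1}\big)+\ccalO(\varepsilon^2)$.

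Finally I would unroll this recursion from $D_0=0$: the geometric propagation factor $C_\sigma F$ combines with the per-layer input-norm growth so that every layer contributes an equal first-order term, and summing the $L$ contributions produces the factor $L C_\sigma^{L}F^{L-1}$ (the single-input/single-output convention removes one factor of $F$, matching the bookkeeping of Lemma \ref{lemma:outputBound}). Collecting the first-order terms gives $C_B = K C_L L C_\sigma^{L} F^{L-1}\|\bbx\|$ and leaves the residual $\ccalO(\varepsilon^2)$, which completes the proof.
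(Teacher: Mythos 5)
Your overall architecture --- a first-order telescoping of the filter difference, a spectral bound giving $K C_L \varepsilon\|\bbx\|+\ccalO(\varepsilon^2)$ per filter, and a layer-wise recursion producing the factor $L C_\sigma^{L}F^{L-1}$ --- matches the paper's proof, which performs the same telescoping expansion in \eqref{proof:lemma3eq1} and then imports the layer propagation from steps Eq.~(79)--(89) of \cite{Gama20-Stability}. Your recursion and the final constant $C_B$ are correct.

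The gap is in the spectral mechanism you give for the per-filter bound, which is not the paper's and, as stated, is internally inconsistent. Once you have telescoped so that each first-order term has the form $\sum_{r=k}^K h_r \prod_{\tau=k+1}^{r}\bbS_\tau\,\bbE_k\,\prod_{j=1}^{k-1}\bbS_j$, there is no quantity of the form $h(\widetilde{\bblambda})-h(\bblambda)$ left to which Weyl's inequality and the Lipschitz bound $|h(\bblambda_1)-h(\bblambda_2)|\le C_L\|\bblambda_1-\bblambda_2\|$ could be applied; the eigenvalues of $\widetilde{\bbS}_r$ never enter the telescoped expression. What the paper actually does is eigendecompose the \emph{error matrix} $\bbE_k$ itself (with eigenvalues $m_{ki_k}$), run the chain of GFTs in \eqref{eq:GFTChain} over the sequence $\{\bbS_1,\ldots,\bbS_{k-1},\bbE_k,\bbS_{k+1},\ldots,\bbS_r\}$, and recognize the resulting scalar coefficient as $\frac{\partial h (\bblambda_{\bbE_k})}{\partial \lambda_k}\cdot m_{ki_k}$, where the partial derivative is bounded by $C_L$ as a consequence of Assumption \ref{as:LipschitzFilter} and $|m_{ki_k}|\le\|\bbE_k\|\le\varepsilon$. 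This route never compares the eigenbases of $\bbS_r$ and $\widetilde{\bbS}_r$, so the eigenvector-misalignment problem you flag does not arise at all --- it is dissolved by the decomposition rather than ``controlled'' by deferring to \cite{gao2021stochastic}. If you insist on the Weyl-based comparison of $h$ at perturbed eigenvalues, you must forgo the telescoping and then genuinely bound the eigenvector perturbation, which introduces extra misalignment terms that are not present in the stated constant $C_B$; so either switch to the error-matrix/partial-derivative argument or accept a weaker bound.
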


\begin{proof}
	We start by considering the stochastic graph filter $\bbH(\bbS_{K:0})$. From $\|\widetilde{\bbS}_k - \bbS_k \| \le \varepsilon$ for all $k=1,\ldots,K$, write $\widetilde{\bbS}_k = \bbS_k + \bbE_k$ where $\bbE_k$ is the $k$th error matrix that characterizes the deviation of $\widetilde{\bbS}_k$ from $\bbS_k$. By substituting this representation into the filter and expanding the terms, we have
	\begin{align}\label{proof:lemma3eq1}
		\bbH(\widetilde{\bbS}_{K:0}) \!-\! \bbH(\bbS_{K:0}) \!=\! \sum_{k=1}^K \sum_{r = k}^K \!h_r\!\!\!\! \prod_{\tau=k\!+\!1}^r \!\!\bbS_\tau \bbE_k\! \prod_{j=1}^{k\!-\!1}\!\bbS_{j} \!+\! \bbC
	\end{align}
	where we define $\prod_{a}^b (\cdot) = 0$ if $a > b$ and $\bbC$ collects the rest expanding terms that contains at least two error matrices $\bbE_{k_1}$ and $\bbE_{k_2}$ with $k_1 \ne k_2$. Since $\|\bbE_k\| \le \varepsilon$ for all $k=1,\ldots,K$ and the filter coefficients $\{h_k\}_{k=1}^K$ are finite, we have $\| \bbC \| = \ccalO(\varepsilon^2)$. Let us then consider the $K$ terms $\{\sum_{r = k}^K h_r \prod_{\tau=k+1}^r \!\bbS_\tau \bbE_k \prod_{j=1}^{k-1}\!\bbS_{j}\}_{k=1}^K$ separately.
	
	For each term, let $\bbx$ be the input signal and $\bbE_k = \bbU_k \bbM_k \bbU_k$ be the eigendecomposition with eigenvectors $\bbU_k = [\bbu_{k1},\ldots,\bbu_{kn}]$ and eigenvalues $\bbM_k = \diag(m_{k1},\ldots,m_{kn})$. By conducting a chain of GFTs on the input signal $\bbx$ over $\{\bbS_1,\ldots,\bbS_{k-1},\bbE_k,\bbS_{k+1},\ldots,\bbS_r\}$[cf. \eqref{eq:GFTChain}], we have
	\begin{align}\label{proof:lemma3eq2}
		&\sum_{r = k}^K h_r\! \prod_{\tau=k\!+\!1}^r \!\bbS_\tau \bbE_k\! \prod_{j=1}^{k\!-\!1}\!\bbS_{j} \bbx \\
		&=\!\!\! \sum_{i_1=1}^N \!\!\ldots\!\!\! \sum_{i_K=1}^N \!\!\!\hat{x}_{1i_1}\ldots \hat{x}_{Ki_{K\!-\!1}i_K}\!\! \sum_{r=k}^K\! h_r\!\!\nonumber \prod_{\tau\!=\!k\!+\!1}^r\!\!\!\lambda_{\tau i_\tau} m_{ki_k} \prod_{j=1}^{k\!-\!1}\!\lambda_{j i_j} \bbv_{Ki_K}\!.\nonumber
	\end{align} 
	Consider the term $\sum_{r=k}^K h_r \prod_{\tau=k+1}^r\lambda_{\tau i_\tau} m_{ki_k} \prod_{j=1}^{k\!-\!1}\!\lambda_{j i_j}$ in \eqref{proof:lemma3eq2}. Let $\bblambda_{\bbE_k} \!\!=\! [\lambda_{1i_1}\!,...,\!\lambda_{(k\!-\!1)i_{k\!-\!1}}\!,m_{ki_k}\!,\lambda_{(k\!+\!1)i_{k\!+\!1}}\!,...,\!\lambda_{Ki_K}]^\top$ be an instantiation of the $K$-dimensional vector variable $\bblambda=[\lambda_{1},\ldots,\lambda_{K}]^\top$ in the frequency response function $h(\bblambda)$ [cf. \eqref{eq:generalizeFrequencyResponse}]\footnote{The multivariate frequency response function $h(\bblambda)$ is an analytic function of the vector variable $\bblambda=[\lambda_{1},\ldots,\lambda_{K}]^\top$ such that $\bblambda$ can take any value.}. The partial derivative of $h(\bblambda)$ over the $k$th variable entry $\lambda_k$ evaluated at the instantiation $\bblambda_{\bbE_k}$ is given by
	\begin{align}\label{proof:lemma3eq4}
		&\frac{\partial h (\bblambda_{\bbE_k})}{\partial \lambda_k} \!= \! \sum_{r=k}^K h_r\!\! \prod_{\tau=k+1}^r\!\!\lambda_{\tau i_\tau} \prod_{j=1}^{k\!-\!1}\!\lambda_{j i_j}.
	\end{align}
	From the Lipschitz property of the frequency response in Assumption \ref{as:LipschitzFilter}, we have
	\begin{align}\label{proof:lemma3eq45}
		\Big| \frac{\partial h (\bblambda_{\bbE_k})}{\partial \lambda_k} \Big| \!= \! \Big|\sum_{r=k}^K h_r\!\!\!\! \prod_{\tau=k\!+\!1}^r\!\!\!\lambda_{\tau i_\tau}\! \prod_{j=1}^{k\!-\!1}\!\lambda_{j i_j}\!\Big| \!\le\! C_L
	\end{align}	
	and from the fact $\| \bbE_k \| \le \varepsilon$, we have
	\begin{align}\label{proof:lemma3eq5}
		|m_{k i_k}| \!\le\! \varepsilon,~\forall~k=1,\ldots,K~\text{and}~i_k=1,\ldots,n.
	\end{align}
	By leveraging \eqref{proof:lemma3eq45} and \eqref{proof:lemma3eq5}, we get
	\begin{align}\label{proof:lemma3eq6}
		&\Big|\sum_{r=k}^K h_r\! \prod_{\tau=k\!+\!1}^r\!\lambda_{\tau i_\tau} m_{ki_k} \prod_{j=1}^{k\!-\!1}\!\lambda_{j i_j}\Big| \\
		& \le \Big|\sum_{r=k}^K h_r\! \prod_{\tau=k\!+\!1}^r\!\lambda_{\tau i_\tau} \prod_{j=1}^{k\!-\!1}\!\lambda_{j i_j}\Big| \Big|m_{k i_k}\Big| \le C_L \varepsilon. \nonumber
	\end{align}
	
	By using the orthogonality of eigenvectors $\{\bbv_{Ki_K}\}_{i_K=1}^n$ and \eqref{proof:lemma3eq6} in \eqref{proof:lemma3eq2}, we get
	\begin{align}\label{proof:lemma3eq7}
		&\Big\|\sum_{r = k}^K h_r\! \prod_{\tau=k\!+\!1}^r \!\bbS_\tau \bbE_k\! \prod_{j=1}^{k\!-\!1}\!\bbS_{j} \bbx\Big\|^2 \\
		& = \sum_{i_1=1}^N \!\!\ldots\!\!\! \sum_{i_K\!=\!1}^N \!\Big|\hat{x}_{1i_1}\ldots \hat{x}_{Ki_{K\!-\!1}i_K} \sum_{r=k}^K h_r\!\!\! \prod_{\tau=k\!+\!1}^r\!\!\!\lambda_{\tau i_\tau} m_{ki_k} \prod_{j=1}^{k\!-\!1}\!\lambda_{j i_j}\Big|^2 \nonumber\\
		&\le \sum_{i_1=1}^N \!\!\ldots\!\!\! \sum_{i_K=1}^N \!\!\Big| \hat{x}_{1i_1}\ldots \hat{x}_{Ki_{K\!-\!1}i_K} \Big|^2 (C_L \varepsilon)^2 = C_L^2 \|\bbx\|^2 \varepsilon^2.\nonumber
	\end{align} 
	By further using the triangle inequality and \eqref{proof:lemma3eq7} in \eqref{proof:lemma3eq1}, we get
	\begin{align}\label{proof:lemma3eq8}
		&\|\bbH(\widetilde{\bbS}_{K:0}) \bbx\!-\! \bbH(\bbS_{K:0})\bbx\| \\
		&\le\! \sum_{k=1}^K\! \big\| \sum_{r = k}^K \!h_r\!\!\!\! \prod_{\tau=k\!+\!1}^r \!\!\bbS_\tau \bbE_k\! \prod_{j=1}^{k\!-\!1}\!\bbS_{j}\big\|\!+\! \|\bbC\| \!\le\! K C_L \| \bbx \| \varepsilon \!+\! \ccalO(\varepsilon^2).\nonumber
	\end{align}
	By combining \eqref{proof:lemma3eq8} with steps Eq. (79)-(89) in the proof of Theorem 4 in \cite{Gama20-Stability}, we complete the proof
	\begin{align}\label{proof:lemma3eq9}
		&\| \bbPhi(\bbx;\widetilde{\bbS}_{P:1},\ccalH) - \bbPhi(\bbx;\bbS_{P:1},\ccalH) \| \le C_B \varepsilon + \ccalO(\varepsilon^2) 
	\end{align}
	where $C_B = K C_L L C_\sigma^L F^{L-1} \|\bbx\|$ is a finite constant. The latter completes the proof.
\end{proof}

%%%%%%%%%%%%%%%%%%%%%%%%%%%%%%%%%%%%%%%%%%%%%%%%%%%%%%%%%%%%%%%%%%%%%%%%%%%%%%%%
%%%%                                                                        %%%%
%%%%                               BIBLIOGRAPHY                             %%%%
%%%%                                                                        %%%%
%%%%%%%%%%%%%%%%%%%%%%%%%%%%%%%%%%%%%%%%%%%%%%%%%%%%%%%%%%%%%%%%%%%%%%%%%%%%%%%%

% References should be produced using the bibtex program from suitable
% BiBTeX files (here: strings, refs, manuals). The IEEEbib.bst bibliography
% style file from IEEE produces unsorted bibliography list.
% -------------------------------------------------------------------------
\bibliographystyle{IEEEbib}
\bibliography{myIEEEabrv,biblioOp}

\begin{thebibliography}{10}

\bibitem{gao2021variance}
Z.~Gao, E.~Isufi, and A.~Ribeiro,
\newblock ``Variance-constrained learning for stochastic graph neural
  networks,''
\newblock in {\em IEEE International Conference on Acoustics, Speech and Signal
  Processing (ICASSP)}, 2021.

\bibitem{ortega2018graph}
A.~Ortega, P.~Frossard, J.~Kova{\v{c}}evi{\'c}, J.~Moura, and P.~Vandergheynst,
\newblock ``Graph signal processing: Overview, challenges, and applications,''
\newblock {\em Proceedings of the IEEE}, vol. 106, no. 5, pp. 808--828, 2018.

\bibitem{Scarselli2009}
F.~Scarselli, M.~Gori, A.~C. Tsoi, M.~Hagenbuchner, and G.~Monfardini,
\newblock ``The graph neural network model,''
\newblock {\em IEEE Transactions on Neural Networks}, vol. 20, no. 1, pp.
  61--80, 2009.

\bibitem{Defferrard2016}
M.~Defferrard, X.~Bresson, and P.~Vandergheynst,
\newblock ``Convolutional neural networks on graphs with fast localized
  spectral filtering,''
\newblock in {\em International Conference on Neural Information Processing
  Systems (NIPS)}, 2016.

\bibitem{gama2020graphs}
F.~Gama, E.~Isufi, G.~Leus, and A.~Ribeiro,
\newblock ``Graphs, convolutions, and neural networks: From graph filters to
  graph neural networks,''
\newblock {\em IEEE Signal Processing Magazine}, vol. 37, no. 6, pp. 128--138,
  2020.

\bibitem{Wu2019}
Z.~Wu, S.~Pan, F.~Chen, G.~Long, C.~Zhang, and P.~S. Yu,
\newblock ``A comprehensive survey on graph neural networks,''
\newblock {\em arXiv preprint arXiv:1901.00596}, 2019.

\bibitem{Ying2018}
R.~Ying, R.~He, K.~Chen, P.~Eksombatchai, W.~L. Hamilton, and J.~Leskovec,
\newblock ``Graph convolutional neural networks for web-scale recommender
  systems,''
\newblock in {\em ACM International Conference on Knowledge Discovery and Data
  Mining (SIGKDD)}, 2018.

\bibitem{FanAGraph}
W.~Fan, Y.~Ma, Q.~Li, J.~Wang, G.~Cai, J.~Tang, and D.~Yin,
\newblock ``A graph neural network framework for social recommendations,''
\newblock {\em IEEE Transactions on Knowledge and Data Engineering}, pp. 1--1,
  2020.

\bibitem{tolstaya2020learning}
E.~Tolstaya, F.~Gama, J.~Paulos, G.~Pappas, V.~Kumar, and A.~Ribeiro,
\newblock ``Learning decentralized controllers for robot swarms with graph
  neural networks,''
\newblock in {\em Conference on Robot Learning}, 2020.

\bibitem{gao2021wide}
Z.~Gao, F.~Gama, and A.~Ribeiro,
\newblock ``Wide and deep graph neural network with distributed online
  learning,''
\newblock {\em IEEE Transactions on Signal Processing}, pp. 1--15, 2022.

\bibitem{eisen2020optimal}
M.~Eisen and A.~Ribeiro,
\newblock ``Optimal wireless resource allocation with random edge graph neural
  networks,''
\newblock {\em IEEE Transactions on Signal Processing}, vol. 68, pp.
  2977--2991, 2020.

\bibitem{gao2020resource}
Z.~Gao, M.~Eisen, and A.~Ribeiro,
\newblock ``Resource allocation via graph neural networks in free space optical
  fronthaul networks,''
\newblock in {\em IEEE Global Communications Conference (GLOBECOM)}, 2020.

\bibitem{Gama20-Stability}
F.~Gama, J.~Bruna, and A.~Ribeiro,
\newblock ``Stability properties of graph neural networks,''
\newblock {\em {IEEE} Transactions on Signal Processing}, vol. 68, pp.
  5680--5695, 25 Sep. 2020.

\bibitem{kenlay2021interpretable}
H.~Kenlay, D.~Thanou, and X.~Dong,
\newblock ``Interpretable stability bounds for spectral graph filters,''
\newblock {\em arXiv preprint arXiv:2102.09587}, 2021.

\bibitem{kenlay2021stability}
H.~Kenlay, D.~Thano, and X.~Dong,
\newblock ``On the stability of graph convolutional neural networks under edge
  rewiring,''
\newblock in {\em IEEE International Conference on Acoustics, Speech and Signal
  Processing (ICASSP)}, 2021.

\bibitem{levie2021transferability}
R.~Levie, W.~Huang, L.~Bucci, M.~Bronstein, and G.~Kutyniok,
\newblock ``Transferability of spectral graph convolutional neural networks,''
\newblock {\em Journal of Machine Learning Research}, vol. 22, no. 272, pp.
  1--59, 2021.

\bibitem{parada2021algebraic}
A.~Parada-Mayorga and A.~Ribeiro,
\newblock ``Algebraic neural networks: Stability to deformations,''
\newblock {\em IEEE Transactions on Signal Processing}, 2021.

\bibitem{Isufi17}
E.~Isufi, A.~Loukas, A.~Simonetto, and G.~Leus,
\newblock ``Filtering random graph processes over random time-varying graphs,''
\newblock {\em IEEE Transactions on Signal Processing}, vol. 65, no. 16, pp.
  4406--4421, 2017.

\bibitem{Zou2013}
A.~Zou, K~Kumar, and Z~Hou,
\newblock ``Distributed consensus control for multi-agent systems using
  terminal sliding mode and chebyshev neural networks,''
\newblock {\em International Journal of Robust and Nonlinear Control}, vol. 23,
  no. 3, pp. 334--357, 2013.

\bibitem{Shuman2018}
D.~I. {Shuman}, P.~{Vandergheynst}, D.~{Kressner}, and P.~{Frossard},
\newblock ``Distributed signal processing via chebyshev polynomial
  approximation,''
\newblock {\em IEEE Transactions on Signal and Information Processing over
  Networks}, vol. 4, no. 4, pp. 736--751, 2018.

\bibitem{kar2008sensor}
S.~Kar and J.~M.~F. Moura,
\newblock ``Sensor networks with random links: Topology design for distributed
  consensus,''
\newblock {\em IEEE Transactions on Signal Processing}, vol. 56, no. 7, pp.
  3315--3326, 2008.

\bibitem{antonelli2014decentralized}
G.~Antonelli, F.~Arrichiello, F.~Caccavale, and A.~Marino,
\newblock ``Decentralized time-varying formation control for multi-robot
  systems,''
\newblock {\em The International Journal of Robotics Research}, vol. 33, no. 7,
  pp. 1029--1043, 2014.

\bibitem{deng2016latent}
D.~Deng, C.~Shahabi, U.~Demiryurek, L.~Zhu, R.~Yu, and Y.~Liu,
\newblock ``Latent space model for road networks to predict time-varying
  traffic,''
\newblock in {\em ACM International Conference on Knowledge Discovery and Data
  Mining (SIGKDD)}, 2016.

\bibitem{monti2017geometric}
F.~Monti, M.~Bronstein, and X.~Bresson,
\newblock ``Geometric matrix completion with recurrent multi-graph neural
  networks,''
\newblock in {\em International Conference on Neural Information Processing
  Systems (NIPS)}, 2017.

\bibitem{berg2017graph}
R.~Berg, T.~Kipf, and M.~Welling,
\newblock ``Graph convolutional matrix completion,''
\newblock in {\em ACM International Conference on Knowledge Discovery and Data
  Mining (SIGKDD)}, 2018.

\bibitem{isufi2021accuracy}
E.~Isufi, M.~Pocchiari, and A.~Hanjalic,
\newblock ``Accuracy-diversity trade-off in recommender systems via graph
  convolutions,''
\newblock {\em Information Processing \& Management}, vol. 58, no. 2, pp.
  102459, 2021.

\bibitem{saad2020quantization}
L.~Saad, B.~Beferull-Lozano, and E.~Isufi,
\newblock ``Quantization analysis and robust design for distributed graph
  filters,''
\newblock {\em IEEE Transactions on Signal Processing}, pp. 1--1, 2021.

\bibitem{nguyen2021stability}
H.~Nguyen, Y.~He, and H.~Wai,
\newblock ``On the stability of low pass graph filter with a large number of
  edge rewires,''
\newblock {\em arXiv preprint arXiv:2110.07234}, 2021.

\bibitem{gao2021stability}
Z.~Gao, E.~Isufi, and A.~Ribeiro,
\newblock ``Stability of graph convolutional neural networks to stochastic
  perturbations,''
\newblock {\em Signal Processing}, p. 108216, 2021.

\bibitem{gao2021stochastic}
Z.~Gao, E.~Isufi, and A.~Ribeiro,
\newblock ``Stochastic graph neural networks,''
\newblock {\em IEEE Transactions on Signal Processing}, vol. 69, pp.
  4428--4443, 2021.

\bibitem{feng2020graph}
W.~Feng, J.~Zhang, Y.~Dong, Y.~Han, H.~Luan, Q.~Xu, Q.~Yang, E.~Kharlamov, and
  J.~Tang,
\newblock ``Graph random neural networks for semi-supervised learning on
  graphs,''
\newblock {\em International Conference on Neural Information Processing
  Systems (NIPS)}, 2020.

\bibitem{rong2019dropedge}
Y.~Rong, W.~Huang, T.~Xu, and J.~Huang,
\newblock ``Dropedge: Towards deep graph convolutional networks on node
  classification,''
\newblock in {\em International Conference on Learning Representations (ICLR)},
  2019.

\bibitem{gao2021training}
Z.~Gao, S.~Bhattacharya, L.~Zhang, R.~S. Blum, A.~Ribeiro, and B.~M. Sadler,
\newblock ``Training robust graph neural networks with topology adaptive edge
  dropping,''
\newblock {\em arXiv preprint arXiv:2106.02892}, 2021.

\bibitem{chamon2020functional}
L.~Chamon, Y.~Eldar, and A.~Ribeiro,
\newblock ``Functional nonlinear sparse models,''
\newblock {\em IEEE Transactions on Signal Processing}, vol. 68, pp.
  2449--2463, 2020.

\bibitem{Shuman2013}
D.~I. Shuman, S.~K. Narang, P.~Frossard, A.~Ortega, and P.~Vandergheynst,
\newblock ``The emerging field of signal processing on graphs: Extending
  high-dimensional data analysis to networks and other irregular domains,''
\newblock {\em IEEE Signal Processing Magazine}, vol. 30, no. 3, pp. 83--98,
  2013.

\bibitem{Segarra2017}
S.~Segarra, A.~G. Marques, and A.~Ribeiro,
\newblock ``Optimal graph-filter design and applications to distributed linear
  network operators,''
\newblock {\em IEEE Transactions on Signal Processing}, vol. 65, no. 15, pp.
  4117--4131, 2017.

\bibitem{ortega2018}
A.~Ortega, P.~Frossard, J.~Kovačević, J.~M.~F. Moura, and P.~Vandergheynst,
\newblock ``Discrete signal processing on graphs: Frequency analysis,''
\newblock {\em Proceedings of the IEEE}, vol. 106, no. 5, pp. 808--828, April
  2018.

\bibitem{Coutino2019}
M.~Coutino, E.~Isufi, and G.~Leus,
\newblock ``Advances in distributed graph filtering,''
\newblock {\em IEEE Transactions on Signal Processing}, vol. 67, no. 9, pp.
  2320--2333, 2019.

\bibitem{zugner2018adversarial}
D.~Z{\"u}gner, A.~Akbarnejad, and S.~G{\"u}nnemann,
\newblock ``Adversarial attacks on neural networks for graph data,''
\newblock in {\em ACM International Conference on Knowledge Discovery and Data
  Mining (SIGKDD)}, 2018.

\bibitem{guo2015outage}
K.~Guo, J.~Chen, and Y.~Huang,
\newblock ``Outage analysis of cooperative communication network with hardware
  impairments,''
\newblock {\em Frequenz}, vol. 69, no. 9-10, pp. 443--449, 2015.

\bibitem{par2004recommender}
S.~Perugini, M.~Goncalves, and E.~Fox,
\newblock ``Recommender systems research: A connection-centric survey,''
\newblock {\em Journal of Intelligent Information Systems}, vol. 23, no. 2, pp.
  107--143, 2004.

\bibitem{joshi2008sensor}
S.~Joshi and S.~Boyd,
\newblock ``Sensor selection via convex optimization,''
\newblock {\em IEEE Transactions on Signal Processing}, vol. 57, no. 2, pp.
  451--462, 2008.

\bibitem{nocedal2006numerical}
J.~Nocedal and S.~Wright,
\newblock {\em Numerical optimization},
\newblock Springer Science \& Business Media, 2006.

\bibitem{harrison2010introduction}
R.~L. Harrison,
\newblock ``Introduction to monte carlo simulation,''
\newblock {\em AIP conference proceedings}, vol. 1204, no. 1, pp. 17--21, 2010.

\bibitem{gao2022balancing}
Z.~Gao, A.~Koppel, and A.~Ribeiro,
\newblock ``Balancing rates and variance via adaptive batch-size for stochastic
  optimization problems,''
\newblock {\em IEEE Transactions on Signal Processing}, vol. 70, pp.
  3693--3708, 2022.

\bibitem{srivastava2008course}
S.~Srivastava,
\newblock {\em A course on Borel sets}, vol. 180,
\newblock Springer Science \& Business Media, 2008.

\bibitem{park1991universal}
J.~Park and I.~W. Sandberg,
\newblock ``Universal approximation using radial-basis-function networks,''
\newblock {\em Neural Computation}, vol. 3, no. 2, pp. 246--257, 1991.

\bibitem{sriperumbudur2010relation}
B.~Sriperumbudur, K.~Fukumizu, and G.~Lanckriet,
\newblock ``On the relation between universality, characteristic kernels and
  rkhs embedding of measures,''
\newblock in {\em International Conference on Artificial Intelligence and
  Statistics (AISTATS)}, 2010.

\bibitem{hornik1991approximation}
K.~Hornik, M.~Stinchcombe, and H.~White,
\newblock ``Multilayer feedforward networks are universal approximators,''
\newblock {\em Neural Networks}, vol. 2, no. 5, pp. 359--366, 1989.

\bibitem{boyd2004convex}
S.~Boyd, S.~Boyd, and L.~Vandenberghe,
\newblock {\em Convex optimization},
\newblock Cambridge university press, 2004.

\bibitem{bottou2012stochastic}
L.~Bottou,
\newblock ``Stochastic gradient descent tricks,''
\newblock in {\em Neural Networks: Tricks of the Trade}, pp. 421--436.
  Springer, 2012.

\bibitem{harper2015movielens}
M.~Harper and J.~Konstan,
\newblock ``The movielens datasets: History and context,''
\newblock {\em ACM Transactions on Interactive Intelligent Systems}, vol. 5,
  no. 4, pp. 1--19, 2015.

\bibitem{Ba2010}
D.~Kingma and J.~Ba,
\newblock ``Adam: A method for stochastic optimization,''
\newblock in {\em International Conference on Learning Representations (ICLR)},
  2010.

\bibitem{aggarwal2016recommender}
C.~Aggarwal et~al.,
\newblock {\em Recommender systems}, vol.~1,
\newblock Springer, 2016.

\end{thebibliography}

\end{document}